
\documentclass[authoryear]{imsart}

\usepackage[ruled]{algorithm2e}
\usepackage{amsfonts}
\usepackage{amsmath}
\usepackage{array}
\usepackage[english]{babel}
\usepackage{bbm}
\usepackage{graphicx}
\usepackage[utf8]{inputenc}
\usepackage{longtable}
\usepackage{natbib}
\usepackage{subcaption}
\usepackage{amsmath,amsfonts,amssymb,hyperref,graphicx,siunitx,fancyhdr,lastpage,enumerate, commath,amsthm} 
\usepackage{url}

\newcommand{\R}{\mathbb R}
\newcommand{\Rd}{\mathbb R^d}
\newcommand{\E}{\mathbb E}
\renewcommand{\P}{\mathbb P}

\newcommand{\A}{\mathcal A}
\newcommand{\N}{\mathcal{N}}

\newcommand{\cX}{\mathcal{X}}
\newcommand{\TV}{\mathrm{TV}}

\newcommand{\nkap}{\kappa^-}    
\newcommand{\nK}{K^-}           
\newcommand{\pkap}{\kappa^+}    
\newcommand{\pK}{K^+}           
\newcommand{\K}{K}              
\newcommand{\muMin}{\mu^+}      
\newcommand{\CMinimum}{C^+}     
\newcommand{\muMinEvent}{\zeta} 
\newcommand{\muMinDomEvent}{\tilde{\zeta}} 
\newcommand{\regenRateTrunc}{\mathcal{K}}      
\newcommand{\minRegenRateTrunc}{\mathcal{K}^+} 
\newcommand{\tourLength}{\tau}                 
\newcommand{\timeToNextPotentialRegen}{\tilde{\tau}}

\newcommand{\outputRate}{\Lambda_0}

\newcommand{\subKernel}{Q}      
\newcommand{\probRegenBeforeMuMin}{q} 
\newcommand{\properKernel}{Q_0} 
\newcommand{\augKern}{\bar{Q}}  
\newcommand{\mcKern}{Q_\mu}     
\newcommand{\mcKernDscrtMeas}{Q_{\mu_n}} 

\newcommand{\nCloud}{n_{\text{cloud}}}
\newcommand{\nForget}{n_{\text{forget}}}
\newcommand\bx{{\bf x}}

\startlocaldefs
\theoremstyle{plain}

\newtheorem{theorem}{Theorem}[section]
\newtheorem{lemma}[theorem]{Lemma}
\newtheorem{proposition}[theorem]{Proposition}
\theoremstyle{remark}

\newtheorem{assumption}{Assumption}
\newtheorem{remark}{Remark}


\endlocaldefs
\newcommand\correction[1]{#1}


\begin{document}

\begin{frontmatter}
\title{Sampling using adaptive regenerative processes}
\runtitle{Adaptive Restore}

\begin{aug}
\author[A]{\fnms{Hector}~\snm{McKimm}\ead[label=e1]{h.mckimm@imperial.ac.uk}}
\author[B]{\fnms{\ Andi Q.}~\snm{Wang}\ead[label=e2]{andi.wang@warwick.ac.uk}}
\author[C]{\fnms{Murray}~\snm{Pollock}\ead[label=e3]{murray.pollock@newcastle.ac.uk}}
\author[B,D]{\fnms{\\ Christian P.}~\snm{Robert}\ead[label=e4]{\\ C.A.M.Robert@warwick.ac.uk}}
\author[B]{\fnms{\ Gareth O.}~\snm{Roberts}\ead[label=e5]{Gareth.O.Roberts@warwick.ac.uk}}
\address[A]{Department of Mathematics,
Imperial College London, London, United Kingdom\printead[presep={,\ }]{e1}}

\address[B]{Department of Statistics,
University of Warwick, Coventry, United Kingdom\printead[presep={,\ }]{e2,e4,e5}}

\address[C]{School of Mathematics,
Newcastle University, Newcastle-upon-Tyne, United Kingdom\printead[presep={,\ }]{e3}}

\address[D]{CEREMADE, Universit\'e Paris Dauphine PSL, Paris, France}
\end{aug}

\begin{abstract}
Enriching Brownian motion with regenerations from a fixed regeneration distribution $\mu$ at a particular regeneration rate $\kappa$ results in a Markov process that has a target distribution $\pi$ as its invariant distribution. For the purpose of Monte Carlo inference, implementing such a scheme requires firstly selection of regeneration distribution $\mu$, and secondly computation of a specific constant $C$. Both of these tasks can be very difficult in practice for good performance. We introduce a method for adapting the regeneration distribution, by adding point masses to it. This allows the process to be simulated with as few regenerations as possible and obviates the need to find said constant $C$. Moreover, the choice of fixed $\mu$ is replaced with the choice of the initial regeneration distribution, which is considerably less difficult. We establish convergence of this resulting self-reinforcing process and explore its effectiveness at sampling from a number of target distributions. The examples show that adapting the regeneration distribution guards against poor choices of fixed regeneration distribution and can reduce the error of Monte Carlo estimates of expectations of interest, especially when $\pi$ is skewed.
\end{abstract}

\begin{keyword}
\kwd{Adaptive algorithm}
\kwd{Markov process}
\kwd{MCMC}
\kwd{normalizing constant}
\kwd{regeneration distribution}
\kwd{Restore sampler}
\kwd{sampling}
\kwd{simulation}	
\end{keyword}

\end{frontmatter}
%


\section{Introduction}

Bayesian statistical problems customarily require computing quantities of the form $\pi[f] \equiv \mathbb{E}_\pi[f(X)]$, where $X$ is some random variable with distribution $\pi$ and $f$ is a function, meaning this expectation is the integral $\int_{\Rd} f(x) \pi(x) \dif x$, when $\Rd$ is the state space. For sophisticated models, it may be impossible to compute this integral analytically. Furthermore, it may be impractical to generate independent samples for use in Monte Carlo integration. In this case, Markov chain Monte Carlo (MCMC) methods \citep{robert2004monte} may be used to generate a Markov chain $X_0, X_1, \dots$ with limiting distribution $\pi$, and then approximate $\pi[f]$ by $n^{-1} \sum_{i=1}^n f(X_i)$.

The chain is constructed by repeatedly applying a collection of Markov transition kernels $P_1, \dots, P_m$, each satisfying $\pi P_i = \pi$ for $i = 1, \dots, m$. The Metropolis--Hastings algorithm \citep{metropolis1953equation, hastings1970monte} is normally used to construct and simulate from reversible $\pi$-invariant Markov transition kernels. A single kernel $P$ may be used to represent $P_1, P_2, \dots, P_m$, with form depending on whether a cycle is used ($P = P_1 P_2 \cdots P_m$) or a mixture ($P = [P_1 + P_2 + \cdots + P_m]/m$). Using multiple kernels allows different dynamics to be used, for example, by making transitions on the local and global scales.

The MCMC framework described above is restrictive. Firstly, each kernel must be $\pi$-invariant; for example, it is not possible for $P_1$ and $P_2$ to be individually non-$\pi$-invariant and yet somehow compensate for each other so that their combination $P$ is $\pi$-invariant. To achieve this $\pi$-invariance, each kernel is designed to be reversible. This acts as a further restriction; by definition, reversible kernels satisfy detailed balance and thus have diffusive dynamics. That is, chains generated using reversible kernels show random-walk-like behaviour, which is inefficient. Recently, there has been increasing interest in the use of non-reversible Markov processes for MCMC \citep{bierkens2019zig, bouchard-cote2018bouncy, Pollock2020}.

A further restriction of the typical MCMC framework is that it is difficult to make use of \emph{regeneration}. At \emph{regeneration times}, a Markov chain effectively starts again; its future is independent of its past. Regeneration is useful from both theoretical and practical perspectives. Nummelin's splitting technique \citep{nummelin1978splitting} may be used in MCMC algorithms to simulate regeneration events \citep{mykland1995regeneration, gilks1998adaptive}. However, the technique scales poorly: regenerations become exponentially rarer with dimension; see the discussion in \citet{gilks1998adaptive}.

An interesting direction to address these issues appeared in \cite{Wang2021}. The authors introduced the \emph{Restore process}, defined by enriching an underlying Markov process, which may not be $\pi$-invariant, with regenerations from some fixed \emph{regeneration distribution} $\mu$ at a \emph{regeneration rate} $\kappa$ so that the resulting Markov process is $\pi$-invariant. The state-dependent regeneration rate is given by $\kappa = \tilde{\kappa} + C \mu / \pi$, where $\tilde{\kappa}$ is a functional of the derivatives of $\log \pi$ and $C$ is a constant chosen so that $\kappa\ge 0$ pointwise. The segments of the process between regeneration times, known as \emph{tours}, are independent and identically distributed. When applied to Monte Carlo, we refer to this as the \emph{Restore sampler}. The process provides a general framework for using non-reversible dynamics, local and global moves, as well as regeneration within an MCMC sampler. Sample paths of the continuous-time process are used to form a Monte Carlo sum to approximate $\pi[f]$.

An issue with the Restore sampler is the choice $C \mu$. For a given $\mu$ and $\pi$, it is difficult to compute how large $C$ needs to be to guarantee $\kappa \ge 0$. Furthermore, it is not obvious how to set $\mu$ in the first place. This issue is in fact crucial, since the Restore sampler is very sensitive to the choice of $\mu$. One might consider adapting $\mu$ to become closer to $\pi$. Comparable discrete-time MCMC methods that use both local and global moves tend to resort to cycle kernels, for instance combining MALA \citep{roberts1996exponential} and Independence Sampler kernels. Typically, the proposal distribution for the Independence Sampler is initialized as a simple distribution, then adapted based on samples from $\pi$ to become closer to $\pi$. For example, \cite{gabrie2021efficient} define the proposal distribution using a normalizing flow \citep{papamakarios2021normalizing}, which is tuned to become closer to $\pi$. In simulating a Restore process, the strategy of adapting $\mu$ to become closer to $\pi$ could only work if the initial choice of $\mu$ were a `good' one that allows $\kappa$ to be small enough for sampling to be possible. In Section \ref{sec:adaptive_restore} we give an example of $\kappa$ being so large that, under a seemingly sensible choice of $\mu$, sampling is not possible due to numerical reasons. Furthermore in Section \ref{sec:examples}, we demonstrate that even if sampling is numerically possible, using a fixed regeneration distribution may make it inefficient.

In this work, we consider an adaptive approach so that a far smaller regeneration rate may be used. In particular, instead of using a fixed regeneration distribution $\mu$, we will use at time $t$ a regeneration distribution $\mu_t$, which is adapted so that it converges to the \textit{minimal regeneration distribution} $\mu^+$. This distribution typically has a \textit{compact} support, and will result in a considerably more robust algorithm, thus having a similar spirit to the proposed Barker proposal of \cite{Livingstone2022}. We call the novel Markov process an \emph{adaptive Restore process} and the original Restore process the \emph{standard Restore process}. An asymptotic analysis for large dimensions is presented in Supplement~D \citep{mckimm2024sampling} to justify this strategy. The regeneration distribution is initially a fixed parametric distribution $\mu_0$, then as the process is generated point masses are added, so that $\mu_t$ is a mixture of a parametric distribution and point masses. Throughout simulation, the \emph{minimal} regeneration rate is used. This rate is defined implicitly by a particular constant $C^+$, which need not be known explicitly. Thus adaptive Restore obviates the need to find a suitably large constant.

Adaptive Restore differs from adaptive MCMC methods \citep{andrieu2008tutorial, roberts2009examples, haario2001adaptive}, since the latter adapt the Markov transition kernel used in generating the Markov chain, whilst the former adapts the regeneration distribution.

Besides the methodological contributions of this work, from a theoretical perspective, this paper presents a novel application of the stochastic approximation technique to establishing convergence of self-reinforcing processes, as previously utilised in, say, \cite{Aldous1988, Benaim2018, Mailler2020}. In particular, we will adapt the proof technique of \cite{Benaim2018}---which applies to discrete-time Markov chains on a compact state space---to deduce validity of our adaptive Restore process, which is a continuous-time Markov process on a noncompact state space. This will be achieved by identifying a natural embedded discrete-time Markov chain, taking values on a compact subset, whose convergence implies convergence of the overall process. Theoretical summary and comparison are provided in Section~\ref{subsec:summary_theory}.

A secondary contribution of this article is showing that it is possible to use a standard Restore process to estimate the normalizing constant of an unnormalized density.

The rest of the article is arranged as follows. Section~\ref{sec:bmr} reviews standard Restore. Next, Section~\ref{sec:adaptive_restore} introduces the adaptive Restore process and its use as a sampler. Section~\ref{sec:theory} is a self-contained Section on the theory of adaptive Restore, where we prove its validity; see Section~\ref{subsec:summary_theory} for a summary of our theoretical contributions. Examples are then provided in Section~\ref{sec:examples} and Section~\ref{sec:discussion} concludes.


\section{The Restore process} \label{sec:bmr}

This section describes the standard Restore process, as introduced in \cite{Wang2021}. We define the process, explain how it may be used to estimate normalizing constants, introduce the concept of minimal regeneration, and present the case where the underlying process is Brownian motion.


\subsection{Regeneration-enriched Markov processes}

The Restore process is defined as follows. Let $\{ Y_t \}_{t \ge 0}$ be a diffusion or jump process on $\R^d$. The regeneration rate $\kappa:\R^d \to [0,\infty)$, which we will define shortly, is locally bounded and measurable. Define the \emph{tour length} as
\begin{equation}
    \tourLength = \inf \left\{ t \ge 0 : \int_0^t \kappa(Y_s) \dif s \ge \xi \right\},
    \label{eq:tour_length}
\end{equation}
for $\xi \sim \text{Exp}(1)$ independent of $\{ Y_t \}_{t \ge 0}$.
Let $\mu$ be some fixed distribution and $\big(\{ Y_t^{(i)} \}_{t \ge 0}, \tourLength^{(i)}\big)_{i=0}^\infty$ be i.i.d realisations of $(\{ Y_t \}_{t \ge 0}, \tourLength)$ with $Y_0 \sim \mu$. The regeneration times are $T_0 = 0$ and $T_j = \sum_{i=0}^{j-1} \tourLength^{(i)}$ for $j = 1, 2, \dots$. Then the Restore process $\{ X_t \}_{t \ge 0}$ is given by:
\begin{equation*}
    X_t = \sum_{i=0}^\infty \mathbbm{1}_{[T_i, T_{i+1})}(t)Y_{t-T_i}^{(i)}.
\end{equation*}
Let $L_Y$ be the \emph{infinitesimal generator} of $\{Y_t\}_{t \ge 0}$. Then the (formal) infinitesimal generator of $\{X_t\}_{t \ge 0}$ is: $L_X f(x) = L_Y f(x) + \kappa(x) \int [f(y) - f(x)]\mu(y) \dif y$. To use the Restore process for Monte Carlo integration one chooses $\kappa$ so that $\{ X_t \}_{t \ge 0}$ is $\pi$-invariant. Defining $\kappa$ as
\begin{equation} \label{eq:regen_rate_gen_underlying}
    \kappa(x) = \frac{L_Y^\dag \pi(x)}{\pi(x)} + C \frac{\mu(x)}{\pi(x)},
\end{equation}
with $L_Y^\dag$ denoting the formal adjoint, it can be shown that $\int_{\Rd} L_X f(x) \pi(x) \dif x = 0$. Hence, $\{ X_t \}_{t \ge 0}$ is $\pi$-invariant. We will write equation \eqref{eq:regen_rate_gen_underlying} as
\begin{equation} \label{eq:full_regen_rate}
    \kappa(x) = \tilde{\kappa}(x) + C \frac{\mu(x)}{\pi(x)}.
\end{equation}
We call $\tilde{\kappa}$ the \emph{partial regeneration rate}, \correction{$\mu$ the \textit{regeneration distribution}}, $C > 0$ the \emph{regeneration constant}, and $C\mu$ the \emph{regeneration measure}, which must be large enough so that $\kappa(x) > 0, \forall x \in \Rd$. The resulting Monte Carlo method is called the Restore Sampler. Given $\pi$-invariance of $\{ X_t \}_{t \ge 0}$, due to the regenerative structure of the process, we have
\begin{equation*}
    \mathbbm{E}_\pi[f] = \frac{\mathbbm{E}_{X_0 \sim \mu} \Big[\int_0^{\tourLength^{(0)}} f(X_s) \dif s \Big] }{\mathbbm{E}_{X_0 \sim \mu}[\tourLength^{(0)}]}
\end{equation*}
and almost sure convergence of the ergodic averages: as $t \rightarrow \infty$,
\begin{equation} \label{eq:rstr_erg_avgs}
    \frac{1}{t} \int_0^t f(X_s) \dif s \rightarrow \mathbbm{E}_\pi[f].
\end{equation}
For $i=0,1,\dots$, define $Z_i := \int_{T_i}^{T_{i+1}} f(X_s) \dif s$. The Central Limit Theorem for Restore processes states that
\begin{equation*}
    \sqrt{n}\left( \frac{\int_0^{T_n} f(X_s) \dif s}{T_n} - \mathbbm{E}_\pi[f] \right) \rightarrow \mathcal{N}(0, \sigma_f^2),
\end{equation*}
where convergence is in distribution and
\begin{equation} \label{eq:asymp_var}
    \sigma_f^2 := \frac{\mathbbm{E}_{X_0 \sim \mu}\left[\left(Z_0 - \tourLength^{(0)}\mathbbm{E}_\pi[f]\right)^2\right]}{\left(\mathbbm{E}_{X_0 \sim \mu}[\tourLength^{(0)}]\right)^2}.
\end{equation}
Evidently the estimator's variance depends on the expected tour length. This is one motivation for choosing $\mu$ so that tours are on average reasonably long. Indeed, this is the key motivation behind the \textit{minimal} regeneration measure described in Section~\ref{subsec:min_regen}.


\subsection{Estimating the normalizing constant} \label{sec:est_norm_const}

It is possible to use a Restore process to estimate the normalizing constant of an unnormalized target distribution. When the target distribution is the posterior of some statistical model, its normalizing constant is the \emph{marginal likelihood}, also known as the \emph{evidence}. Computing the evidence allows for model comparison via Bayes factors \citep{kass1995bayes}. Standard MCMC methods draw dependent samples from $\pi$ but cannot be used to calculate the evidence. Alternative methods such as importance sampling, thermodynamic integration \citep{neal1993probabilistic, ogata1989monte}, Sequential Monte Carlo \citep{delmoral2006sequential} or nested sampling \citep{skilling2006nested} must instead be used for computing the evidence \citep{gelman1998simulating}.

For $Z$ the normalizing constant, let
\begin{equation*}
    \pi(x) = \frac{\tilde{\pi}(x)}{Z}.
\end{equation*}
Suppose we are able to evaluate to $\tilde{\pi}$, but $Z$ is unknown. Let the \emph{energy} be defined as:
\begin{equation*}
    U(x) := - \log \pi(x) = \log Z - \log \tilde{\pi}(x).
\end{equation*}
We will see that when $\{ Y_t \}_{t \ge 0}$ is a Brownian motion, $\tilde{\kappa}$ is a function of $\nabla U(x)$ and $\Delta U(x)$, so doesn't depend on $Z$. In the expression for the regeneration rate, the normalizing constant may be ``absorbed'' into $C$. That is,
\begin{equation*}
    \kappa(x) = \tilde{\kappa}(x) + C\frac{\mu(x)}{ \left( \frac{\tilde{\pi}(x)}{Z} \right) } = \tilde{\kappa}(x) + CZ \frac{\mu(x)}{\tilde{\pi}(x)} = \tilde{\kappa}(x) + \tilde{C} \frac{\mu(x)}{\tilde{\pi}(x)},
\end{equation*}
where $\tilde{C} = CZ$. It is known that $C = 1/\mathbbm{E}_\mu[\tourLength]$ \citep[Proof of Theorem 16]{Wang2021}. Since $\tilde{C}$ is set by the user, we have $Z = \tilde{C} \mathbbm{E}_\mu[\tourLength]$. Suppose $n$ tours take simulation time $T$, then \correction{an unbiased} Monte Carlo approximation of $Z$ is:
\begin{equation} \label{eq:Z_monte_carlo_est}
    Z \approx \frac{\tilde{C} T}{n}.
\end{equation}
In Section \ref{sec:adaptive_restore}, we will see that the ability to estimate $Z$ is lost when using adaptive Restore instead of standard Restore, unless the regeneration measure is fixed for that purpose over a sufficient number of iterations.


\subsection{Minimal regeneration}
\label{subsec:min_regen}

The \emph{minimal regeneration measure}, which we denote by $\CMinimum \muMin$, is the choice of $C \mu$ corresponding to the rate being as small as possible:
\begin{equation} \label{eq:min_regen_rate}
    \pkap(x) := \tilde{\kappa}(x) \vee 0 = \tilde{\kappa}(x) + \CMinimum \frac{\muMin(x)}{\pi(x)}.
\end{equation}
We call $\CMinimum$ the \emph{minimal regeneration constant}, $\muMin$ the \emph{minimal regeneration distribution} \correction{and $\pkap$ the \emph{minimal regeneration rate}}. For any $C \mu$ such that $\kappa$, with form \eqref{eq:full_regen_rate}, satisfies $\kappa(x) \ge 0, \forall x \in \Rd$, we have $\pkap(x) \le \kappa(x), \forall x \in \Rd$. Rearranging \eqref{eq:min_regen_rate}, we can obtain an explicit representation for $\muMin$, namely,
\begin{equation}
    \muMin(x) = \frac{1}{\CMinimum}[0 \vee -\tilde{\kappa}(x)]\pi(x).
    \label{eq:mu*}
\end{equation}
A Restore process under $\muMin, \pkap, \CMinimum$ will be referred to as a \textit{minimal restore process}, or simply \emph{minimal restore}. Note that the corresponding notation used by \cite{Wang2021} is $\mu^*, \kappa^*, C^*$. In Section D of the supplementary material \citep{mckimm2024sampling}, a stylised example is considered in a formal asymptotic analysis for large dimensions where we demonstrate that the minimal regeneration distribution leads to $O(1)$ convergence time per unit regeneration.

Frequent regeneration in itself is not necessarily detrimental. For instance, if $\mu \equiv \pi$ and $C$ was large, regeneration would happen very often, but each time the process would start again with distribution $\pi$. Frequent regeneration is more of an issue when $\mu$ is not well-aligned to $\pi$, since the process may then regenerate into areas where $\pi$ has low probability mass, wasting computation.



\subsection{Regeneration-enriched Brownian motion}\label{subsec:r-BM}

When $\{ Y_t \}_{t \ge 0}$ is a Brownian motion, the partial regeneration rate is
\begin{equation} \label{eq:tilde_kap_BM}
    \tilde{\kappa}(x) := \frac{1}{2} \left( ||\nabla U(x)||^2 - \Delta U(x) \right).
\end{equation}
Regeneration-enriched Brownian motion is the focus of the methodology developed in this article. As such, this subsection is devoted to important aspects of its application to Monte Carlo.


\subsubsection{Output}

The left-hand side of equation \eqref{eq:rstr_erg_avgs} cannot be evaluated exactly when the underlying process is a Brownian motion. Instead, the output of the sampler is the state of the process either at fixed, evenly-spaced intervals or at the arrival times of an exogenous, homogeneous Poisson process with rate $\outputRate$. We use a homogeneous Poisson process to record output events, since this method is marginally simpler -- see the discussion in Section
A of the supplementary material \citep{mckimm2024sampling}.


\subsubsection{Simulation}

\emph{Poisson thinning} \citep{Lewis1979} is used to simulate regeneration events. This is because the regeneration rate is itself a stochastic process, given by $t\mapsto \kappa_t := \kappa(X_t),$ and hence no closed form expression for the right-hand side of \eqref{eq:tour_length} is available. 
Suppose $\kappa$ is uniformly bounded. That is,
\begin{equation*}
    \K := \sup_{x \in \Rd} \kappa(x)<\infty .
\end{equation*}
Then $\kappa_t < \K, \forall t \ge 0$ and $\K$ may be used as the dominating rate in Poisson thinning. To simulate a rate $\kappa_t$ Poisson process, at time $t$ generate $\timeToNextPotentialRegen \sim \text{Exp}(\K)$, the time to the next potential regeneration event. Then regenerate at time $t + \timeToNextPotentialRegen$ with probability $\kappa_{t + \timeToNextPotentialRegen} / \K$, else don't regenerate. In \ref{sec:min_bmr} we consider the process simulated using the minimal rate $\pkap$ given in \eqref{eq:min_regen_rate}. In this case, let
\begin{equation*}
    \pK := \sup_{x \in \Rd} \pkap (x).
\end{equation*}

Algorithm~\ref{algo:bmrstr} shows how to simulate a Brownian motion Restore process for a fixed number of tours.
Variables $X, t$ and $i$ denote the current state, time and tour number of the process.

\begin{algorithm}[H]
    $X \sim \mu, t \gets 0, i \gets 0$\\
    \While{$i < n$}{
        $\timeToNextPotentialRegen \sim \text{Exp}(\K), s \sim \text{Exp}(\outputRate)$\\
        \eIf{$s < \timeToNextPotentialRegen$}{
            $t \gets t + s, X \sim \mathcal{N}(X, s)$. Record $X, t, i$
        }{
            $t \gets t + \timeToNextPotentialRegen, X \sim \mathcal{N}(X, \timeToNextPotentialRegen), u \sim \mathcal{U}[0,1]$\\
            \If{$u < \kappa(X) / \K$}{
                $X \sim \mu, i \gets i + 1$
            }
        }
    }
    \caption{Brownian motion Restore}
    \label{algo:bmrstr}
\end{algorithm}

For many target distributions $\kappa$ is \textit{not} bounded. Then, to use a global dominating rate, $\kappa$ must be \emph{truncated} at some level.
Alternatively, when $\kappa$ is bounded but the bound is very large, then for simulation purposes truncation may be desirable. When a truncated regeneration rate is used, we will denote the truncation level as $\regenRateTrunc$. When a truncated minimal regeneration rate is used, the truncation level will be denoted $\minRegenRateTrunc$. In other words, this notation signals the use of rates
\begin{equation*}
    \kappa(x) = \left( \tilde{\kappa}(x) + C \frac{\mu(x)}{\pi(x)} \right) \wedge \regenRateTrunc
\end{equation*}
and
\begin{equation*}
    \kappa(x) = \pkap(x) \wedge \minRegenRateTrunc.
\end{equation*}

Truncation introduces error, so that the Monte Carlo approximation is no longer exact, however the error is negligible for $\regenRateTrunc$ large enough. Indeed, in theory it is possible to quantify the size of this error \citep[Theorem 30]{Rudolf2021,Wang2021} and show that as $\regenRateTrunc$ goes to infinity, the error tends to zero \citep[Proposition 32]{Wang2021}. Bounding the error explicitly using Theorem~30 of \citep{Wang2021} may be difficult in challenging problems. For complicated posterior distributions, it may not be possible to compute the supremum of $\kappa$, in which case one cannot be sure that the global dominating rate does not truncate $\kappa$. An advantage of using the minimal rate is that for a given error tolerance, the truncation level $\minRegenRateTrunc$ typically only needs to be increased logarithmically with dimension $d$; see Section~\ref{sec:min_bmr}.


\subsubsection{Minimal Brownian motion Restore} \label{sec:min_bmr}

A \emph{minimal Brownian motion Restore process} is defined by enriching an underlying Brownian motion with regenerations from distribution \eqref{eq:mu*} at rate \eqref{eq:min_regen_rate}, with $\tilde{\kappa}$ given by \eqref{eq:tilde_kap_BM}. Figure \ref{fig:min_bmr_path} shows five tours of a minimal Brownian motion Restore process. Green dots and red crosses show the first and last output state of each tour. In this example, $\muMin$ is supported on the interval $[-1,1]$, shown by gray lines. Note that the process always regenerates from outside to inside this interval.

\begin{figure}
    \centering
    \includegraphics[width=.9\linewidth, trim={0 1cm 0 2cm}, clip]{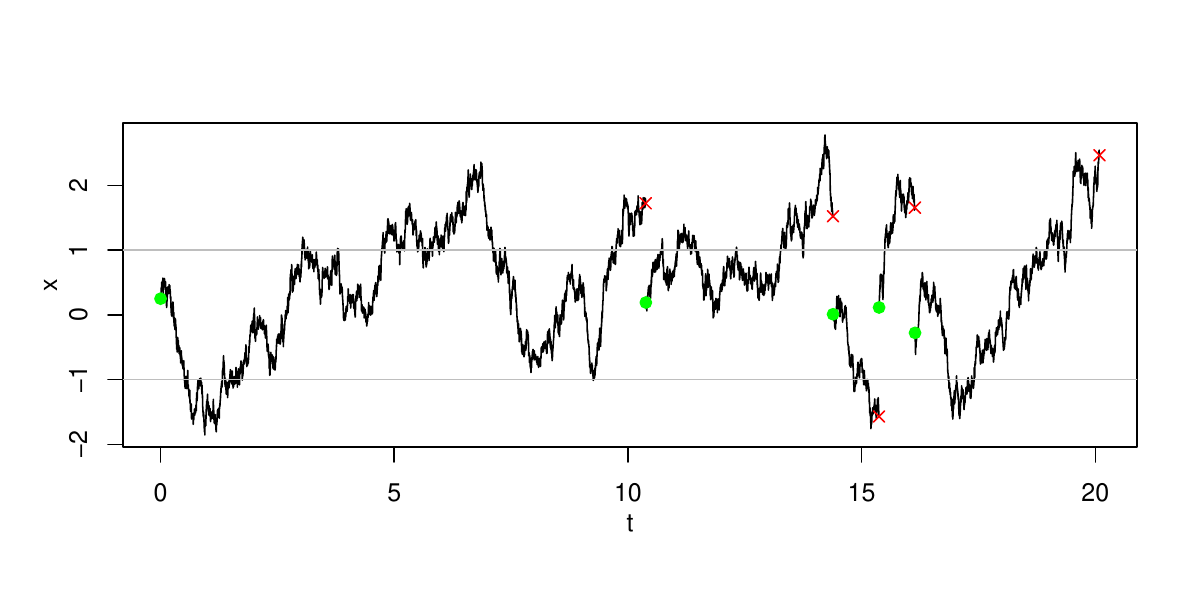}
    \caption{Five tours of a minimal Brownian motion Restore process with $\pi \equiv \mathcal{N}(0,1)$. The first and last output state of each tour is shown by a green dot and red cross respectively. The gray horizontal lines mark the support of $\muMin$.}
    \label{fig:min_bmr_path}
\end{figure}

An advantage of minimal Brownian motion Restore is that it reduces computational expense. A useful feature of $\pkap$ is that to ensure $\P[\pkap < \minRegenRateTrunc] < 1 - \epsilon$ is satisfied, for $\epsilon > 0$ a small constant (e.g. $\epsilon=0.001$), $\minRegenRateTrunc$ scales logarithmically with dimension $d$. Supplement D provides an illustration of this, when $X \sim \N(0,I)$ \citep{mckimm2024sampling}.


\section{Adapting the regeneration distribution} \label{sec:adaptive_restore}

A serious impediment to using a standard Restore process as a sampler is the choice of $C$ and $\mu$. Firstly, for a given $\mu$, $C$ must be large enough that $\kappa(x) \ge 0, \forall x \in \Rd$. Secondly, it is unclear how to set $\mu$ in practice.

When $\mu$ is a poor approximation of $\pi$, the regeneration rate can become so large that simulation is actually not feasible. For example, consider the hierarchical model of pump failures used as a benchmark in the Gibbs sampling literature \citep{carlin1991iterative}: $R_i \sim \text{Poisson}(X'_i t_i)$; $i=1,\dots,10; X'_i \sim \text{Gamma}(c_1, X'_{11}); i=1,\dots,10; X'_{11} \sim \text{InverseGamma}(c_2, c_3)$; with constants $c_1 = 1.802, c_2 = 2.01, c_3 = 1.01$. Observation $R_i ~ (i=1,\dots,10)$ is the number of recorded failures of pump $i$, which is observed for a unit of time $t_i ~ (i=1,\dots,10)$. The failure rate of pump $i$ is $X'_i ~ (i=1,\dots, 10)$. Before sampling, we transformed the posterior to be defined on $\Rd$ by making a change-of-variables, defining $X_i = \log X'_i ~ (i=1,\dots,10)$. We then transformed the posterior again, based on its Laplace approximation, as described in Supplement B \citep{mckimm2024sampling}. Let $\mu \equiv \mathcal{N}(0, I)$, which may seem like a reasonable choice, since the posterior has been transformed to be roughly centred at zero and have an identity covariance matrix. To satisfy $\P(\kappa(X) < \regenRateTrunc) = 0.99$, we would need to set $\regenRateTrunc \approx 6.5 \times 10^{15}$. A computer typically represents numbers with either 7 (for a float) or 15 (for a double) decimal digits of precision. Thus using $\regenRateTrunc \approx 6.5 \times 10^{15}$ is not practical, because the average time between potential regeneration events would be approximately $1.5 \times 10^{-16}$ and there would be significant numerical difficulties in simulating a Brownian motion over this time-span.

The above strategy for choosing $\mu$ was to set it as an approximation of $\pi$. To understand why it resulted in a very large regeneration rate, it is important to understand that $\tilde{\kappa}$ is negative in an area $\cX$ of the state-space $\Rd$, which is close to the mode of $\pi$, and that a role of the term $C \mu / \pi$ in the expression for $\kappa$ is to make $\kappa$ non-negative in $\cX$. If $\pi$ is skewed, then in certain directions $\mu$ may have lighter tails than $\pi$ and in other directions heavier tails. In the directions for which $\mu$ has lighter tails than $\pi$, the ratio $\mu / \pi$ may be small in parts of $\cX$. In this case, $C$ must compensate by being very large, which pushes up $\kappa$ everywhere. This can result in tours of the process frequently starting in areas where $\pi$ has low probability mass and for which the regeneration rate is very large, so regeneration occurs very frequently. This is computationally wasteful, since $\pi$ and its derivatives must be evaluated in order to determine regeneration events. Even if sophisticated methods were used to approximate $\pi$, as the dimension of the problem increases it would becomes harder to find a good approximation, so $\kappa$ would become very large which results in expensive computation.

Consider the two strategies of attempting to choose a distribution $\mu$ with tails that are (i) heavier and (ii) lighter than those of $\pi$. For example, one might choose $\mu$ as (i) a t-distribution or (ii) a Gaussian distribution with covariance matrix $\epsilon I$, for $\epsilon > 0$ a small constant. For both strategies, in practice there is no guarantee that the chosen $\mu$ indeed has tails that are heavier/lighter than those of $\pi$, since before any sampling has occurred the geometry of $\pi$ is unknown. Even if $\mu$ does has heavier tails than $\pi$, though $C$ would not need to compensate for the ratio $\mu / \pi$, the process would still display the behaviour of frequently regenerating into regions of low probability. If $\mu$ has lighter tails than $\pi$, the sampler may not take full advantage of global dynamics as a means of making large moves around the state space.

As mentioned in the introduction, $\mu$ could only be adapted to become closer to $\pi$ if the initial choice of $\mu$ were a `good' one that allows $\kappa$ to be small enough for sampling to be possible. The above example of a model of pump failure demonstrates that it is easy to make an initial `bad' choice of $\mu$.


\subsection{Adaptive Restore}

An adaptive Restore process uses the regeneration rate that is as small as possible. In doing so, it removes the need to find a suitably large constant $C$ and replaces the choice of fixed $\mu$ with a less consequential choice of an initial regeneration distribution. The process is defined by enriching some underlying continuous-time Markov process with regenerations at rate $\pkap$ from, at time $t$, a distribution $\mu_t$. Initially, the regeneration distribution is $\mu_0$. The regeneration distribution is updated by adding point masses at certain time points. Let $\pi_t$ be the stationary distribution of the Restore process with fixed regeneration distribution $\mu_t$. We have simultaneous convergence of $(\mu_t, \pi_t)$ to $(\muMin, \pi)$. The density of $\mu_t$ is given by
\begin{equation}
    \label{eq:mu_t}
    \mu_t(x) =
        \correction{ \frac{1}{a+N(t)} }\sum_{i=1}^{N(t)} \delta_{X_{\zeta_i}}(x) + \correction{ \frac{a}{a+N(t)} }\, \mu_0(x),
\end{equation}
where $a>0$ is some constant, $\mu_0$ is some fixed initial distribution and $\zeta_1, \zeta_2, \dots, \zeta_{N(t)}$ are the arrival times of an inhomogeneous Poisson process $(N(t):t\ge 0)$ with rate $t \mapsto \nkap(X_t)$,
\begin{equation*}
    \nkap(x) := [0 \vee -\tilde{\kappa}(x)].
\end{equation*}
The rate $\nkap$ Poisson process is simulated using Poisson thinning, so it is assumed that there exists a constant
\begin{equation*}
    \nK := \sup_{x \in \mathcal{X}} \nkap(x),
\end{equation*}
such that $\nK > 0$. The distribution $\mu_t$ is a mixture of a fixed distribution $\mu_0$ and a discrete measure $N(t)^{-1}\sum_{i=1}^{N(t)} \delta_{X_{\zeta_i}}(x)$. The constant $a$ is called the \emph{discrete measure dominance count}, since it is the time at which regeneration is more likely to be from the discrete measure in the mixture distribution.

Algorithm \ref{algo:adaptive_restore} describes the method. Three Poisson processes, one homogenous and two inhomogeneous, are simulated in parallel. Here, the process is generated for a fixed number of tours, though another condition such as the number of samples or simulation time could equally be used. 

\begin{algorithm}[H]
    \SetAlgoLined
    $t \gets 0, E \gets \emptyset, i \gets 0, X \sim \mu_0.$\\
    \While{$i < n$}{
        $\timeToNextPotentialRegen \sim \text{Exp}(\pK), s \sim \text{Exp}(\outputRate), \muMinDomEvent \sim \text{Exp}(\nK)$.\\
        \uIf{$\timeToNextPotentialRegen < s$ and $\timeToNextPotentialRegen < \muMinDomEvent$}{
            $X \sim \mathcal{N}(X, \timeToNextPotentialRegen), t \gets t + \timeToNextPotentialRegen, u \sim \mathcal{U}[0,1]$.\\
            \If{$u < \pkap(X)/\pK$}{
                \eIf{$|E| = 0$}{
                    $X \sim \mu_0$.
                }{
                    $X \sim \mathcal{U}(E)$ with probability \correction{ $N(t) / \big(a + N(t) \big)$ }, else $X \sim \mu_0$.
                }
                $i \gets i+1$.
            }
        }
        \uElseIf{$s < \timeToNextPotentialRegen$ {\bf and} $s < \muMinDomEvent$}{
            $X \sim \mathcal{N}(X,s), t \gets t+s$, record $X, t, i$.
        }
        \Else{
            $X \sim \mathcal{N}(X, \muMinDomEvent), t \gets t + \muMinDomEvent, u \sim \mathcal{U}[0,1]$.\\
            If $u < \nkap(X)/\nK$ then $E \gets E \cup \{ X \}$.
        }
    }
    \caption{Adaptive Brownian motion Restore}
    \label{algo:adaptive_restore}
\end{algorithm}

Figure \ref{fig:cabmr_uvg} shows the path of an adaptive Restore process with $\pi \equiv \mathcal{N}(0,1), \mu_0 \equiv \mathcal{N}(2,1)$ and $a=10$. The regeneration rate $\pkap$ encourages the process to drift to towards the origin, where the target distribution is centred. To allow convergence of the process, one might want to only record output after some \emph{burn-in} time $b$.

\begin{figure}
    \centering
    \includegraphics[width=\linewidth, trim={0 1cm 0 1cm}, clip]{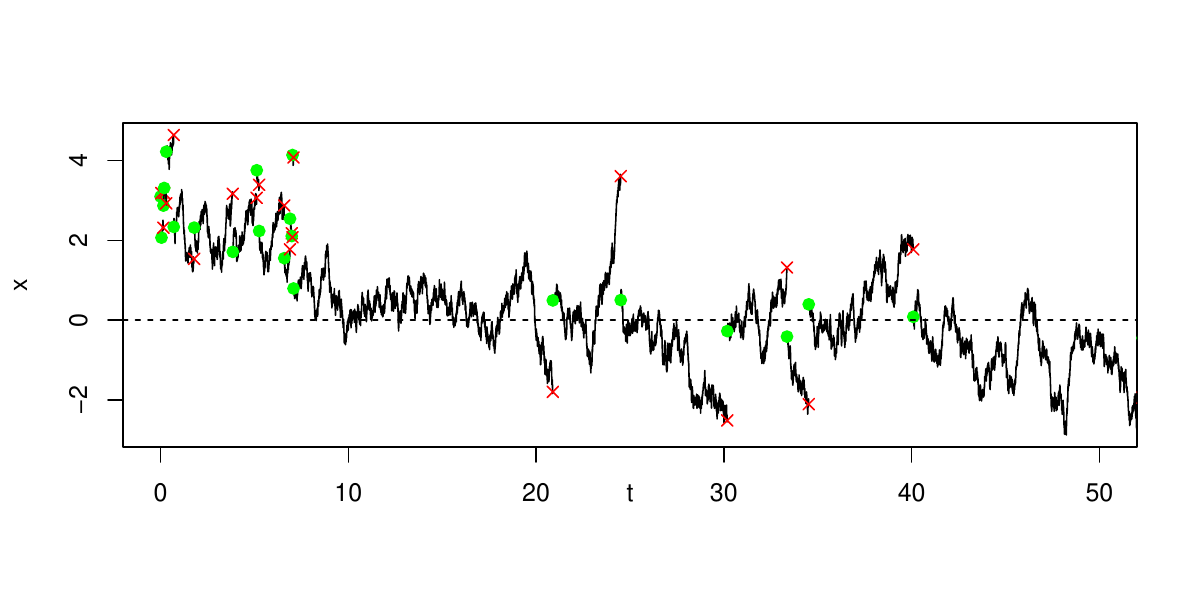}
    \caption{Path of an adaptive Restore process with $\pi \equiv \mathcal{N}(0,1), \mu_0 \equiv \mathcal{N}(2,1), a=10$. Green dots and red crosses show the first and last output states of each tour.}
    \label{fig:cabmr_uvg}
\end{figure}

Note that for adaptive Restore, one can not straightforwardly apply the method described in \ref{sec:est_norm_const} in order to estimate the normalizing constant $Z$. This is because we do not explicitly set constant $\tilde{C}$ and hence cannot use equation \eqref{eq:Z_monte_carlo_est}.


\subsection{Choice of initial regeneration distribution and parameters}

Generally, we set $\mu_0$ to approximate $\pi$, e.g., $\mu_0 \equiv \mathcal{N}(0, I)$ ($\pi$ undergoes a pre-transformation based on its Laplace approximation, as described in Section B of the supplementary material \citep{mckimm2024sampling}, so that the transformed $\pi$ is approximately $\mathcal{N}(0, I)$). Setting $\mu_0$ as a more sophisticated approximation of $\muMin$ might lead to faster converge, but for the example problems considered, this simpler choice of $\mu_0$ suffices.

For $\muMin_{\N}$ the minimal regeneration distribution of an isotropic Gaussian distribution, we experiment with using $\mu_0 \equiv \muMin_{\N}$ for an adaptive Restore process. There is no guarantee that the support of $\muMin_{\N}$ will cover the support of $\muMin$, so $\muMin_{\N}$ can not be used for standard Restore sampling.

There is a tradeoff in choosing $a$, the discrete measure dominance count. Empirical experiments have shown that smaller choices of $a$ can lead to faster convergence. However, a larger value of $a$ encourages more regenerations from $\mu_0$, which makes it more likely for $\{ X_t \}_{t \ge 0}$ to explore regions it has not previously visited.

For this paper's examples, $\minRegenRateTrunc$ and $\nK$ were selected based on the quantile functions of $\tilde{\kappa}$, approximated using the output of a preliminary Markov chain generated with a Metropolis-Hastings algorithm. It may be possible to learn suitable values of $\minRegenRateTrunc$ and $\nK$ on-the-fly, without the need for a preliminary run of a Markov chain. Assuming $\pi$ is $d$-dimensional and close to Gaussian, a sensible initial guess of $\nK$ is $d/2$, since if $\pi \equiv \N(0, I)$ then $\nK=d/2$ exactly. This initial guess of $\nK$ could then be adjusted as necessary. Similarly, a sensible initial estimate for $\minRegenRateTrunc$ could be made based on the cumulative distribution function of a chi-squared random variable, see Supplement D \citep{mckimm2024sampling}, then adjusted by monitoring how often $\kappa$ exceeds this truncation level.


\subsection{Adaptive Restore with Short-Term Memory}

\correction{A practical issue with simulating an adaptive Restore process is that it may require a lot of memory. In fact, we have found that the memory available on a core of a server may be exceeded before the process has converged. An option for controlling the memory requirements of the algorithm is to simulate a process with ``Short-Term Memory''. As the process is simulated, particles are added to the regeneration distribution as usual, but particles are also removed on a "first in first out" basis: the particles that were added first are the first to be forgotten. In programming terms, the cloud of particles is a queue: particles are added to the back and removed from the front. Empirically, we have observed that this modification has the additional benefit of improving the speed of convergence of the process.}

\correction{Let an adaptive Restore process with Short-Term Memory be designed so that until $\nCloud$ particles have been added to the regeneration distribution, no particles are forgotten. Then, once this threshold is reached, for every $\nForget$ particles that are added, $\nForget - 1$ are forgotten, so that the number of particles in the regeneration distribution only increases by 1. The regeneration distribution at time $t$ is similar to equation \eqref{eq:mu_t} and given by:
\begin{equation*}
    \mu_t(x) =
        \frac{1}{a + N(t) - l(t) + 1} \sum_{i=l(t)}^{N(t)} \delta_{X_{\zeta_i}}(x) + \frac{a}{a + N(t) - l(t) + 1}\, \mu_0(x), 
\end{equation*}
where $l(t) = \lfloor (N(t) - \nCloud)(\nForget-1)/\nForget \rfloor\vee 0$.
}


\subsection{Connections with ReScaLE}

Although inspired by the Restore algorithm of \cite{Wang2021}, the adaptive Restore algorithm presented in Algorithm~\ref{algo:adaptive_restore} has many connections to \emph{quasi-stationary Monte Carlo} (QSMC) methods such as the ScaLE Algorithm of \cite{Pollock2020}, and particularly the ReScaLE algorithm of \cite{Kumar2019}.

Indeed, ScaLE and ReScaLE can be seen as instances of Restore where the regeneration distribution is chosen to \textit{be} the target, which is then learnt adaptively: the killing rate for QSMC methods is given by $\kappa = \tilde \kappa + C$,
and in the case of ReScaLE, at a killing time $T$, the process is regenerated from its empirical occupation measure:
\begin{equation}
    X_T \sim \frac{1}{T}\int_0^T \delta_{X_s}\dif s.
    \label{eq:rescale_empirical}
\end{equation}
The key motivation behind ScaLE and ReScaLE was applicability to \textit{tall data problems}, due to the applicability of exact subsampling techniques \cite{Pollock2020, Kumar2019}, however sampling from \eqref{eq:rescale_empirical} is somewhat delicate due to the need to simulate complex diffusion bridges.

By contrast, although adaptive Restore does not straightforwardly permit exact subsampling, its regeneration mechanism is considerably simpler to implement, and it is only required to adaptively learn the compactly-supported distribution $\muMin$. ReScaLE need learn the entire distribution $\pi$---approximated by the trajectory of the diffusion path -- on $\Rd$ for its regeneration mechanism, and thus the two algorithms---although similar in many regards---can be seen as complementary.

\correction{Finally, although it is possible to use more exotic choices of underlying local dynamics than a Brownian motion, as in \cite{Wang2019TheoQSMC,Wang2021}, it has been shown for QSMC methods that choosing Brownian motion -- which is of course mathematically and computationally convenient -- can lead to a maximal spectral gap; see \cite[Section~2]{Wang2019TheoQSMC}. This further justifies our use of regeneration-enriched Brownian motion as in Section~\ref{subsec:r-BM}.}


\section{Theory} \label{sec:theory}
In this section we will establish the validity of the adaptive Restore algorithm, as described in Algorithm~\ref{algo:adaptive_restore}. In general, this is a difficult task since the process is self-reinforcing, on a noncompact state space; most works in the literature are for compact state spaces, 
an exception being \cite{Mailler2020}. In our present setting, we will thus establish validity by showing that the measures $\mu_t$ in \eqref{eq:mu_t} converges weakly almost surely to the minimal regeneration distribution $\muMin$ as in \eqref{eq:mu*}, which ultimately implies validity of the adaptive Restore algorithm. For this theoretical analysis, we consider a fixed regeneration rate; in particular we do not consider questions related to truncating a possibly divergent regeneration rate.

This section is self-contained, to be illustrated by numerical experiments in Section~\ref{sec:examples}.




\subsection{Summary of theoretical contributions and related work}
\label{subsec:summary_theory}
The theoretical analysis in this section is based on stochastic approximation techniques, following a similar overall approach as in the sequence of previous works already cited.
Our proof most closely follows the approaches of \cite{Benaim2018,Wang2020}, with several key novelties. The former
shows almost sure convergence of stochastic approximation algorithms for \textit{discrete-time} processes on \textit{compact spaces}. By focussing our attention on the measures $\mu_t$, which in our present setting are supported on a compact set, we will be able to identify an appropriate \textit{embedded discrete-time Markov chain}, to which we can apply their main results.
Thus the main technical work of this section is identifying the appropriate discrete-time structure, and then checking that the relevant hypotheses are satisfied.

A further difference between our present analysis and the previous works cited above concerns the nature of the killing mechanism. In all previous works, the killing mechanism was given by an additional random clock $\tau_\partial$ of the form
\begin{equation}
    \tau_\partial := \inf \left\{ s\ge 0: \int_0^s \kappa(X_u) \dif u\ge \xi \right\},
    \label{eq:general-tau}
\end{equation}
for an appropriate killing rate $\kappa:\mathcal X \to [0,\infty]$ and $\xi\sim \mathrm{Exp}(1)$ independent (in discrete-time settings the obvious modifications need to be made).

By contrast, our present setting is considering \textit{two competing clocks} $\zeta$ and $T$, each of which defined as in \eqref{eq:general-tau} with their own respective arrival rates $\kappa^\pm$ and independent exponential random variables $\xi, \xi'$. A `killing' event in our setting is then the event $\zeta < T$,
namely that the clock with rate $\kappa^-$ rings before the clock with rate $\kappa^+$.

Even with these key differences, we show in this section that the general stochastic approximation  approach of \cite{Benaim2018,Wang2020} can still be applied, and thus we deduce almost sure weak convergence of the measures $\mu_t$.


\subsection{Diffusion setting and Restore process}

We assume that we are given some underlying local dynamics on the Euclidean space $\R^d$, with generator $L$, assumed to be a self-adjoint (reversible) diffusion:
\begin{equation}
    \dif X_t = \nabla A(X_t)\dif t + \dif W_t.
    \label{eq:SDE}
\end{equation}
This is a symmetric diffusion, with a self-adjoint generator $L$ on the Hilbert space $\mathcal L^2(\Gamma)$, where 
\begin{equation*}
    \Gamma(\dif y) = \gamma(y)\dif y,\quad \gamma(y) = \exp(2A(y)),\quad \forall y \in \R^d.
\end{equation*}

We have fixed a target density $\pi$ on $\R^d$ \correction{with respect to Lebesgue measure}. We then define a \textit{partial killing rate} $\tilde \kappa: \R^d \to \R$, which comes from \cite{Wang2019TheoQSMC}: 
\begin{equation*}
    \tilde \kappa (y) := \frac{1}{2}\left( \frac{\Delta \pi}{\pi} -\frac{2\nabla A \cdot \nabla \pi}{\pi} -2\Delta A\right)(y),\quad y\in \R^d.
\end{equation*}
In the special case of Brownian motion ($A\equiv 0$), this reduces to \eqref{eq:tilde_kap_BM}.

We then define the positive and negative parts:
\begin{equation*}
        \kappa^+ := \tilde \kappa \vee 0,\qquad
        \kappa^- := -(\tilde \kappa \wedge 0).
\end{equation*}
We make the following regularity assumptions (\textit{c.f.} \cite{Wang2019TheoQSMC}).
\begin{assumption}
    $A$ is smooth ($C^\infty$), such that the SDE \eqref{eq:SDE} has a unique nonexplosive weak solution. The target density $\pi$ is smooth and positive, and such that $\int \pi^2/\exp(2A)\dif y<\infty$. Thus $\tilde \kappa$ is continuous, which implies that the functions $\kappa^+, \kappa^-$ are continuous.
    Furthermore, $\kappa^- \le \nK$ uniformly for some $\nK>0$.
    \label{assm:regularity}
\end{assumption}
The $C^\infty$ assumption on $A$ may appear strong, but is needed for technical reasons in some proofs (e.g. \cite[Lemma 15]{Wang2021}). Of course, the most common chosen choices $A\equiv 0$ or $A$ quadratic are $C^\infty$.

\begin{assumption}
    The support of $\kappa^-$ is bounded: the set
    \begin{equation*}
        \mathcal X := \{x\in \R^d: \tilde \kappa(x)\le 0\}
    \end{equation*}
    is a compact subset of $\R^d$.
    \label{assm:compactX}
\end{assumption}
\begin{remark} When we use Brownian motion as local dynamics, this is a weak condition, holding for instance when $\pi$ satisfies a suitable sub-exponential tail condition \citep{Wang2019TheoQSMC}.
\end{remark}
\correction{Henceforth, to reduce the notational burden and to aid readability, we assume that we are using Brownian motion for local dynamics, so $\Gamma$ reduces to Lebesgue measure. However the analysis extends to the more general reversible case.}

Thus, the sub-Markov semigroup corresponding to the diffusion $X$, killed at rate $\kappa^+$ can also be realised as a self-adjoint semigroup on $\mathcal L^2(\Gamma)$. 
Furthermore, there exists a transition sub-density $p^{\kappa^+}(t,x,y)$, as in \cite{Kolb2012}, following from the derivation of \cite{Demuth2000}: writing $T_1$ for the first killing event,
\begin{equation}
    \E_x[f(X_t) \mathbb I_{t < T_1}] = \int p^{\kappa^+} (t,x,y)f(y) \,\dif y.
    \label{eq:subdensity}
\end{equation}
From \cite{Demuth2000}, the function $(t,x,y)\mapsto p^{\kappa^+} (t,x,y)$ will be jointly continuous and symmetric in $x, y$.

\begin{assumption}
    The killing time $T_1$ has uniformly bounded expectation on $\cX$: $\sup_{x\in \cX}\E_x[T_1]<\infty.$
    \label{assm:finiteT1}
\end{assumption}
\begin{remark}
    This is a very weak assumption, since $\cX$ is a compact set. A much stronger condition will be satisfied---uniform bounded expectation over the \textit{entirety} of $\Rd$ for Brownian motion---if the killing rate satisfies
    \begin{equation*}
        \liminf_{\|x\|\to\infty }\tilde \kappa(x)>0,
    \end{equation*}
    which is the case when $\pi$ possesses a sub-exponential tail; see \cite{Wang2019TheoQSMC}.
\end{remark}
We shall also require the following elementary identity.
\begin{lemma}
    For any $x\in \cX$, we have the identity: $\E_x[T_1] = \int_0^\infty \dif t\int \dif y\,  p^{\kappa^+} (t,x,y).$
    \label{lem:ET1}
\end{lemma}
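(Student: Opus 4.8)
The plan is to combine the standard tail (layer-cake) representation of the expectation of a nonnegative random variable with the sub-density relation \eqref{eq:subdensity}. Since $T_1\ge 0$ almost surely under $\P_x$, I would start from
$$\E_x[T_1] = \E_x\left[\int_0^\infty \mathbb{I}_{t < T_1}\,\dif t\right] = \int_0^\infty \P_x[T_1 > t]\,\dif t,$$
where the interchange of expectation and the $\dif t$ integral is justified by Tonelli, the integrand being nonnegative.

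Next I would identify the integrand using \eqref{eq:subdensity}. Writing $\P_x[T_1 > t] = \E_x[\mathbb{I}_{t < T_1}]$, this is precisely the left-hand side of \eqref{eq:subdensity} evaluated at the constant test function $f\equiv 1$, so
$$\P_x[T_1 > t] = \E_x[\mathbb{I}_{t < T_1}] = \int p^{\kappa^+}(t,x,y)\,\dif y.$$
Substituting this into the previous display yields
$$\E_x[T_1] = \int_0^\infty\left(\int p^{\kappa^+}(t,x,y)\,\dif y\right)\dif t,$$
which is the asserted identity, the order of integration being irrelevant by Tonelli since $p^{\kappa^+}\ge 0$.

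The only point requiring care is that \eqref{eq:subdensity} is naturally stated for bounded (or compactly supported) $f$, whereas I apply it to $f\equiv 1$. This is handled by a routine monotone approximation: taking an increasing sequence $f_n = \mathbb{I}_{B(0,n)} \uparrow 1$ and applying the monotone convergence theorem to both sides of \eqref{eq:subdensity}, again using nonnegativity of $p^{\kappa^+}$ on the right. Finiteness of the resulting double integral is guaranteed for $x\in\cX$ by Assumption~\ref{assm:finiteT1}, although the identity holds in $[0,\infty]$ irrespective of finiteness. I do not anticipate any substantive obstacle: the lemma is essentially a direct consequence of \eqref{eq:subdensity} and the tail formula, and the measurability/integrability bookkeeping is entirely standard.
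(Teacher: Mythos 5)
Your proof is correct and is essentially identical to the paper's: both identify $\int p^{\kappa^+}(t,x,y)\,\dif y = \P_x(T_1 > t)$ via \eqref{eq:subdensity} with $f\equiv 1$ and then apply the tail formula for nonnegative random variables, the only difference being that you read the chain of equalities in the opposite direction and spell out the monotone-convergence step that the paper leaves implicit.
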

\begin{proof}
    We have $\int_0^\infty \dif t\int_{\cX} \dif y\,  p^{\kappa^+} (t,x,y) = \int_0^\infty \dif t\, \P_x(T_1 \ge t) = \E_x[T_1]$.
\end{proof}


\subsection{Restore process}

Recall the \textit{minimal regeneration distribution}, as in \eqref{eq:mu*}, which has a density function with respect to Lebesgue measure on $\Rd$, compactly supported on $\cX$, given by
\begin{equation*}
    \muMin( y) := \frac{1}{\CMinimum} \pi(y)\kappa^-(y) ,
\end{equation*}
where $\CMinimum := \int\pi(y)\kappa^-(y) \dif y $ is the normalizing constant.

We consider now running a Restore process $X$ with local dynamics \eqref{eq:SDE} described by infinitesimal generator $L$, regeneration rate $\kappa^+$ and regeneration distribution $\mu$. If $\mu=\muMin$, then $\pi$ will be the invariant distribution of $X$ under appropriate regularity conditions; see \cite{Wang2021}.

The goal of the adaptive Restore algorithm is to \textit{learn} $\muMin$ adaptively, by running an additional Poisson process with rate function $t \mapsto \kappa^-(X_t)$. These auxiliary arrival times will be used to construct the adaptive estimate of $\muMin$. Notationally, we will use letters $T,T_1, T_2, \dots$ to refer to regeneration times of the Restore process, which arrive with rate $\kappa^+(X_t)$, and $\muMinEvent, \muMinEvent_1, \muMinEvent_2, \dots$  to refer to the addition events which arrive with rate $\kappa^-(X_t)$. In particular, for a Restore process with local dynamics $L$, regeneration rate $\kappa^+$ and regeneration distribution $\mu$---abbreviated into Restore($L, \pkap, \mu)$---we have, $X_{T_i}\sim \mu$.

We have the following representation from \cite{Wang2021} of the invariant measure of Restore$(L,\kappa, \mu)$:
\begin{equation*}
    \Pi_{\kappa}(\mu)(\dif y) \propto \int \mu(\dif x)\int_0^\infty \dif t \,p^{\kappa}(t,x,y) \dif y.
\end{equation*}
In particular, we must therefore have the identity
\begin{equation}
    \pi(y) \propto \int \muMin(\dif x)\int_0^\infty \dif t\, p^{\kappa^+} (t,x,y).
    \label{eq:pi_mu*_rep}
\end{equation}


\subsection{Discrete-time system}

For now we imagine the rebirth distribution to be a fixed (but arbitrary) measure $\mu$ and consider the Restore($L, \pkap, \mu)$ process, with additional events $(\muMinEvent_i)_{i=1}^\infty$ at rate $\kappa^-(X_t)$. We will let $\E_x$ denote the expectation under the law of this Restore process $X$ initialised from $X_0 = x$. We are interested in studying the behaviour of the points $(X_{\muMinEvent_1}, X_{\muMinEvent_2},\dots)$.

Our first goal is to show that this sequence is in fact a Markov chain, and give an expression for its transition kernel. To reduce notational clutter, we will write $\muMinEvent:= \muMinEvent_1$, and $T:=T_1$ for the first $\kappa^-$ or regeneration events respectively:
\begin{equation*}
    \begin{split}
        \muMinEvent &:= \inf \left\{ s\ge 0: \int_0^s \kappa^-(X_u) \dif u\ge \xi \right\},\\
        T &:= \inf \left\{ s\ge 0: \int_0^s \kappa^+(X_u) \dif u\ge \xi' \right\},
    \end{split}
\end{equation*}
where $\xi,\xi' \sim \mathrm{Exp}(1)$ are independent of each other and of all other random variables.

\begin{lemma} \label{lemma:Y_markov_chain}
    Defining for each $i\in \mathbb N$, $Y_i := X_{\muMinEvent_i}$, the sequence $(Y_i)_{i=1}^\infty$ is a Markov chain on $\cX$.
    \label{lem:Y_isMC}
\end{lemma}
\begin{proof}
    This follows from the fact that the underlying Restore process $X$ is a strong Markov process, and the fact that the Poisson processes have independent exponential random variables.
\end{proof}

We now define a Markov \textit{sub}-kernel $\subKernel (x,\dif y)$ on $\cX$ which will be crucial to describing the transition kernel of the chain $Y$. The kernel is defined, for any integrable $f:\cX \to \R$, by
\begin{equation*}
    \subKernel f(x) := \E_x [f(X_\muMinEvent)\, \mathbb I_{\muMinEvent<T}].
\end{equation*}
We can then define $\probRegenBeforeMuMin: \cX \to [0,1]$ by
\begin{equation*}
    \probRegenBeforeMuMin(x) := 1- \subKernel 1(x) = 1- \P_x(\muMinEvent<T) = \P_x(T\le \muMinEvent).
\end{equation*}
We can also define a proper Markov kernel: for a measurable set $B\subset \mathcal X$,
\begin{equation*}
    \properKernel(x, B) := \frac{\subKernel(x,B)}{\subKernel(x,\cX)} = \E_x[\mathbb I_B(X_\muMinEvent)|\muMinEvent <T].
\end{equation*}

We need the following technical result; namely, we check \cite[Hypotheses H1, H2]{Benaim2018}.
\begin{lemma}
    $\subKernel$ is Feller, and defining the augmented kernel $\augKern$ on $\mathcal X \cup \{\partial\}$, where $\partial \notin\mathcal X$ represents an absorbing state,
    \begin{equation*}
        \augKern(x,\dif y) := \subKernel(x, \dif y) \mathbb I_{x \in \mathcal X} + \probRegenBeforeMuMin(x) \delta_\partial (\dif y) \mathbb I_{x \in \mathcal X} + \delta_\partial (\dif y) \mathbb I_{x \in \{\partial\}},
    \end{equation*}
    we have that $\partial$ is accessible for $\augKern$.
    \label{lem:H1H2}
\end{lemma}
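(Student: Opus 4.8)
The plan is to verify the two hypotheses separately: the Feller property of $\subKernel$ (H1) and accessibility of the absorbing state $\partial$ for $\augKern$ (H2).

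\textbf{Feller property.} First I would derive an explicit integral representation of $\subKernel$ in terms of a transition sub-density. The key observation is that, prior to the first regeneration time $T$, the Restore process coincides with the underlying diffusion, and that conditionally on the diffusion path the clocks $\muMinEvent$ and $T$ fire independently. Writing $p^{\pkap+\nkap}$ for the sub-density of the diffusion killed at the combined rate $\pkap+\nkap = |\tilde\kappa|$ --- which is jointly continuous and symmetric by the same argument of \cite{Demuth2000} used for $p^{\pkap}$, since $|\tilde\kappa|$ is continuous and $\nkap\le\nK$ --- one obtains, for bounded continuous $f$,
\[
\subKernel f(x) = \int_0^\infty \dif t \int_{\cX} \dif y\; \nkap(y)\, f(y)\, p^{\pkap+\nkap}(t,x,y),
\]
the $y$-integral being confined to $\cX$ because $\nkap$ is supported there. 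The survival factor $\exp(-\int_0^t \nkap(X_s)\dif s)$ attached to the event $\{\muMinEvent > t\}$ is precisely what upgrades the killing rate from $\pkap$ to $\pkap+\nkap$.

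Granting this representation, continuity of $x\mapsto \subKernel f(x)$ would follow by dominated convergence. For each fixed $t>0$ the inner integral is continuous in $x$ by joint continuity of the sub-density together with compactness of $\cX$; it then remains to dominate the $t$-integral uniformly over $x$ in a compact neighbourhood. I would use the bound $|\nkap(y) f(y) p^{\pkap+\nkap}(t,x,y)| \le \nK\|f\|_\infty\, p^{\pkap+\nkap}(t,x,y)$ and note that $\int_0^\infty\!\!\int p^{\pkap+\nkap}(t,x,y)\dif y\dif t = \E_x[\muMinEvent\wedge T]\le \E_x[T_1]$, which is uniformly bounded on $\cX$ by Assumption~\ref{assm:finiteT1}. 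I expect this to be the main obstacle: a finite integral does not by itself furnish a $t$-integrable dominating envelope valid along a whole sequence $x_n\to x$. I anticipate resolving it by bootstrapping the uniform bound $\sup_{\cX}\E_x[T_1]<\infty$, via Markov's inequality and the strong Markov property, into a uniform geometric tail estimate $\sup_{x\in\cX}\P_x(\muMinEvent\wedge T > t)\le C e^{-ct}$, which makes $\nK\|f\|_\infty\sup_x \P_x(\muMinEvent\wedge T > t)$ an integrable dominating function; this tail estimate is where the bulk of the technical work lies.

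\textbf{Accessibility of $\partial$.} Since $\augKern(x,\{\partial\}) = \probRegenBeforeMuMin(x) = \P_x(T\le \muMinEvent)$, it suffices to show $\probRegenBeforeMuMin(x)>0$ for every $x\in\cX$, which gives a one-step route to $\partial$. I would argue this by conditioning on the diffusion path: as the two clocks are conditionally independent given the path and $\P(\muMinEvent > t_1\mid\text{path})\ge e^{-\nK t_1}$ uniformly (because $\nkap\le\nK$), one gets $\probRegenBeforeMuMin(x) \ge \P_x(T\le t_1,\,\muMinEvent > t_1) \ge e^{-\nK t_1}\,\P_x(T\le t_1)$. It then remains to check $\P_x(T\le t_1)>0$, i.e.\ that a regeneration can occur by time $t_1$. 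This holds because $\{\tilde\kappa>0\}$ is a nonempty open set --- nonempty since otherwise $\pkap\equiv 0$ and the diffusion alone would have to be $\pi$-invariant, contradicting that $\pi$ is a genuine probability density whereas the diffusion's stationary measure $\Gamma$ is not --- and the irreducible, strictly-positive-density diffusion started from any $x$ reaches $\{\tilde\kappa>0\}$ and accumulates positive $\int_0^{t_1}\pkap(X_s)\dif s$ with positive probability. Combining these estimates yields $\probRegenBeforeMuMin(x)>0$, and hence $\partial$ is accessible for $\augKern$.
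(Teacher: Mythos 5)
Your overall architecture mirrors the paper's (very terse) proof: the Feller property is read off a sub-density representation of $\subKernel$, and accessibility of $\partial$ reduces to $\probRegenBeforeMuMin(x)=\P_x(T\le\muMinEvent)>0$ for all $x\in\cX$. Your accessibility half is sound — the conditional-independence bound $\probRegenBeforeMuMin(x)\ge e^{-\nK t_1}\P_x(T\le t_1)$ and positivity of $\P_x(T\le t_1)$ via nonemptiness of $\{\tilde\kappa>0\}$ and irreducibility of the diffusion is exactly the substance behind the paper's one-line remark (your nonemptiness argument should be phrased as a Liouville-type statement: $\tilde\kappa\le 0$ everywhere would make $\pi$ a positive, integrable superharmonic function, which cannot exist — not quite "the diffusion is $\pi$-invariant"). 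Your representation $\subKernel f(x)=\int_0^\infty\dif t\int_\cX\dif y\,\nkap(y)f(y)\,p^{\pkap+\nkap}(t,x,y)$ is also correct and is the right object to work with.

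The genuine gap is in the step you yourself flag as carrying the bulk of the work: the uniform geometric tail estimate $\sup_{x\in\cX}\P_x(\muMinEvent\wedge T>t)\le Ce^{-ct}$. This cannot be bootstrapped from Assumption~\ref{assm:finiteT1}, and it is false in the generality the paper allows. The strong-Markov bootstrap $\P_x(T>(n+1)t_0)\le\E_x\bigl[\mathbb I_{T>nt_0}\,\P_{X_{nt_0}}(T>t_0)\bigr]$ needs a nontrivial bound on $\P_y(T>t_0)$ at the location $y=X_{nt_0}$ of the surviving path, which is generically \emph{outside} $\cX$, where Assumption~\ref{assm:finiteT1} gives no control. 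Concretely, for the multivariate t-distribution of Section~\ref{sec:examples}, $\pkap(x)\to 0$ like $\|x\|^{-2}$ at infinity, so a path that escapes to a large distance before either clock rings then survives for a long time: the tail of $T$ is polynomial, not geometric, even though $\sup_{x\in\cX}\E_x[T]$ is finite. Fortunately, no tail estimate is needed. Split the time integral at a fixed $t_0>0$: on $[0,t_0]$ the integrand is bounded by $\nK\|f\|_\infty$ and dominated convergence applies on the finite horizon; for the tail, Chapman--Kolmogorov gives $\int_{t_0}^\infty\dif t\int_\cX\dif y\,\nkap(y) f(y)\,p^{\pkap+\nkap}(t,x,y)=\int \dif z\, p^{\pkap+\nkap}(t_0,x,z)\,\subKernel f(z)$, and since $\subKernel f$ is bounded by $\|f\|_\infty$ and $p^{\pkap+\nkap}(t_0,\cdot,\cdot)$ is jointly continuous and dominated by the Gaussian heat kernel, this term is continuous in $x$ (the killed semigroup is in fact strong Feller). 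With that repair your argument is complete and consistent with the paper's assertion.
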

\begin{proof}
    The Feller property holds by the representation \eqref{eq:subdensity}. Accessibility of $\partial$ is immediate, since started from anywhere, the diffusion path can (eventually) be killed.
\end{proof}

\begin{proposition}\label{prop:trans}
    The Markov chain $Y$ has transition kernel $\mcKern(x,\dif y)$ on $\cX$ given by
    \begin{equation*}
        \begin{split}
            \mcKern(x,\dif y) &= \subKernel(x,\dif y) + \probRegenBeforeMuMin(x) \frac{\mu \subKernel(\dif y)}{\mu \subKernel 1}, \\
            &= \left(1 - \probRegenBeforeMuMin(x) \right) \properKernel(x,\dif y) + \probRegenBeforeMuMin(x) \frac{\mu \subKernel(\dif y)}{\P_\mu(\muMinEvent<T)} .
        \end{split}
    \end{equation*}
\end{proposition}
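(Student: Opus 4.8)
The plan is to compute the one-step transition of $Y$ directly: fix $Y_i = x$, which by the strong Markov property underlying Lemma~\ref{lem:Y_isMC} lets us restart the Restore process afresh from $x$ at the addition time $\muMinEvent_i$, and then decompose the trajectory up to the next addition event $\muMinEvent_{i+1}$ according to the outcome of the \emph{first} race between the clocks $\muMinEvent$ and $T$. Starting from $x$, exactly one of two mutually exclusive things happens first: either the $\nkap$-clock rings before any regeneration ($\muMinEvent < T$), or a regeneration occurs first ($T \le \muMinEvent$), the latter having probability $\probRegenBeforeMuMin(x) = \P_x(T \le \muMinEvent)$.

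On the event $\{\muMinEvent < T\}$ the next addition event \emph{is} $\muMinEvent_{i+1}$ and $Y_{i+1} = X_\muMinEvent$, so by the very definition of the subkernel this branch contributes exactly $\subKernel(x,\dif y) = \E_x[\delta_{X_\muMinEvent}(\dif y)\,\mathbb I_{\muMinEvent<T}]$ to the transition law. On the complementary event $\{T\le \muMinEvent\}$ the process regenerates at time $T$ from $\mu$ before any addition is recorded. The crucial observation here is that at this regeneration \emph{both} clocks are fresh: the regeneration clock resets by the i.i.d.\ tour structure, while the $\nkap$-clock resets by the memorylessness of its driving exponential $\xi$—it has not fired, so conditionally on $\int_0^T \nkap(X_u)\dif u < \xi$ its residual threshold is again $\mathrm{Exp}(1)$, independent of the past. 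Hence, conditionally on $\{T\le\muMinEvent\}$, the process restarts from $X_T\sim\mu$ and the search for the next addition event begins completely anew from $\mu$.

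From this fresh start at $\mu$ the same dichotomy recurs, now governed by $\P_\mu(\muMinEvent<T)=\mu\subKernel 1$: with this probability the $\nkap$-clock wins and $Y_{i+1}$ is recorded with law $\mu\subKernel(\dif y)/\mu\subKernel 1$, and otherwise a further regeneration returns the process to $\mu$ with both clocks again fresh. By the strong Markov property these rounds are i.i.d., so the number of additional regenerations preceding the recorded addition is geometric, and summing over it gives
\begin{equation*}
    \probRegenBeforeMuMin(x)\sum_{k=0}^\infty (1-\mu\subKernel 1)^k\, \mu\subKernel(\dif y) = \probRegenBeforeMuMin(x)\,\frac{\mu\subKernel(\dif y)}{\mu\subKernel 1}.
\end{equation*}
Adding the two branches yields the first displayed expression for $\mcKern$; the second follows immediately on writing $\subKernel(x,\dif y) = (1-\probRegenBeforeMuMin(x))\properKernel(x,\dif y)$ and $\mu\subKernel 1 = \P_\mu(\muMinEvent<T)$.

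The main obstacle is the careful bookkeeping at regeneration times: one must justify rigorously that conditioning on $\{T\le\muMinEvent\}$ leaves the post-regeneration $\nkap$-clock memoryless and independent of the past, so that the successive rounds genuinely renew from $\mu$, and that the resulting geometric sum is legitimate. The latter requires $\mu\subKernel 1 = \P_\mu(\muMinEvent<T)>0$, i.e.\ that from $\mu$ an addition event has positive probability before regeneration; this holds because, by Assumption~\ref{assm:compactX} and continuity of $\nkap$, the diffusion reaches the interior of $\cX$ where $\nkap>0$ with positive probability before being killed at the finite rate $\pkap$ (Assumption~\ref{assm:finiteT1}), ensuring the denominator does not vanish and $\mcKern$ is a well-defined Markov kernel on $\cX$.
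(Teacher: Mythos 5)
Your proof is correct and follows essentially the same route as the paper: a first-race decomposition into $\{\muMinEvent<T\}$ and $\{T\le\muMinEvent\}$, using memorylessness of the exponential clock and the strong Markov property at regeneration, which the paper encodes as the recursion $\mcKern(x,\dif y) = \subKernel(x,\dif y) + \probRegenBeforeMuMin(x)\,\mu\mcKern(\dif y)$ and resolves implicitly, while you unroll it as an explicit geometric series. Your additional check that $\mu \subKernel 1 = \P_\mu(\muMinEvent<T)>0$ (so the denominator is well defined) is a worthwhile point the paper leaves tacit, but it does not change the argument.
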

\begin{proof}
    We have already established in Lemma \ref{lemma:Y_markov_chain} that $Y$ is indeed a Markov chain, and we denote its transition kernel by $\mcKern(x,\dif y)$. Its transition kernel satisfies the following relation: (noting that by continuity, $\P_x(\muMinEvent=T)=0$)
    \begin{equation}
        \begin{split}
        \mcKern(x,\dif y) = \properKernel(x,\dif y) \cdot \P_x(\muMinEvent<T)+ \mu \mcKern(\dif y) \cdot \P_x(\muMinEvent > T).
        \end{split}
        \label{eq:K_mu}
    \end{equation}
    This is because by the memoryless property of the exponential,
    \begin{equation*}
        \mathrm{Law}\left (\xi - \int_0^T \kappa^-(X_u)\dif y\bigg |\xi>\int_0^T \kappa^-(X_u)\dif u\right)= \mathrm{Exp}(1),
    \end{equation*}
    and by the strong Markov property, given $(\muMinEvent > T)$, the subsequent evolution of $X$ at time $T$ is equal in law to $(X|X_0 \sim \mu)$. 
    
    By recursion of \eqref{eq:K_mu}, we arrive at the desired conclusion.
\end{proof}


\subsection{Adaptive reinforced process}
We are interested in studying the limiting behaviour of
\begin{equation*}
    \mu_n := \frac{1}{n+1}\sum_{i=0}^n \delta_{Y_i},
\end{equation*}
where now the $(Y_i)$ are generated as follows:
\begin{equation}
    Y_{n+1}|(Y_0,Y_1,\dots,Y_n) \sim \mcKernDscrtMeas(Y_n,\cdot).
    \label{eq:Yn_reinforcing}
\end{equation}
This corresponds to our adaptive Restore algorithm (Algorithm~\ref{algo:adaptive_restore}), where we are learning an approximation to the minimal rebirth distribution. \correction{We note that here we have not included the $\mu_0$ term which appears in \eqref{eq:mu_t}, but our asymptotic convergence result will also imply convergence when such a term is included, which here would carry a weight of $a/(a+n)$.}

Our goal will be to show the almost sure weak convergence of $\mu_n \to \muMin$ as $n\to \infty$. We will utilise the approach of \cite{Benaim2018}.

In order to understand the limiting properties of the self-reinforcing process $(Y_i)$ in \eqref{eq:Yn_reinforcing}, we need to study the properties of the $\mcKern$ kernels. We write $\mathcal P(\cX)$ for the space of probability measures on $\cX$. We have the following useful representation.
\begin{lemma}
    For any bounded continuous $f$, we have for $x\in \cX$,
    \begin{equation*}
        \mathcal Af(x):=\sum_{n\ge 1} \subKernel^n (x,f) = \int_0^\infty \dif t \int \dif y\, p^{\kappa^+} (t,x,y) f(y)\kappa^-(y).
    \end{equation*}
    The nonnegative kernel $\mathcal A$ is also a bounded kernel on $\cX$: $\mathcal A 1(x)\le \|\mathcal A\|_\infty <\infty$. Furthermore, there exists $\delta>0$ such that for any $x\in \cX$, $\delta \le \mathcal A1(x)$. This implies Lipschitz continuity (with respect to total variation) of the map $\mathcal P(\cX) \to \mathcal P(\cX)$ given by
    \begin{equation*}
        \mu \mapsto \Pi_\mu := \frac{\mu \mathcal A}{\mu \mathcal A 1}.
    \end{equation*}
    \label{lem:repn_Kn}
\end{lemma}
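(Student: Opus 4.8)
The plan is to prove the four assertions in turn, with the series representation of $\mathcal A$ serving as the workhorse from which everything else follows.

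\textbf{The representation of $\mathcal A$.} First I would identify $\subKernel^n(x,f)$ probabilistically. By the strong Markov property of the Restore process at the stopping time $\muMinEvent_1$, together with the memorylessness of the two independent exponential clocks, iterating $\subKernel$ yields
\begin{equation*}
    \subKernel^n(x,f) = \E_x\!\left[f(X_{\muMinEvent_n})\,\mathbb I_{\muMinEvent_n < T}\right],
\end{equation*}
that is, the contribution of the $n$-th $\kappa^-$-event on the event that it---and hence all earlier ones, since $\muMinEvent_1<\dots<\muMinEvent_n$---occurs before the first regeneration $T=T_1$. Summing over $n$ gives the expected sum of $f$ over all $\kappa^-$-marks in the first tour,
\begin{equation*}
    \mathcal A f(x) = \E_x\!\left[\sum_{i:\,\muMinEvent_i<T} f(X_{\muMinEvent_i})\right].
\end{equation*}
Conditioning on the diffusion path (which before $T$ is just the unregenerated diffusion from $x$) and on $T$, the $\kappa^-$-marks form, independently of $T$, a Cox process of intensity $\kappa^-(X_t)$, so Campbell's formula converts the sum into $\int_0^T f(X_t)\kappa^-(X_t)\,\dif t$. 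Taking the expectation over $T$, whose conditional survival function is $\exp(-\int_0^t\kappa^+(X_s)\dif s)$, then over the path, and recognising the resulting Feynman--Kac average via the sub-density \eqref{eq:subdensity} (applied to $f\kappa^-$) produces the claimed identity.

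\textbf{Boundedness and the lower bound.} For the upper bound, take $f\equiv 1$ and use $\kappa^-\le\nK$ together with the fact that $\kappa^-$ is supported on $\cX$, so that
\begin{equation*}
    \mathcal A1(x)\le \nK\int_0^\infty\dif t\int p^{\kappa^+}(t,x,y)\,\dif y = \nK\,\E_x[T_1]
\end{equation*}
by Lemma~\ref{lem:ET1}, whence $\|\mathcal A\|_\infty\le \nK\sup_{x\in\cX}\E_x[T_1]<\infty$ by Assumption~\ref{assm:finiteT1}. For the uniform lower bound I would observe that $\mathcal A1=\sup_N\sum_{n=1}^N\subKernel^n1$ is an increasing supremum of continuous functions (Lemma~\ref{lem:H1H2} supplies the Feller property, hence each $\subKernel^n1$ is continuous), so $\mathcal A1$ is lower semicontinuous and therefore attains its infimum on the compact set $\cX$. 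It then suffices to show $\mathcal A1(x)>0$ pointwise: since the killed sub-density $p^{\kappa^+}(t,x,\cdot)$ is strictly positive for $t>0$ (positivity of the underlying elliptic, smooth-coefficient heat kernel, preserved by the positive Feynman--Kac factor) and $\kappa^->0$ on the nonempty open set $\{\tilde\kappa<0\}$ (nonempty because $\CMinimum=\int\pi\kappa^->0$), the integrand is positive on a set of positive $(t,y)$-measure. Hence $\delta:=\min_{x\in\cX}\mathcal A1(x)>0$.

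\textbf{Lipschitz continuity.} Finally, writing $\mu\mathcal A1\in[\delta,\|\mathcal A\|_\infty]$, for any $f$ with $\|f\|_\infty\le1$ I would use the quotient decomposition
\begin{equation*}
    \Pi_\mu(f)-\Pi_\nu(f)=\frac{(\mu-\nu)\mathcal Af}{\mu\mathcal A1}-\Pi_\nu(f)\,\frac{(\mu-\nu)\mathcal A1}{\mu\mathcal A1}.
\end{equation*}
Since $\|\mathcal A g\|_\infty\le\|\mathcal A\|_\infty$ for $\|g\|_\infty\le1$ gives $|(\mu-\nu)\mathcal Ag|\le\|\mathcal A\|_\infty\|\mu-\nu\|_\TV$, and using $\mu\mathcal A1\ge\delta$ and $|\Pi_\nu(f)|\le1$, this yields $\|\Pi_\mu-\Pi_\nu\|_\TV\le (2\|\mathcal A\|_\infty/\delta)\,\|\mu-\nu\|_\TV$, the desired Lipschitz bound. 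The main obstacle throughout is the uniform lower bound $\delta$ of the second step: the upper bound and the Lipschitz estimate are routine once $\mathcal A$ is represented, whereas controlling $\inf_{\cX}\mathcal A1$ requires marrying the soft lower semicontinuity from the Feller property with the strict positivity of the killed sub-density, which in turn leans on positivity of the free heat kernel and on $\cX$ having nonempty interior.
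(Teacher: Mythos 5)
Your proof is correct, but it reaches the series representation by a genuinely different route than the paper. The paper exploits the uniform bound $\kappa^-\le\nK$ from Assumption~\ref{assm:regularity} to dominate the $\kappa^-$-arrivals by a homogeneous rate-$\nK$ Poisson process: it computes $\subKernel_{\nK}^n(x,f)$ explicitly using the $\mathrm{Gamma}(n,\nK)$ law of the $n$-th homogeneous arrival, sums the Gamma densities in closed form to get $\nK\int_0^\infty\dif t\int\dif y\,p^{\kappa^+}(t,x,y)f(y)$, and only then transfers to $\subKernel$ via Poisson thinning, i.e.\ $\sum_n \subKernel^n(x,f)=\sum_n \subKernel_{\nK}^n\bigl(x, f\cdot\kappa^-/\nK\bigr)$. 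You instead identify $\subKernel^n(x,f)=\E_x[f(X_{\muMinEvent_n})\mathbb I_{\muMinEvent_n<T}]$ via the strong Markov property, sum to get the expected sum of $f$ over all marks in a tour, and apply Campbell's formula for the conditional Cox process, which collapses the computation into a single Feynman--Kac average; this avoids the Gamma-density bookkeeping entirely and is arguably more transparent, at the modest cost of needing care with the conditional independence of the marks and $T$ given the path (which you do address). For the remaining claims the comparison reverses: the paper is terse, asserting the lower bound "from compactness of $\cX$ and continuity and positivity of $p^{\kappa^+}$" and deferring the Lipschitz property to \citeauthor{Benaim2018} (2018, Proposition 4.5), whereas you supply self-contained arguments --- lower semicontinuity of $\mathcal A1$ as an increasing supremum of continuous functions (via the Feller property of Lemma~\ref{lem:H1H2}) plus pointwise positivity for the bound $\delta>0$, and the standard quotient decomposition giving the explicit constant $2\|\mathcal A\|_\infty/\delta$. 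Two small points worth noting: your positivity argument implicitly requires $\{\tilde\kappa<0\}\neq\emptyset$ (equivalently $\CMinimum>0$), which the paper also leaves implicit in defining $\muMin$ as a probability distribution, and you rightly flag it; and in the representation step, for signed bounded $f$ one should first argue for $f\ge 0$ by Tonelli and then split $f=f^+-f^-$ using the finiteness established by the $f\equiv 1$ case --- a routine gap shared by both proofs.
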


\begin{proof}
    Since we are imposing in Assumption \ref{assm:regularity} that $\kappa^- \le \nK$ uniformly, the law of the sequence of arrivals $\muMinEvent_1, \muMinEvent_2, \dots$ is absolutely continuous with respect to the law of a homogeneous Poisson process $\muMinDomEvent_1, \muMinDomEvent_2, \dots$ of rate $\nK$, which is independent of $X$ and the regenerations. 
    
    Now, we consider
    \begin{equation*}
        \subKernel_{\nK}(x,f):= \E_x[f(X_{\muMinDomEvent_1}) \mathbb I_{\muMinDomEvent_1 < T_1}]= \int_0^\infty \dif t \,\nK\mathrm{e}^{-\nK t}\int \dif y \, p^{\kappa^+}(t,x,y) f(y).
    \end{equation*}
    Here we have made use of the subdensity (14).
    
    Now, using the fact that $ \muMinDomEvent_n \stackrel{\mathrm d}{=} \text{Gamma}(n,\nK)$, or by direct integration, we obtain
    \begin{equation*}
        \subKernel^n_{\nK}(x,f) = \int_0^\infty \dif t \,\frac{(\nK)^n t^{n-1}\mathrm{e}^{-\nK t}}{(n-1)!}\int \dif y\, p^{\kappa^+}(t,x,y) f(y). 
    \end{equation*}
    Therefore,
    \begin{equation*}
        \begin{split}
            \sum_{n\ge 1} \subKernel^n_{\nK} (x,f) &= \int_0^\infty \dif t\, \sum_{n \ge 1}\frac{(\nK)^n t^{n-1}\mathrm{e}^{-\nK t}}{(n-1)!} \int \dif y \, p^{\kappa^+}(t,x,y) f(y), \\
            &= \nK \int_0^\infty \dif t\int \dif y\, p^{\kappa^+}(t,x,y) f(y).
        \end{split}
    \end{equation*}
    We note that this is a finite measure by Lemma \ref{lem:ET1} and Assumption \ref{assm:finiteT1}; we have in fact that $\|\mathcal A\|_\infty \le \nK \sup_{x\in \cX} \E_{x}[T_1]<\infty $.
    
    Now consider $\sum_{n \ge 1} \subKernel^n (x,f)$. By Poisson thinning, we have the representation
    \begin{equation*}
        \begin{split}
            \sum_{n\ge 1} \subKernel^n (x,f) &= \sum_{n\ge 1}Q_{\nK}^n\left (x, f\cdot \frac{\kappa^-}{\nK}\right ), \\
            &= \int_0^\infty \dif t\int \dif y\, p^{\kappa^+}(t,x,y) f(y)\kappa^- (y).
        \end{split}
    \end{equation*} 
    
    The final point follows straightforwardly from compactness of $\cX$ and continuity and positivity of $p^{\kappa^+}(t,x,y)$, and Lipschitz continuity follows similarly to the proof of \citeauthor{Benaim2018} (2018, Proposition 4.5): $\|\Pi_\nu-\Pi_\mu \|_\TV\le 2\|\mathcal A\|/\delta$.
\end{proof}

\begin{proposition}
    Given a fixed probability measure $\mu$ on $\cX$, the invariant distribution of the kernel $\mcKern$ is proportional to $\mu \mathcal A$, where $\mathcal A$ is the kernel $\sum_{n\ge 1}\subKernel^n$.
    \label{prop:propor}
\end{proposition}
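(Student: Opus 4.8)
The plan is to exhibit the candidate invariant measure explicitly and verify left-invariance by a direct computation, rather than through any abstract existence argument. Concretely, I would set $\nu:=\mu\mathcal A$ and show $\nu\mcKern=\nu$. Since $\nu$ is a finite measure---its total mass $\mu\mathcal A1$ is finite because $\mathcal A1\le\|\mathcal A\|_\infty<\infty$ by Lemma~\ref{lem:repn_Kn}---normalising then gives the invariant distribution as $\nu/\nu(\cX)$, which is proportional to $\mu\mathcal A$ as claimed. Throughout, measures act on the left and functions on the right.

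The computation rests on one algebraic identity, the resolvent-type relation $\mathcal A\subKernel=\mathcal A-\subKernel$, which I would read off immediately from the Neumann series $\mathcal A=\sum_{n\ge1}\subKernel^n$ by reindexing $\sum_{n\ge1}\subKernel^{n+1}=\sum_{n\ge2}\subKernel^n$. Acting with $\mu$ on the left gives $\nu\subKernel=\nu-\mu\subKernel$, and evaluating this measure identity at the constant function $1$, together with $\probRegenBeforeMuMin=1-\subKernel1$, yields the second identity I need, namely $\nu\probRegenBeforeMuMin=\nu1-\nu\subKernel1=\mu\subKernel1$.

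With these two identities in hand, I would substitute into the transition kernel from Proposition~\ref{prop:trans}, $\mcKern(x,\dif y)=\subKernel(x,\dif y)+\probRegenBeforeMuMin(x)\,\mu\subKernel(\dif y)/\mu\subKernel1$. Testing against any bounded measurable $f$,
\begin{equation*}
    \nu\mcKern f=\nu\subKernel f+(\nu\probRegenBeforeMuMin)\frac{\mu\subKernel f}{\mu\subKernel1}=(\nu f-\mu\subKernel f)+(\mu\subKernel1)\frac{\mu\subKernel f}{\mu\subKernel1}=\nu f,
\end{equation*}
the two $\mu\subKernel f$ contributions cancelling exactly. This gives $\nu\mcKern=\nu$ and, after normalisation, completes the argument.

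I do not expect a substantive obstacle. The only delicate point is that $\subKernel$ is sub-Markov (non-conservative), so the rearrangements of the operator series must be justified: this is precisely what Lemma~\ref{lem:repn_Kn} supplies, since uniform boundedness of $\mathcal A$ both makes the Neumann series converge and ensures $\nu=\mu\mathcal A$ is a finite measure, while $\mu\subKernel1=\P_\mu(\muMinEvent<T)>0$ keeps the normalisation well defined.
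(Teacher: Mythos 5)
Your proof is correct and is essentially the paper's own argument: both verify left-invariance of $\nu = \mu\mathcal A$ directly against the decomposition of $\mcKern$ from Proposition~\ref{prop:trans}, using finiteness of $\mu\mathcal A$ from Lemma~\ref{lem:repn_Kn} and the telescoping identity $\nu(\probRegenBeforeMuMin) = \mu\subKernel 1 = \P_\mu(\muMinEvent<T)$ coming from $\probRegenBeforeMuMin = 1 - \subKernel 1$. The only difference is cosmetic: you isolate the reindexing step as the named relation $\mathcal A\subKernel = \mathcal A - \subKernel$ before substituting, whereas the paper carries out the same shift of the series inline.
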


\begin{proof}
    First, we have seen from Lemma \ref{lem:repn_Kn} that $\mu \sum_{n\ge 1} \subKernel^n$ is a finite measure. We have the following direct calculation:
    \begin{equation*}
        \begin{split}
            \mu \sum_{n\ge 1} \subKernel^n \mcKern &= \mu \sum_{n\ge 1} \subKernel^n \left (\subKernel + \probRegenBeforeMuMin \frac{\mu \subKernel}{\P_\mu(\muMinEvent<T)}\right ), \\
            &=\mu \sum_{n\ge 1} \subKernel^{n+1} + \left(\mu \sum_{n\ge 1} \subKernel^n \probRegenBeforeMuMin \right) \P_\mu^{-1}(\muMinEvent <T) \mu \subKernel.
        \end{split}
    \end{equation*}
    Since $\probRegenBeforeMuMin = 1 - \subKernel 1$, it follows that
    \begin{equation*}
        \mu \sum_{n \ge 1} \subKernel^n \probRegenBeforeMuMin = \mu \subKernel 1 = \P_\mu(\muMinEvent <T),
    \end{equation*}
    and hence, as desired:
    \begin{equation*}
        \mu \sum_{n\ge 1} \subKernel^n \mcKern = \mu \sum_{n \ge 1} \subKernel^{n+1} + \mu \subKernel = \mu \sum_{n \ge 1} \subKernel^n.
    \end{equation*}
\end{proof}


\subsection{Limiting ODE flow and fixed point analysis}
\label{sec:ODE}
The limiting flow can be defined just as in \citeauthor{Benaim2018} (2018, Section 5), since we have the appropriate assumptions in force; Lemma~\ref{lem:H1H2} and also the Lipschitz property of $\mu \mapsto \Pi_\mu$, Lemma~\ref{lem:repn_Kn}. In other words, we also have, from \citeauthor{Benaim2018} (2018, Proposition 5.1), an injective semi-flow $\Phi$ on $\mathcal P(\cX)$ such that $t\mapsto \Phi_t(\mu)$ is the unique weak solution to
\begin{equation*}
    \dot \mu_t = -\mu_t +\Pi_{\mu_t},\quad \mu_0 = \mu.
\end{equation*}
We need to check global asymptotic stability, and to do this we will follow the approach in \cite{Wang2020}. In particular, we need to identify the eigenfunctions of $\mathcal A$.

\begin{proposition}\label{prop:minrebirth}
    We have that, for $\muMin$ the minimal rebirth distribution, $\muMin \A \propto \muMin$, and defining $\varphi := \pi|_{\cX}$ to be the restriction of $\pi$ to $\cX$, $\A \varphi = \beta \varphi$, where $\beta := \CMinimum \E_{\muMin}[T_1]$.
\end{proposition}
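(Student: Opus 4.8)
The plan is to use the explicit Green's-function representation of $\A$ from Lemma~\ref{lem:repn_Kn} together with the invariant-measure identity \eqref{eq:pi_mu*_rep} and the symmetry of the killed sub-density. Write
\begin{equation*}
    G(x,y) := \int_0^\infty p^{\kappa^+}(t,x,y)\,\dif t,
\end{equation*}
so that Lemma~\ref{lem:repn_Kn} reads $\A f(x) = \int G(x,y) f(y)\kappa^-(y)\,\dif y$, and the associated kernel is $\A(x,\dif y) = G(x,y)\kappa^-(y)\,\dif y$, automatically supported on $\cX$ since $\kappa^-$ is. Two facts will drive everything: first, $G$ is symmetric, $G(x,y)=G(y,x)$, inherited from the stated symmetry of $(t,x,y)\mapsto p^{\kappa^+}(t,x,y)$; second, \eqref{eq:pi_mu*_rep} says precisely that $\int \muMin(\dif x)\,G(x,y) = c\,\pi(y)$ for some constant $c$. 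I will also recall from \eqref{eq:mu*} that $\muMin(\dif y)\propto \pi(y)\kappa^-(y)\,\dif y$ with normalizer $\CMinimum$.

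For the left eigenmeasure, I would compute directly
\begin{equation*}
    \muMin\A(\dif y) = \Big(\int \muMin(\dif x)\,G(x,y)\Big)\kappa^-(y)\,\dif y = c\,\pi(y)\kappa^-(y)\,\dif y = c\CMinimum\,\muMin(\dif y),
\end{equation*}
where the middle equality is exactly \eqref{eq:pi_mu*_rep} and the last is \eqref{eq:mu*}. This gives $\muMin\A\propto\muMin$ (in fact with proportionality constant $\beta := c\CMinimum$, the same as below). For the right eigenfunction, evaluate for $x\in\cX$, using $\varphi=\pi|_\cX$ and $\pi\kappa^-=\CMinimum\muMin$,
\begin{equation*}
    \A\varphi(x) = \int G(x,y)\,\pi(y)\kappa^-(y)\,\dif y = \CMinimum\int G(x,y)\,\muMin(\dif y) = \CMinimum\int G(y,x)\,\muMin(\dif y),
\end{equation*}
the last step being the symmetry of $G$. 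Applying \eqref{eq:pi_mu*_rep} once more (with $x$ now the free variable) gives $\int G(y,x)\,\muMin(\dif y)=c\,\pi(x)$, hence $\A\varphi(x)=c\CMinimum\,\varphi(x)$ on $\cX$, i.e. $\A\varphi=\beta\varphi$ with $\beta=c\CMinimum$.

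It then remains to identify $c=\E_{\muMin}[T_1]$. I would integrate the identity $\int \muMin(\dif x)\,G(x,y)=c\,\pi(y)$ over $y\in\Rd$: the right-hand side yields $c$ since $\pi$ is a density, while Fubini and Lemma~\ref{lem:ET1} give
\begin{equation*}
    \int \dif y\int \muMin(\dif x)\,G(x,y) = \int \muMin(\dif x)\,\E_x[T_1] = \E_{\muMin}[T_1].
\end{equation*}
Thus $c=\E_{\muMin}[T_1]$ and $\beta=\CMinimum\E_{\muMin}[T_1]$, as claimed. The steps are short, so the only real care needed is bookkeeping: confirming that the same constant $c$ appears in both applications of \eqref{eq:pi_mu*_rep} (it is literally the same identity with the dummy variable relabelled), and justifying the Fubini interchange together with finiteness of the integrals. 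The latter is exactly what is guaranteed by the boundedness of $\A$ in Lemma~\ref{lem:repn_Kn} and the finite expected killing time in Lemma~\ref{lem:ET1}, so I expect the main obstacle to be purely one of careful integrability bookkeeping rather than anything conceptual.
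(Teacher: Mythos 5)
Your proposal is correct and follows essentially the same route as the paper's proof: both rely on the representation of $\A$ from Lemma~\ref{lem:repn_Kn}, the invariance identity \eqref{eq:pi_mu*_rep} for the left eigenmeasure, and Tonelli plus the symmetry of $p^{\kappa^+}(t,x,y)$ together with Lemma~\ref{lem:ET1} for the right eigenfunction. Your only addition is that you spell out explicitly (by integrating over the free variable) why the proportionality constant in \eqref{eq:pi_mu*_rep} equals $\E_{\muMin}[T_1]$, a step the paper leaves implicit in its final line.
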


\begin{proof}
    We directly calculate,
    \begin{equation*}
        \begin{split}
            \muMin \A( y) = \int \muMin (\dif x) \int \dif t \,p^{\kappa^+}(t,x,y) \kappa^-(y)\propto \pi(y)\kappa^-(y) ,
        \end{split}
    \end{equation*}
    since $\int \muMin (\dif x) \int \dif t \,p^{\kappa^+}(t,x,y)\propto \pi(y)$, because the invariant distribution of Restore($L, \kappa^+,\muMin)$ is $\pi$; see equation \eqref{eq:pi_mu*_rep}. Now for the right eigenfunction, we use Tonelli's theorem and symmetry of $p^{\kappa^+}(t,x,y)$ with respect to $x,y$:
    \begin{equation*}
        \begin{split}
            \int_0^\infty \dif t\int_\cX \dif y\, p^{\kappa^+}(t,x,y) \kappa^-(y)\pi(y)&= \int \dif y \, \pi(y) \kappa^-(y)\int_0^\infty \dif t\, p^{\kappa^+}(t,x,y), \\
            &= \int \dif y\, \muMin (y) \CMinimum \int_0^\infty  p^{\kappa^+}(t,y,x) \dif t, \\
            &= \CMinimum \pi(x)\E_{\muMin}[T_1].
        \end{split}
    \end{equation*}
\end{proof}

Given the preceding results, we can now conclude the following.

\begin{proposition}
    We have that $\muMin$ is a global attractor for the semi-flow $\Phi$: we have convergence $\Phi_t(\mu) \to \muMin$ uniformly in $\mu$ in total variation distance.
    \label{prop:global_attractor}
\end{proposition}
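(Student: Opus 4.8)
The plan is to exploit the fact, already recorded in Proposition~\ref{prop:minrebirth}, that $\muMin$ is a fixed point of the map $\mu\mapsto \Pi_\mu$ (since $\muMin\A\propto\muMin$ gives $\Pi_{\muMin}=\muMin$), so that $\muMin$ is an equilibrium of the semi-flow $\Phi$. To upgrade this to global attraction I would reduce the nonlinear normalized flow to the normalization of a \emph{linear} semigroup generated by $\A$, and then read off convergence from the Perron--Frobenius spectral structure of $\A$ identified in Proposition~\ref{prop:minrebirth}.

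First I would observe that the right-hand side of the flow factors as
\begin{equation*}
    -\mu_t + \Pi_{\mu_t} = -\mu_t + \frac{\mu_t \A}{\mu_t \A 1} = \frac{1}{\mu_t \A 1}\Big(\mu_t \A - (\mu_t \A 1)\,\mu_t\Big).
\end{equation*}
Thus $\Phi$ shares its orbits with the replicator-type flow $\dot\nu_s = \nu_s\A - (\nu_s\A 1)\nu_s =: F(\nu_s)$, the two differing only by the strictly positive bounded time-change factor $\mu_t\A 1\in[\delta,\|\A\|_\infty]$ supplied by Lemma~\ref{lem:repn_Kn}. A direct computation shows that the normalization $\hat m_s := m_s/m_s(1)$ of the linear flow $\dot m_s = m_s\A$, i.e.\ $m_s = \mu\,e^{s\A}$, solves $\dot{\hat m}_s = F(\hat m_s)$; hence by uniqueness $\nu_s = \mu e^{s\A}/(\mu e^{s\A}1)$. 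Writing the new clock $s(t) := \int_0^t (\mu_r\A 1)^{-1}\dif r$, which satisfies $s(t)\ge t/\|\A\|_\infty\to\infty$, one obtains $\Phi_t(\mu)=\nu_{s(t)}$, so it suffices to prove that $\mu e^{s\A}/(\mu e^{s\A}1)\to\muMin$ as $s\to\infty$, uniformly in $\mu\in\mathcal P(\cX)$.

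This last statement is a Perron--Frobenius assertion for the nonnegative compact kernel $\A$, whose density $a(x,y)=\big(\int_0^\infty p^{\kappa^+}(t,x,y)\dif t\big)\kappa^-(y)$ is continuous on $\cX\times\cX$ by the heat-kernel results of \cite{Demuth2000}. Krein--Rutman/Jentzsch theory then yields a simple dominant eigenvalue equal to the spectral radius, with a strictly positive eigenfunction and a spectral gap; the positivity of the eigenfunction $\varphi=\pi|_{\cX}$ from Proposition~\ref{prop:minrebirth} certifies that its eigenvalue $\beta=\CMinimum\E_{\muMin}[T_1]$ is indeed the spectral radius, with left eigenmeasure $\muMin$. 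Decomposing $\mu e^{s\A} = e^{s\beta}\big(\mu(\varphi)/\muMin(\varphi)\,\muMin + O(e^{-s\rho})\big)$, where $\rho>0$ is the gap, and dividing by the total mass gives $\nu_s = \muMin + O(e^{-s\rho})$. The error is uniform in $\mu$ because $\mu(\varphi)\ge\min_{\cX}\varphi>0$ for every $\mu\in\mathcal P(\cX)$; combined with $s(t)\ge t/\|\A\|_\infty$ this delivers $\Phi_t(\mu)\to\muMin$ in total variation, uniformly (indeed exponentially) in $\mu$.

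The main obstacle is the spectral-gap step: one must show that $\A$ is genuinely primitive and that its spectral radius is an isolated, simple eigenvalue strictly dominating the rest of the spectrum. The delicate point is that $\kappa^-$ vanishes on $\partial\cX=\{\tilde\kappa=0\}$, so $a(x,y)$ is not strictly positive everywhere and Jentzsch's theorem does not apply verbatim; instead one argues irreducibility from the strict positivity of the diffusion Green function $\int_0^\infty p^{\kappa^+}(t,x,y)\dif t$ on the interior of $\cX$ (so that some power of $\A$ has a strictly positive kernel), following \cite{Wang2020}. Once the gap is secured, the time-change reduction together with the eigenpairs already supplied by Proposition~\ref{prop:minrebirth} make the remaining steps routine.
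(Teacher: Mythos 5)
Your proposal takes essentially the same route as the paper's own proof, which is deferred wholesale to \citeauthor{Wang2020} (2020, Theorem 3.6): represent the semi-flow as a time-changed normalization of the linear semigroup $e^{s\A}$, with the time change controlled by the bounds $\delta \le \A 1 \le \|\A\|_\infty$ of Lemma~\ref{lem:repn_Kn} (exactly the ingredient the paper's one-line proof highlights), then conclude by a Perron--Frobenius/spectral-gap argument, with uniformity in $\mu$ coming from $\mu(\varphi)\ge \min_{\cX}\varphi > 0$ on the compact set $\cX$. The fixed-point identification, the replicator/time-change computation, and the spectral decomposition using the eigenpairs of Proposition~\ref{prop:minrebirth} are all correct, and they are the content of the argument the paper invokes.

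However, two technical claims in your final paragraph fail as stated, and they sit precisely at the step you identify as the crux. First, $a(x,y)=\bigl(\int_0^\infty p^{\kappa^+}(t,x,y)\,\dif t\bigr)\kappa^-(y)$ is \emph{not} continuous on $\cX\times\cX$ when $d\ge 2$: integrating the heat kernel over all $t>0$ produces the Green-function singularity on the diagonal (of order $|x-y|^{2-d}$ for $d\ge 3$, logarithmic for $d=2$), so compactness of $\A$ cannot be read off from continuity of the kernel; it must come instead from the theory of weakly singular integral operators. Second, no power of $\A$ can have a strictly positive kernel: every power of $\A$ carries the factor $\kappa^-(y)$ in its rightmost variable, and $\kappa^-$ vanishes on $\{\tilde\kappa=0\}\cap\cX$, which contains $\partial\cX$ by continuity of $\tilde\kappa$; if that set has positive Lebesgue measure, $\A$ is genuinely reducible on $L^2(\cX,\dif y)$, since functions supported there are annihilated by $\A$. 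The clean repair is to move the degeneracy into the reference measure: view $\A$ as an operator on $L^2\bigl(\cX,\kappa^-(y)\,\dif y\bigr)$, where its kernel is $G(x,y)=\int_0^\infty p^{\kappa^+}(t,x,y)\,\dif t$, symmetric (by symmetry of $p^{\kappa^+}$) and positive almost everywhere; there, self-adjointness together with positivity-improving compactness (Jentzsch/Krein--Rutman) yields the simple dominant eigenvalue $\beta$ with a gap, and nothing is lost by the restriction because $\kappa^-\,\dif y$ is exactly the measure charged by $\muMin$ and by the chain $(Y_i)$. With this repair your argument is sound and coincides with the proof the paper cites.
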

\begin{proof}
     Since we have obtained uniform upper and lower bounds $0 < \delta \le \A 1(x)\le \nK \sup_{x\in \cX}\mathbb E_x[T_1]$ from Lemma~\ref{lem:repn_Kn}, the proof is identical to the proof of \citeauthor{Wang2020} (2020, Theorem 3.6), and hence for brevity here we only provide a sketch. The semiflow $\Phi_t(\mu)$ can be realised as a time-change of the operator $\tilde \Phi_t(\mu):= \mu \exp(t\mathcal A)/\mu \exp(t\mathcal A)1$. Uniform convergence of $\tilde \Phi_t(\mu)$ to $\mu^+$ can then be shown directly and straightforwardly using a drift condition, since the state space is compact, which carries over to $\Phi_t(\mu)$ because of our aforementioned uniform upper and lower bounds.
\end{proof}


\subsection{Asymptotic pseudo-trajectories}
We secondly need to demonstrate that our trajectories $(n\mapsto \mu_n)$, once suitably embedded in continuous time, are an asymptotic pseudo-trajectory for the semi-flow $\Phi$ defined in Section~\ref{sec:ODE}.

The key technical challenge to establishing this is to prove an analogue of \citeauthor{Benaim2018} (2018, Lemma 6.2) in our setting. Once that is in place, everything else follow identically from \citeauthor{Benaim2018} (2018, Section 6).

We need the following Lipschitz property, where the total variation norm for a signed measure $\nu$ on $\mathcal X$ is defined as
\begin{equation*}
    \|\nu\|_\TV := \sup \{|\nu(f)|: f: \mathcal X \to \R \text{ bounded measurable, }\|f\|_\infty \le 1 \}.
\end{equation*}
\begin{lemma}
\label{lemma:lip}
For some $C_\mathrm L>0$, for probability measures $\mu, \nu \in \mathcal P(\cX)$ and $j\in \mathbb N$,
\begin{equation*}
        \sup_{\alpha \in \mathcal P(\cX)}\| \alpha \mcKern^j -\alpha \subKernel_\nu^j \|_{\TV}\le C_\mathrm L 2^j \|\mu-\nu\|_\TV, 
\end{equation*}
    and for each bounded function $f$, 
    \begin{equation*}
        \sup_{x \in\cX}\| \mcKern^j(x,f) - \subKernel_\nu^j(x,f) \|_{\TV}\le C_\mathrm L 2^j\|f\|_\infty \|\mu-\nu\|_\TV.
    \end{equation*}
\end{lemma}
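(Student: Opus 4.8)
The plan is to exploit the explicit structure of the one-step kernel from Proposition~\ref{prop:trans} and then telescope over the $j$ steps, following \citeauthor{Benaim2018} (2018, Lemma 6.2). Throughout, $\subKernel_\nu$ denotes the kernel $\mcKern$ of Proposition~\ref{prop:trans} with the rebirth distribution $\mu$ replaced by $\nu$, and I write $m_\mu := \mu\subKernel/(\mu\subKernel 1)$ for the normalized rebirth measure. Two structural facts drive everything. First, $\mcKern$ is a \emph{proper} Markov kernel: since $\probRegenBeforeMuMin(x) = 1 - \subKernel 1(x)$, we have $\mcKern 1(x) = \subKernel 1(x) + \probRegenBeforeMuMin(x) = 1$, and likewise for $\subKernel_\nu$; hence both are TV non-expansive, so that applying them on either side of a signed measure never increases its total variation. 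Second, $\mcKern$ and $\subKernel_\nu$ differ only in their rebirth component, giving the rank-one difference
\begin{equation*}
    (\mcKern - \subKernel_\nu)(x,\dif y) = \probRegenBeforeMuMin(x)\,(m_\mu - m_\nu)(\dif y).
\end{equation*}

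The key per-step estimate is therefore Lipschitz continuity in total variation of the normalization map $\mu\mapsto m_\mu$. Writing
\begin{equation*}
    m_\mu - m_\nu = \frac{(\mu-\nu)\subKernel}{\mu\subKernel 1} - m_\nu\,\frac{(\mu-\nu)\subKernel 1}{\mu\subKernel 1}
\end{equation*}
and using that $\subKernel$ is sub-Markov (so $\|(\mu-\nu)\subKernel\|_\TV \le \|\mu-\nu\|_\TV$ and $|(\mu-\nu)\subKernel 1|\le \|\mu-\nu\|_\TV$) together with $\|m_\nu\|_\TV = 1$, I obtain $\|m_\mu - m_\nu\|_\TV \le 2\|\mu-\nu\|_\TV/(\mu\subKernel 1)$. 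What remains is a lower bound on the denominator $\mu\subKernel 1 = \P_\mu(\muMinEvent < T)$, uniform over $\mu\in\mathcal P(\cX)$. Since $\mu\subKernel 1 \ge \inf_{x\in\cX}\subKernel 1(x)$, it suffices to bound $\subKernel 1$ below on $\cX$; applying the identity $\A = \subKernel + \subKernel\A$ to the constant function $1$ gives $\A 1(x) = \subKernel 1(x) + \subKernel(\A 1)(x) \le (1+\|\A\|_\infty)\,\subKernel 1(x)$, whence $\subKernel 1(x) \ge \delta/(1+\|\A\|_\infty) > 0$ by the two-sided bounds of Lemma~\ref{lem:repn_Kn}. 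Thus $\mu\mapsto m_\mu$ is Lipschitz in TV with a constant of order $1/\delta$; this is the source of the constant $C_\mathrm L = 2/\delta$ recorded in the statement.

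Finally I telescope
\begin{equation*}
    \alpha\mcKern^j - \alpha\subKernel_\nu^{\,j} = \sum_{k=0}^{j-1}(\alpha\mcKern^k)(\mcKern - \subKernel_\nu)\,\subKernel_\nu^{\,j-1-k}.
\end{equation*}
For each $k$, $\rho := \alpha\mcKern^k$ is a probability measure, so $\rho(\mcKern - \subKernel_\nu) = (\rho\probRegenBeforeMuMin)(m_\mu - m_\nu)$ has total variation at most $\|m_\mu - m_\nu\|_\TV$ (the weight $\rho\probRegenBeforeMuMin = \int\rho(\dif x)\probRegenBeforeMuMin(x)$ lying in $[0,1]$), and the subsequent Markov kernel $\subKernel_\nu^{\,j-1-k}$ does not increase it. Summing the $j$ terms and inserting the per-step bound yields the first inequality, $\sup_\alpha\|\alpha\mcKern^j - \alpha\subKernel_\nu^{\,j}\|_\TV \le C_\mathrm L\,2^j\|\mu-\nu\|_\TV$; here the telescoping in fact gives a sharper linear-in-$j$ dependence, the factor $2^j$ being the generous accumulation bound inherited from \citeauthor{Benaim2018} (2018, Lemma 6.2). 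The second inequality is then immediate on taking $\alpha = \delta_x$ and using $|\sigma(f)|\le \|\sigma\|_\TV\|f\|_\infty$, since $\mcKern^j(x,f) - \subKernel_\nu^{\,j}(x,f) = (\delta_x\mcKern^j - \delta_x\subKernel_\nu^{\,j})(f)$.

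The step I expect to be the main obstacle is the uniform lower bound on the normalizing denominator $\mu\subKernel 1$: the Lipschitz constant of the rebirth map blows up as $\mu\subKernel 1\to 0$, and in the original noncompact problem this survival probability has no reason to stay away from $0$. It is precisely the passage to the embedded chain on the compact set $\cX$, together with the Feller property (Lemma~\ref{lem:H1H2}) and the bounds $\delta\le\A 1\le\|\A\|_\infty$ on $\cX$ (Lemma~\ref{lem:repn_Kn}), that renders this denominator controllable; the remaining telescoping is routine.
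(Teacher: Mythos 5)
Your proposal is correct in substance and follows essentially the same route as the paper's proof: both reduce everything to the rank-one difference $(\mcKern - \subKernel_\nu)(x,\cdot) = \probRegenBeforeMuMin(x)\left(\frac{\mu\subKernel}{\mu\subKernel 1} - \frac{\nu\subKernel}{\nu\subKernel 1}\right)$, establish total-variation Lipschitz continuity of the normalized rebirth map $\mu \mapsto \mu\subKernel/\mu\subKernel 1$ by the same algebraic manipulation, and then pass to general $j$ (the paper defers this iteration to \citeauthor{Benaim2018} (2018, Lemma 6.2), whereas you telescope explicitly and in fact obtain a sharper linear-in-$j$ bound).

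The one place you genuinely deviate is the lower bound on the normalizing denominators, and this is worth recording. The paper bounds $\mu\subKernel 1$ and $\nu\subKernel 1$ below by $\delta$ directly, where $\delta$ is the constant of Lemma~\ref{lem:repn_Kn}; but that $\delta$ is a lower bound on $\A 1 = \sum_{n\ge 1}\subKernel^n 1 \ge \subKernel 1$, so as written this step of the paper is loose: the inequality between $\A 1$ and $\subKernel 1$ points the wrong way. Your derivation $\subKernel 1(x) \ge \delta/(1+\|\A\|_\infty)$ from the identity $\A = \subKernel(I+\A)$ closes this correctly and is the more careful treatment. The price is that your per-step Lipschitz constant is $2(1+\|\A\|_\infty)/\delta$ rather than $2/\delta$, so that (using $j \le 2^{j-1}$) your final bound is $2^j(1+\|\A\|_\infty)\delta^{-1}\|\mu-\nu\|_\TV$, which is weaker than the literal statement with $C_\mathrm{L} = 2/\delta$ whenever $\|\A\|_\infty > 1$. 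This discrepancy is harmless for the downstream use of the lemma---the asymptotic-pseudotrajectory argument only needs a bound of the form $C r^j\|\mu-\nu\|_\TV$---but strictly speaking what you prove (and, given the issue above, what the paper's own argument actually justifies) is the lemma with $\delta$ replaced by a uniform lower bound on $\subKernel 1$ over $\cX$.
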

\begin{proof}
    Recall that
    \begin{equation*}
        \mcKern(x,\dif y) = \subKernel(x,\dif y) + \probRegenBeforeMuMin(x) \frac{\mu \subKernel(\dif y)}{\mu \subKernel 1}.
    \end{equation*}
    So fix $\alpha, \mu, \nu\in \mathcal P(\cX)$. Then we have
    \begin{equation*}
        \begin{split}
            \| \alpha \mcKern -\alpha \subKernel_\nu \|_{\TV} &= \left\|\alpha(\probRegenBeforeMuMin) \frac{\mu \subKernel}{\mu \subKernel 1} - \alpha(\probRegenBeforeMuMin) \frac{\nu \subKernel}{\nu \subKernel 1}  \right\|_\TV, \\
            &= \alpha(\probRegenBeforeMuMin) \left \|\frac{\mu \subKernel}{\mu \subKernel 1} -  \frac{\nu \subKernel}{\nu \subKernel 1} \right \|_\TV, \\
            &\le \left \|\frac{\mu \subKernel}{\mu \subKernel 1} -  \frac{\nu \subKernel}{\nu \subKernel 1}\right \|_\TV,
        \end{split}
    \end{equation*}
    since $\alpha(\probRegenBeforeMuMin)\le 1$. So we need to bound this final term. Now, since $Q$ is Feller (Lemma \ref{lem:H1H2}), and $Q1>0$ pointwise on $\mathcal X$, compactness we can find $\delta>0$ such that $\nu Q 1\ge \delta$ for any $\nu\in \mathcal P(\mathcal X)$. Thus:
    \begin{equation}
        \begin{split}
            \left \|\frac{\mu \subKernel}{\mu \subKernel 1} -  \frac{\nu \subKernel}{\nu \subKernel 1}\right \|_\TV &= \left \| \frac{\mu \subKernel (\nu \subKernel 1) - (\mu \subKernel 1)\nu \subKernel}{\mu \subKernel 1 \, \nu \subKernel 1} \right \|_\TV, \\
            &\le \frac{\|\mu \subKernel\|_\TV}{\mu \subKernel 1\, \nu \subKernel 1}|\nu \subKernel 1 - \mu \subKernel 1|+\frac{|\mu \subKernel 1| \|\mu \subKernel - \nu \subKernel\|_\TV}{\mu \subKernel 1 \, \nu \subKernel 1}, \\
            &\le \frac{\|\mu - \nu\|_\TV \|\subKernel 1\|_\infty }{\delta } + \frac{\|\mu- \nu\|_\TV }{\delta}, \\
            &\le \frac{2}{\delta} \|\mu-\nu\|_\TV.
        \end{split}
        \label{eq:lem4.9.0}
    \end{equation}

    Now, we with to establish the following statement: for each $j\in \mathbb N$,
    \begin{equation}
        \sup_{\alpha \in \mathcal P(\cX)}\| \alpha \mcKern^j -\alpha \subKernel_\nu^j \|_{\TV}\le \frac{2}{\delta}\cdot 2^j \|\mu-\nu\|_\TV. 
        \label{eq:lemma4.9}
    \end{equation}
    This immediately implies the second statement of the Lemma, by definition of total variation.

    Given \eqref{eq:lem4.9.0}, the statement \eqref{eq:lemma4.9} follows directly from an inductive argument on $j\in \mathbb N$, this calculation is fully detailed in \citeauthor{Benaim2018} (2018, [Proof of Lemma 6.2]).
\end{proof}

With this result, the rest of the approach of \citeauthor{Benaim2018} (2018, Section 6) goes through to establish the desired result.

\begin{theorem}
    Under our Assumptions~\ref{assm:regularity}--\ref{assm:finiteT1}, almost surely, the sequence $(\mu_n)$ converges weakly to the minimal regeneration distribution $\muMin$.
\end{theorem}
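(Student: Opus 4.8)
The plan is to recognize the recursion defining $(\mu_n)$ as a stochastic approximation scheme driven by Markovian (controlled) noise, and to conclude by the theory of asymptotic pseudo-trajectories, following \cite{Benaim2018, Wang2020}. Since $\mu_n = n^{-1}\sum_{i=1}^n \delta_{Y_i}$, a direct computation puts the dynamics into Robbins--Monro form,
\begin{equation*}
    \mu_{n+1} - \mu_n = \gamma_{n+1}\left[ \left(\Pi_{\mu_n} - \mu_n\right) + \left(\delta_{Y_{n+1}} - \Pi_{\mu_n}\right) \right],\qquad \gamma_{n+1} = \tfrac{1}{n+1},
\end{equation*}
where $\mu\mapsto \Pi_\mu - \mu$ is exactly the vector field of the limiting ODE of Section~\ref{sec:ODE} and, by Proposition~\ref{prop:propor}, $\Pi_{\mu_n}$ is the invariant measure of the one-step kernel $\mcKernDscrtMeas$.

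The subtle point is that $\delta_{Y_{n+1}}$ has conditional mean $\mcKernDscrtMeas(Y_n,\cdot)$, which differs from $\Pi_{\mu_n}$ because the current state $Y_n$ has not equilibrated; hence $\delta_{Y_{n+1}} - \Pi_{\mu_n}$ is not a martingale difference. Following \citeauthor{Benaim2018} (2018, Section~6), I would decompose this noise into a genuine martingale increment plus a remainder written through the resolvent $\sum_{k\ge 0}(\mcKernDscrtMeas^k - \Pi_{\mu_n})$, the solution of the Poisson equation for $\mcKernDscrtMeas$. The remainder is then controlled using (i) the uniform ergodicity of the kernels coming from the lower bound $\delta \le \A 1$ of Lemma~\ref{lem:repn_Kn}, and (ii) the uniform-in-measure Lipschitz estimate of Lemma~\ref{lemma:lip}, which bounds the drift of $\mcKern^j$ as the measure evolves along the trajectory. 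Compactness of $\cX$ together with the bounds $\delta \le \A 1(x) \le \|\A\|_\infty$ ensure that every constant involved is finite and uniform. The output of this step, an analogue of \citeauthor{Benaim2018} (2018, Lemma~6.2), is that the linear interpolation $t\mapsto \bar\mu_t$ (with $\bar\mu_{t_n} = \mu_n$ at times $t_n = \sum_{k\le n}\gamma_k$) is almost surely an asymptotic pseudo-trajectory for the semi-flow $\Phi$.

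With the asymptotic pseudo-trajectory property in hand, I would invoke the standard dynamical-systems conclusion: the limit set $L(\bar\mu) = \bigcap_{T\ge 0}\overline{\{\bar\mu_t : t\ge T\}}$ is almost surely a nonempty, compact, internally chain transitive set for $\Phi$. By Proposition~\ref{prop:global_attractor}, $\muMin$ is a global attractor for $\Phi$ on the (weakly compact) space $\mathcal P(\cX)$, so the only internally chain transitive set is $\{\muMin\}$. Hence $L(\bar\mu) = \{\muMin\}$ almost surely, which is precisely the claimed almost sure weak convergence $\mu_n \to \muMin$.

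The main obstacle I expect is the second step: establishing the asymptotic pseudo-trajectory property, where the non-martingale, Markovian structure of the noise must be tamed. The added difficulty relative to a fixed-kernel scheme is that the controlling kernel $\mcKernDscrtMeas$ itself changes at every step as $\mu_n$ evolves, so the Poisson-equation bounds must be made uniform in the measure. This is exactly what Lemma~\ref{lemma:lip} supplies; once it is combined with the uniform lower bound $\delta$ of Lemma~\ref{lem:repn_Kn}, the remaining estimates transfer directly from the compact-space theory of \cite{Benaim2018}.
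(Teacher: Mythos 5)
Your proposal is correct and follows essentially the same route as the paper: embed $(\mu_n)$ in continuous time, establish the asymptotic pseudo-trajectory property via the Lipschitz estimate of Lemma~\ref{lemma:lip} together with the uniform bounds of Lemma~\ref{lem:repn_Kn} (this is the analogue of \citeauthor{Benaim2018}'s Lemma~6.2 and Theorem~6.4), and then conclude from the global attractor property of Proposition~\ref{prop:global_attractor} via the limit-set/internally-chain-transitive theory of \cite{Benaim1999}. The only difference is one of exposition: you spell out the Robbins--Monro decomposition, the Poisson-equation treatment of the Markovian noise, and the chain-transitivity argument explicitly, whereas the paper delegates these steps wholesale to \citeauthor{Benaim2018} (2018, Section~6).
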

\begin{proof}
    By embedding the sequence $(\mu_n)$ into continuous time as in \citeauthor{Benaim2018} (2018, Section 6.1), the resulting process $(\hat \mu_t)$ is an asymptotic pseudo-trajectory of $\Phi$, by Lemma~\ref{lemma:lip} and \citeauthor{Benaim2018} (2018, Theorem 6.4). Combined with Proposition~\ref{prop:global_attractor}, this proves the result; see \cite{Benaim1999}.
\end{proof}


\section{Examples} \label{sec:examples}

The examples presented below show that even when it is possible to simulate a standard Restore process (because a suitably large constant $C$ is known in advance), adaptive Restore can significantly decrease the MSE of some expectations of interest. However, they also signal that convergence of the process can be slow for high-dimensional or multi-modal target distributions.


\subsection{Logistic regression model of breast cancer} \label{sec:log_reg_breast_cancer}

We experimented with using Restore to simulate from the (transformed) posterior of a Logistic Regression model of breast cancer ($d=10$). For this model, we denote the data, consisting of predictor-response pairs, by $\{ (z_i, y_i) \}_{i=1}^n$. The random variables of interest are the regression coefficients $\beta = (\beta_1, \beta_2, \dots, \beta_d)^T$. Details of the data and prior are given in Supplement C \citep{mckimm2024sampling}. The likelihood of a Logistic Regression model is: 
\begin{equation*}
    l(\{ (z_i, y_i) \}_{i=1}^n | \beta) = \left[ \prod_{i=1}^n \frac{1}{1 + \exp\{-y_i \beta^T z_i\} } \right].
\end{equation*}

Four types of Restore process were experimented with: standard Restore with $\mu \equiv \N(0, I)$, standard Restore with $\mu \equiv \N(0, 1.5I)$, adaptive Restore with $\muMin \equiv \N(0, I)$ and adaptive Restore with $\muMin \equiv \muMin_{\N}$, where $\muMin_{\N}$ is the minimal regeneration distribution of an isotropic Gaussian distribution. We investigated how the scale of the covariance matrix of $\mu$ affects $\regenRateTrunc$, by estimating how $\regenRateTrunc$ varies when $\mu \equiv \N(0, \alpha I)$, with different values of $\alpha$. We found that as a function of $\alpha$, $\regenRateTrunc$ is close to its minimum at $\alpha = 1.5$, as shown by Figure \ref{fig:lr_kappa_scale_regen_dist}. Hence we experimented with using $\mu \equiv \N(0, 1.5 I)$ for standard Restore.

\begin{figure}
    \centering
    \begin{subfigure}{0.5\textwidth}
        \centering
        \includegraphics[width=\linewidth, trim={0 0.5cm 0 1.5cm}, clip]{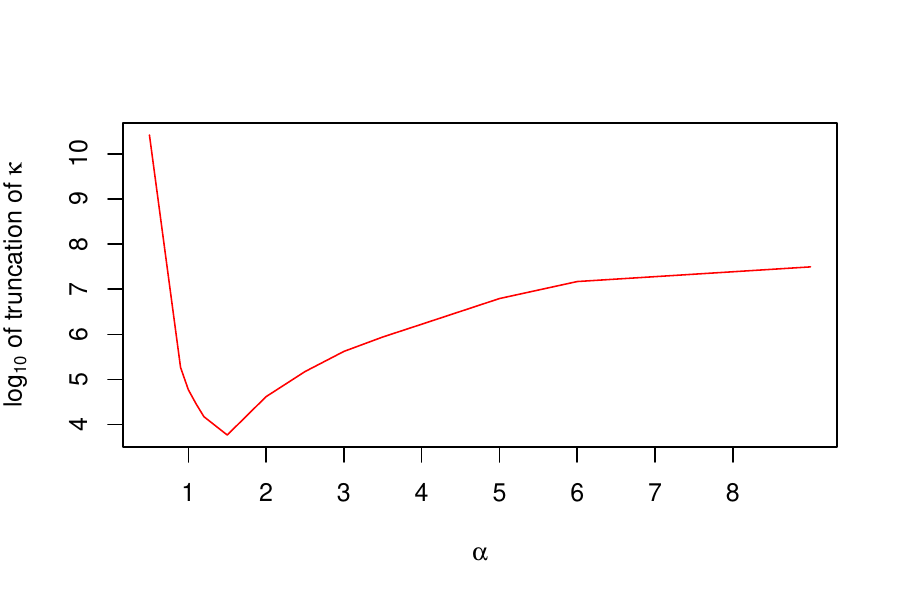}
        \caption{Logistic regression model.}
        \label{fig:lr_kappa_scale_regen_dist}
    \end{subfigure}%
    \begin{subfigure}{0.5\textwidth}
        \centering
        \includegraphics[width=\linewidth, trim={0 0.5cm 0 1.5cm}, clip]{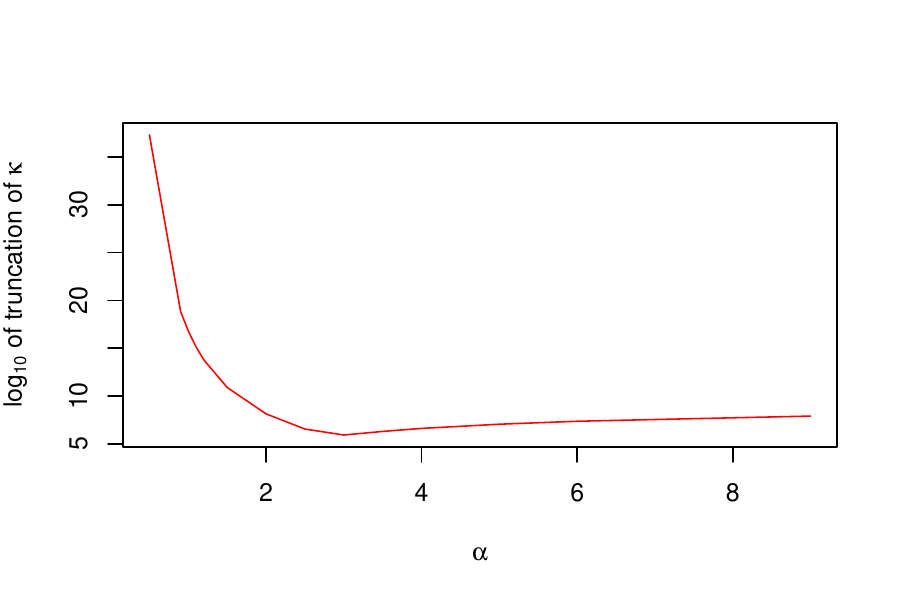}
        \caption{Pump failure model.}
        \label{fig:pump_kappa_scale_regen_dist}
    \end{subfigure}
    \caption{Plot of $\log_{10}\regenRateTrunc$ as a function of $\alpha$ when $\mu \equiv \N(0, \alpha I)$ and $\pi$ is the posterior of a logistic regression model of breast cancer (\ref{fig:lr_kappa_scale_regen_dist}) or a hierarchical model of pump failure (\ref{fig:pump_kappa_scale_regen_dist}). Both posteriors are pre-transformed based on their Laplace approximations. $\regenRateTrunc$ satisfies $\P(\kappa(X) < \regenRateTrunc) \approx 0.9999$.}
\end{figure}

For each type of Restore process, we simulated 100 sample paths. The Monte Carlo samples produced by these paths were then used to estimate $\E[X_i]$ and $\E[X_i^2]$ for $i=1,\dots,10$. These estimates were then used to estimate the MSE of estimates of $\E[X_i]$ and $\E[X_i^2]$ for $i=1,\dots,10$. To estimate the MSE, we took as `ground truth' the estimates generated by a Markov chain of length $5 \times 10^8$, thinned so that $10^7$ samples were used in the Monte Carlo sum for the estimate.

For both $\mu_0 \equiv \N(0, I)$ and $\mu_0 \equiv \muMin_{\N}$ the adaptive Restore processes were simulated with $\nK=5.12$ (3.s.f), $\minRegenRateTrunc=31.7$ (3.s.f), $a=10, \outputRate=1.0$ and with Short-Term Memory with $\nCloud=10^4$ and $\nForget=10$. Parameter $\minRegenRateTrunc$ was chosen so that $\P(\pkap < \minRegenRateTrunc) \approx 0.9999$. The processes were simulated for time $T_{\text{adapt}} = 4 \times 10^5$, of which $b=3 \times 10^5$ was burnt.

The times $T_{\text{adapt}}$ and $b$ were chosen by first setting $T=10^6$ and making initial runs with $\outputRate = 0.1$ (this small output rate was used, so that the program implementing the sampler required less memory, which was important for simulating the sample paths on a server). Each process took roughly 100 minutes to simulate. Visually inspecting boxplots of the estimates of $\E[X_i]$ and $\E[X_i^2]$ for $i=1,\dots,10$, shown in Supplement C \citep{mckimm2024sampling}, the process appears to have approximately converged by time $3 \times 10^5$.

For the standard Restore processes with $\mu \equiv \N(0, I_d)$ and $\mu \equiv \N(0, 1.5 I)$, $\regenRateTrunc$ was chosen to satisfy $\P(\kappa(X) < \regenRateTrunc) \approx 0.9999$. The total simulation time, $T_{\text{std}}$, was chosen so that $T_{\text{std}} \times \regenRateTrunc = T_{\text{adapt}} \times (\minRegenRateTrunc + \nK)$, and thus in expectation the adaptive and standard processes evaluate their respective rates the same number of times. The output rate was chosen so that, in expectation, $10^5$ output states were generated by each process (the same as the number of output states produced by the corresponding adaptive processes, post burn-in.)

Supplement C \citep{mckimm2024sampling} contains a comparison of the MSE of estimates of $\E[X_i]$ and $E[X_i^2]$, $(i=1,\dots,10)$ for the four types of Restore processes simulated. Table \ref{tab:lr_mse} gives the values of the mean MSE of $\E[X_i]$ and $\E[X_i^2]$. Using standard Restore with $\mu \equiv \N(0, 1.5 I)$ produces good estimates of the first moment, but relatively poor estimates of the second moments. By contrast, adaptive Restore with $\mu_0 \equiv \N(0,I)$ gives good estimates of both moments. 

\begin{table}
    \centering
    \begin{tabular}{| c | c | c |} 
        Type & Mean MSE $\E[X_i]$ & Mean MSE $\E \big[ X_i^2 \big]$ \\
        $\mu_0 \equiv \N(0,I)$ & $2.0 \times 10^{-4}$ & $2.3 \times 10^{-4}$ \\
        $\mu_0 \equiv \muMin_{\N}$ & $2.1 \times 10^{-4}$ & $2.4 \times 10^{-4}$ \\
        $\mu \equiv \N(0,I)$ & $2.9 \times 10^{-4}$ & $2.4 \times 10^{-3}$\\
        $\mu \equiv \N(0, 1.5 I)$ & $1.4 \times 10^{-4}$ & $6.3 \times 10^{-4}$
    \end{tabular}
    \caption{Mean MSE of the first and second moments of the $d=10$ variables, computed using different versions of Restore. Types $\mu$ and $\mu_0$ indicate standard and adaptive Restore processes.}
    \label{tab:lr_mse}
\end{table}


\subsection{Hierarchical model of pump failures} \label{sec:pump_model}

For the hierarichal model of pump failure, introduced in Section \ref{sec:adaptive_restore}, three types of Restore processes were experimented with: standard Restore with $\mu \equiv \N(0, 3I)$, adaptive Restore with $\muMin \equiv \N(0, I)$ and adaptive Restore with $\muMin \equiv \muMin_{\N}$. We did not experiment with using standard Restore with $\mu \equiv \N(0, I)$ because, as explained in \ref{sec:adaptive_restore}, due to the mismatch of this $\mu$ with $\pi$, the regeneration rate is extremely large. We investigated how $\regenRateTrunc$ must vary to satisfy $\P(\kappa(X) < \regenRateTrunc) \approx 0.9999$ for $\mu \equiv \N(0, \alpha I)$, with different values of $\alpha$. We found that as a function of $\alpha$, $\regenRateTrunc$ is close to its minimum at $\alpha = 3$, as shown by Figure \ref{fig:pump_kappa_scale_regen_dist}. Hence we experimented with using $\mu \equiv \N(0, 3 I)$ for standard Restore.

As for the example of a logistic regression model of breast cancer, for each type of Restore process we simulated 100 samples paths. Each sample path was used to estimate $\E[X_i]$ and $\E[X_i^2]$ for $i=1,\dots,11$. Then these estimates were used to estimate the MSE of these quantities, taking as `ground truth' an estimate computed using the samples generated by a Markov chain, with near zero autocorrelation, of length $10^7$.

For both $\mu_0 \equiv \N(0,I)$ and $\mu_0 \equiv \muMin$ the adaptive Restore processes were simulated with $\minRegenRateTrunc=25.0, \nK=5.43$ (3.s.f), $a=10$ and $\outputRate=1$. The processes had Short-Term Memory with $\nCloud=10^4$ and $\nForget=10$. The total simulation time was $T_{\text{adapt}} = 3 \times 10^5$, of which $b = 2 \times 10^5$ was burn-in time. To decide on an appropriate burn-in and simulation times, we initially simulated process with $T_{\text{adapt}}=10^6$ and $\outputRate=0.1$. Dividing the samples into 10 batches and inspecting these visually, see Supplement F \citep{mckimm2024sampling}, we estimated that $b =2 \times 10^5$ was an appropriate burn-in time.

For the standard Restore processes with $\mu \equiv \N(0, 3 I), \regenRateTrunc$ was chosen to satisfy $\P(\kappa(X) < \regenRateTrunc) \approx 0.9999$. The total simulation time, $T_{\text{std}}$, was chosen so that $T_{\text{std}} \times \regenRateTrunc = T_{\text{adapt}} \times (\minRegenRateTrunc + \nK)$, and thus in expectation the adaptive and standard processes evaluate their respective rates the same number of times. The output rate was chosen so that, in expectation, $10^5$ output states were generated by each process (the same as the number of output states produced by the corresponding adaptive processes, post burn-in.)

Supplement F \citep{mckimm2024sampling} contains a comparison of the MSE of estimates of $\E[X_i]$ and $\E[X_i^2]$, ($i=1,\dots,11$) for the three Restore processes simulated. Table \ref{tab:pump_mse} gives the values of the mean MSE of $\E[X_i]$ and $\E[X_i^2]$. Adaptive Restore clearly out-performs standard Restore.

\begin{table}
    \centering
    \begin{tabular}{| c | c | c |} 
        Type & Mean MSE $\E[X_i]$ & Mean MSE $\E \big[ X_i^2 \big]$ \\
        $\mu_0 \equiv \N(0,I)$ & $2.6 \times 10^{-4}$ & $3.5 \times 10^{-4}$ \\
        $\mu_0 \equiv \muMin_{\N}$ & $2.6 \times 10^{-4}$ & $3.4 \times 10^{-4}$ \\
        $\mu \equiv \N(0, 3 I)$ & $1.6 \times 10^{-3}$ & $1.3 \times 10^{-2}$
    \end{tabular}
    \caption{Mean MSE of the first and second moments of the $d=11$ variables, computed using different versions of Restore. Types $\mu$ and $\mu_0$ indicate standard and adaptive Restore processes.}
    \label{tab:pump_mse}
\end{table}


\subsection{Log-Gaussian Cox point process model} \label{sec:log_gauss_cox_model}

A Log-Gaussian Cox Point Process models a $[0,1]^2$ area divided into a $n \times n$ grid. The number of points $Y = \{Y_{i,j}\}$ in each cell is conditionally independent given latent intensity $\Lambda = \{\Lambda_{i,j}\}$ and has Poisson distribution $n^2 \Lambda_{i,j} = n^2 \exp \{ X_{i,j} \}$. The latent field is $X = \{X_{i,j}\}$. Assume $X$ is a Gaussian process with mean vector zero and covariance function $\Sigma_{(i,j),(i',j')} = \exp \{ -\delta(i,i';j,j')/n\}$, where $\delta(i,i';j,j') = ((i-i')^2 + (j-j')^2)^{1/2}$. We have:
\begin{equation*}
    \log \pi(x | y) = \sum_{i,j} y_{i,j} x_{i,j} - n^2 \exp \{x_{i,j} \} - \frac{1}{2} x^T \Sigma^{-1} x + \text{const}.
\end{equation*}

We present results for simulated data on a 5 by 5 grid, so $d=25$. After transformation, see Supplement B \citep{mckimm2024sampling}, the posterior distribution of this model is close to an Isotropic Gaussian distribution. For standard Restore setting $\mu \equiv \mathcal{N}(0, I)$ results in $\P[\kappa(X) < \correction{67.1}] \approx 0.9999$. Thus $\regenRateTrunc = \correction{67.1}$ would be appropriate.

For adaptive Restore we have $\P[\pkap < 18.5] \approx 0.9999$. Thus adaptive Restore reduces the necessary truncation level \correction{of the} regeneration rate by a factor of \correction{less than 4}. Simulation runs indicate that convergence of the adaptive process for this $d=25$ posterior is slow.
\correction{In particular, we found that the process had still not converged after a simulation time of $T=10^7$, which took about an hour of wall-clock time.} Though $\mu_t$ does not need to adapt to account for skew so much, it still needs to change significantly so that it is centred correctly -- this is harder in higher dimensions.


\subsection{Mixture of Gaussian distributions} \label{sec:gaussian_mixture}

Multi-modal posterior distributions sometimes arise in Bayesian modelling problems. For example, the standard two-parameter Ising model \citep{geyer1991markov} is bimodal for some parameter combinations; a model of a problem concerning sensor network location \citep{ihler2005nonparametric} is a popular example that features in many papers \citep{ahn2013distributed, lan2014wormhole, pompe2020adaptive}.
Standard MCMC algorithms struggle to sample multi-modal distributions because the area of low probability density between modes acts as a barrier that is difficult to cross. Several techniques have been developed specifically for multi-modal posteriors, which generally fall under tempering \citep{geyer1991markov, marinari1992simulated} and mode-hopping strategies \citep{tjelmeland2001mode, ahn2013distributed}.

We explore the use of an adaptive Restore process for simulating from the Gaussian mixture distribution $\pi(x) = w_1 \mathcal{N}(x; m_1, \Sigma_1) + w_2 \mathcal{N}(x; m_2, \Sigma_2)$, for $w_1=0.4, w_2=0.6, m_1 = (1.05, 1.05), m_2 = (-1.05, -1.05)$,
\begin{equation*}
    \Sigma_1 =
    \begin{pmatrix}
        1 & -0.1 \\
        -0.1 & 1
    \end{pmatrix} \text{ and }
    \Sigma_2 =
    \begin{pmatrix}
        1 & 0.1 \\
        0.1 & 1
    \end{pmatrix}.
\end{equation*}
Figure \ref{fig:gauss_mix_pi_kappa} shows contour plots of the density of $\pi$ and $\pkap$. In particular, figure \ref{fig:gauss_mix_kappa_partial} shows that the region for which $\pkap$ is zero, which corresponds to the support of $\muMin$, consists of two separate non-connected areas.
For standard Restore and adaptive Restore, we set $\mu$ and $\mu_0$ respectively to $\mathcal{N}(0, 3I)$.

\correction{We simulated 100 adaptive Restore processes with Short-Term Memory and parameters $\nK=1.0, \minRegenRateTrunc = 7.79$ (3.s.f), $\outputRate=1.0, a = 100, \nCloud=10^4, \nForget=2, T_{\text{adapt}}=4 \times 10^5$ and burn-in period $b=3 \times 10^5$. The parameters $T_{\text{adapt}}$ and $b$ were chosen based on preliminary runs; $\pK$ was chosen so that $\P(\pkap(X) < \pK) \approx 0.9999$.}

\correction{As a comparison, 100 standard Restore processes were simulated. The parameter $\regenRateTrunc$ was set as 308 (3.s.f) to satisfy $\P(\kappa(X) < \regenRateTrunc) \approx 0.9999$. As in subsection \ref{sec:log_reg_breast_cancer}, the simulation time $T_{\text{std}}$ was chosen (as $11.4 \times 10^4$) so that $T_{\text{std}} \times \regenRateTrunc = T_{\text{adapt}} \times (\minRegenRateTrunc + \nK)$, and thus in expectation the adaptive and standard processes evaluate their respective rates the same number of times. The output rate was chosen so that each sample path generated $10^5$ output states, in expectation.}

\correction{Although the truncation of the regeneration rate is far smaller for adaptive Restore than for standard Restore, reduced by a factor of approximately 40, standard Restore actually performs better. The MSE for the estimates of the expectation of the first marginal is $1.25 \times 10^{-4}$ for standard Restore and $9.72 \times 10^{-4}$ for adaptive Restore. Thus the MSE for the standard Restore estimates is 87\% lower than the MSE for the adaptive Restore estimates. The fact that standard Restore performs better for this example is in large part due to the slow convergence of adaptive Restore for this multimodal target distribution.}

In analogue to similar schemes based on stochastic approximation \citep{Aldous1988, Blanchet2016, Benaim2018, Mailler2020},
this slow convergence is a result of the urn-like behaviour intrinsic to such methods. Although the chain is guaranteed to converge asymptotically, in finite time the chain is naturally inclined to visit regions it has visited before. For example, even on a \textit{finite} state space, \citeauthor{Benaim2015a} (2015, Corollary 1.3) shows convergence can occur in some cases at a very slow polynomial rate.

Thus in practice, we suggest a judicious choice of initial distribution $\mu_0$ and constant $a$ as in Algorithm~\ref{algo:adaptive_restore}, to ensure that the measures $\mu_t$ quickly place some mass in all modes of the target distribution.

\begin{figure}
    \centering
    \begin{subfigure}{0.475\textwidth}
        \centering
        \includegraphics[width=\textwidth, trim = {0 0.5cm 0 2cm}, clip]{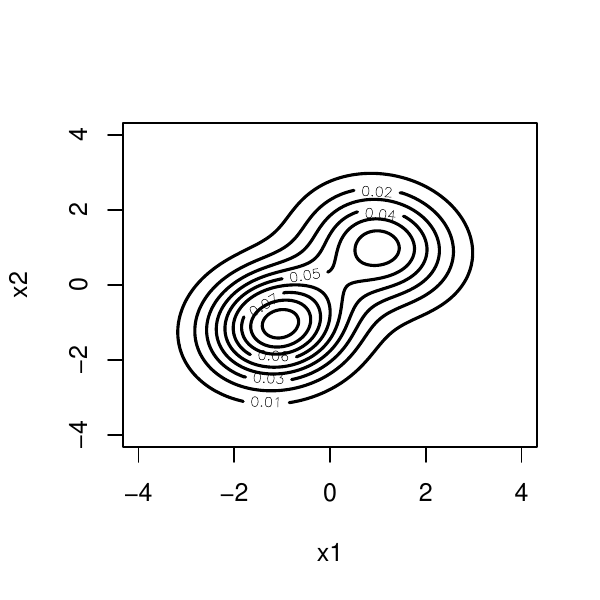}
        \caption{$\pi$}
    \end{subfigure}%
    \begin{subfigure}{0.475\textwidth}
        \centering
        \includegraphics[width=\textwidth, trim = {0 0.5cm 0 2cm}, clip]{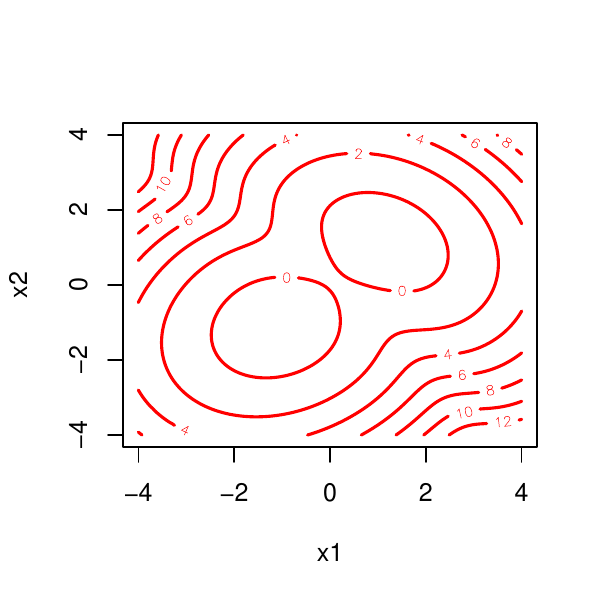}
        \caption{$\pkap$}
        \label{fig:gauss_mix_kappa_partial}
    \end{subfigure}
    \caption{Density of $\pi$ and $\pkap$ for $\pi$ a mixture of bivariate Gaussian distributions.}
    \label{fig:gauss_mix_pi_kappa}
\end{figure}


\subsection{Multivariate t-distribution}

Recall that a $d$-dimensional multivariate t-distribution with mean $m$, scale matrix $\Sigma$ and $\nu$ degrees of freedom has density: $\pi(x) \propto \left[ 1 + \nu^{-1} (x - m)^T \Sigma^{-1} (x - m) \right]^{-(\nu + d)/2}$. We consider sampling from a bivariate t-distribution with $\nu = 10$, zero mean and identity scale matrix. We then have $\pkap(x) < 1.55, \forall x \in \mathbb{R}^d$ (this bound is not tight), so can take $\pK = 1.55$ (no need to truncate $\kappa$). In general, Restore processes are particularly well suited to simulating from t-distributions, since the regeneration rate is naturally bounded. For this example, we can take $\nK = 1.2$. The process is quickly able to recover the true variance of each marginal of the target distribution, which is $\nu/(\nu-2)$. Figure \ref{fig:mvt} shows contours of $\tilde{\kappa}, \pkap$ and $\muMin$. A notable feature is that, moving outwards from the origin, $\pkap$ rises to its maximum value then asymptotically tends to zero.

\begin{figure}
    \centering
    \begin{subfigure}{0.45\textwidth}
        \centering
        \includegraphics[width=\textwidth, trim = {0 0.5cm 0 2cm}, clip]{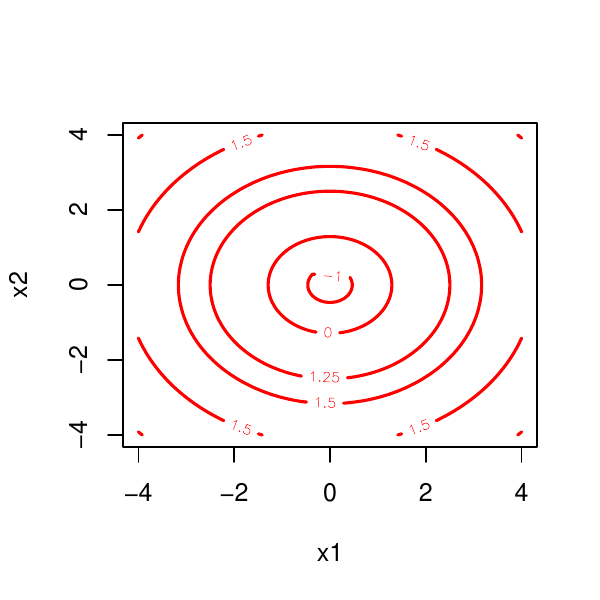}
        \caption{$\tilde{\kappa}$}
    \end{subfigure}%
    \begin{subfigure}{0.45\textwidth}
        \centering
        \includegraphics[width=\textwidth, trim = {0 0.5cm 0 2cm}, clip]{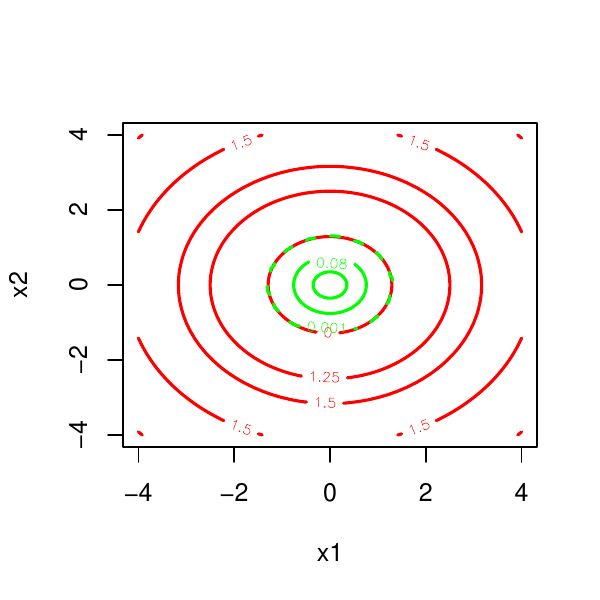}
        \caption{$\pkap$ and $\muMin$}
    \end{subfigure}
    \caption{Contours of $\tilde{\kappa}, \pkap$ and $\muMin$ for $\pi$ a bivariate t-distribution with $\nu = 10$.}
    \label{fig:mvt}
\end{figure}

A t-distribution serves as a good test case to check the convergence of implementations of adaptive Restore, because $\pkap$ has a bound, so does not need to be truncated. However, when $\mu$ is Gaussian, $\kappa$ will also have a relatively small upper bound, so in this case adaptive Restore provides no significant advantage over standard Restore.


\section{Discussion} \label{sec:discussion}

This article has introduced the adaptive Restore process, an extension of the Restore process \citep{Wang2021}, which adapts the regeneration distribution on the fly. We have used empirical experiments to compare the performance of adaptive and standard Restore for a number of target distributions. From a theoretical perspective, we have demonstrated how the framework of stochastic approximation can be successfully applied to a novel class of Markov processes.

We have shown that the regeneration rate is extremely sensitive to the choice of $\mu$. In fact, when $\pi$ is the posterior of a model of pump failures and $\mu \equiv \N(0, I)$ then $\kappa$ becomes so large that simulation is numerically infeasible. The examples of subsections \ref{sec:log_reg_breast_cancer} and \ref{sec:pump_model}, of a Logistic regression model of breast cancer and a hierarchical model of pump failures, show that adaptive Restore can guard against poor choices of a fixed $\mu$, for which $\kappa$ can be extremely large in parts of the space. While frequent regeneration is not in itself a bad thing, frequent regeneration into regions of low probability mass is computationally wasteful.

If $\mu$ were fixed as a sophisticated approximation of $\pi$, as the dimension $d$ of $\pi$ increases, the quality of this approximation would deteriorate, again leading to $\kappa$ becoming extremely large in parts of the space. Furthermore, it would only be possible to adapt $\mu$ to become closer to $\pi$, based on samples from $\pi$ generated using a Restore process, if the initial choice of $\mu$ were close enough to $\pi$ for simulation to be tractable and a sufficiently large constant $C$ were known. Crucially, as $d$ increases, $\muMin$ remains close to $\pi$. Indeed, it is shown in Wang (\citeyear[Section 5.6]{Wang2020thesis}) for several examples that $\muMin$ has a stable behaviour in the high-dimensional $d\to\infty $ limit. This means that, unlike other methods making use of global regenerative moves via the independence sampler and Nummelin splitting \citep{nummelin1978splitting, mykland1995regeneration}, global moves are more likely to be to areas of the space where $\pi$ has significant mass. However, the example in subsection \ref{sec:log_gauss_cox_model} showed empirically that the adaptive Restore process was slow to converge for the posterior of a Log-Gaussian Cox point process model with dimension $d=25$. Future work may look into ways of improving the convergence of the process for higher dimensional target distributions.

Some properties of standard Restore that are unfortunately not inherited by adaptive Restore are independent and identically distributed tours, an absence of burn-in period and the ability to estimate normalizing constants, unless a fixed regeneration distribution is used in parallel. Moreover, convergence appears to be slow for multi-modal distributions, as demonstrated by the example in subsection \ref{sec:gaussian_mixture}. Since tours begin with distribution $\mu_t$ and this distributed changes over time, tours are no longer independent and identically distributed. A burn-in period is required, during which $\mu_t$ converges to $\muMin$ and $\pi_t$, the stationary distribution of the process at time $t$, converges to $\pi$. For standard Restore, $\tilde{C}$, the regeneration constant with the normalizing constant $Z$ absorbed, is defined explicitly and hence can be used to recover $Z$. On the other hand, for adaptive Restore this constant is defined implicitly and thus cannot be used to recover $Z$, unless adaptivity is stopped for that purpose.

Despite these downsides, adaptive Restore represents a significant improvement on standard Restore by making simulation tractable for a wider range of target distributions. We have shown that simulation of mid-dimensional target distributions is practical with adaptive Restore and have presented a novel application of stochastic approximation to establish convergence of a self-reinforcing process.

%
%

\begin{funding}

M. Pollock and G.O. Roberts are partially supported by the EPSRC grant PINCODE (EP/X028119/1) and by the UKRI grant OCEAN (EP/Y014650/1). G.O Roberts is also partially supported by other EPSRC grants: EP/V009478/1 and EP/R018561/1. A. Wang has been partially supported by EPSRC grant EP/V009478/1, and H. McKimm through EPSRC studentship EP/L016710/1 (OxWaSP) and postdoctoral fellowship EP/W015080/1 (Imperial College London). C.P. Robert is partially funded by the European Union (ERC-2022-SYG-OCEAN-101071601) and by a Prairie chair from the Agence Nationale de la Recherche (ANR-19-P3IA-0001).
\end{funding}


\clearpage

\begin{appendix}


\section{Output} \label{apendix:output}

When output times are fixed, let $\{ t_1, t_2, \dots \}$ be an evenly spaced mesh of times, with $t_i = i \Delta$ for $i=1,2,\dots$ and $\Delta > 0$ some constant. When output times are random, let $\{ t_1, t_2, \dots \}$ be the events of a homogeneous Poisson process with rate $\outputRate > 0$. In either case, the output of the process is $\{ X_{t_1}, X_{t_2}, \dots \}$. Suppose there are $n$ output states, then we estimate expectations using the unbiased approximation:
\begin{equation*}
    \pi[f] \approx \frac{1}{n} \sum_{i=1}^n f(X_{t_i}).
\end{equation*}
Algorithmically, there is little difference between using fixed and random output times.  The memoryless property of Poisson processes allows one to generate the next potential regeneration and output events, $\timeToNextPotentialRegen$ and $s$, simulate the process forward in time by $\timeToNextPotentialRegen \vee s$, then discard both $\timeToNextPotentialRegen$ and $s$. When using a fixed mesh of times, the memoryless property no longer applies, so one must keep track of the times of the next output and potential regeneration events. 


\section{Pre-transformation of the target distribution} \label{sec:targ_pre-transform}

In the multi-dimensional setting, the Brownian Motion Restore sampler is far more efficient at sampling target distributions for which the correlation between variables is small. Rate $\tilde{\kappa}$ is more symmetrical for target distributions $\pi$ with near-symmetrical covariance matrices. Since the Markov transition kernel for Brownian motion over a finite period of time is symmetrical, local moves are better suited to near-symmetrical target distributions.

More generally, the parameterization of $\pi$ has a large effect on Bayesian methods \citep{hills1992parameterization}. In practice, we recommend making a transformation so that the transformed target distribution has mean close to zero and covariance matrix close to the identity. Suppose we have $X \sim \mathcal{N}(m, \Sigma)$ and that $\Sigma = V \Lambda V^T$ for $V$ a matrix with columns the eigenvectors of $\Sigma$ and the corresponding eigenvalues forming a diagonal matrix $\Lambda$. Then for $X' \sim \mathcal{N}(0, I_d)$, we have $X = \Sigma^{1/2}X' + m$, where $\Sigma^{1/2} = V \Lambda^{1/2}$ and $\Lambda^{1/2}$ is a diagonal matrix with entries the square roots of the eigenvalues of $\Sigma$. It follows that when $X$ is roughly Gaussian, with mean and covariance matrix $m$ and $\Sigma$, letting $\Sigma^{-1/2} = (\Sigma^{1/2})^{-1}$, transformed variable $X' = \Sigma^{-1/2}(X - m)$ should be close to an isotropic Gaussian. By the change of variables formula:
\begin{equation*}
    \pi_{X'}(x') = \pi_X(x) \left| \frac{\dif x}{\dif x'} \right| = \pi_X\big( \Sigma^{1/2}x' + m \big) |\Sigma^{1/2}|.
\end{equation*}
In computing the gradient and Laplacian of the energy of the transformed distribution, one must use the chain-rule to take into account the matrix $\Sigma^{1/2}$. Samples obtained from $\pi_{X'}$ may be transformed to have distribution $\pi_X$.

In most of the examples presented in this paper, the target distribution undergoes a pre-transformation as above, with $m$ and $\Sigma$ estimated by a Laplace approximation. For notational simplicity, we will continue to refer to sampling random variable $X$ with distribution $\pi$, even when in actual fact we are sampling the transformed distribution $\pi_{X'}$ corresponding to transformed variable $X'$. We make the Laplace approximation using the ``optim'' function in R \citep{R2021}, which uses numerical methods to find the mode of $\pi$ and the Hessian matrix of $\log \pi$ at the mode.


\section{Logistic Regression Model of Breast Cancer} \label{appendix:log_reg}

The data \citep{mangasarian1990cancer} was obtained from the University of Wisconsin Hospitals, Madison. The response is whether the breast mass is benign or malignant. Predictors are features of an image of the breast mass. The model has dimension $d=10$.
We used a Gaussian product prior with variance $\sigma^2=400$. Following \cite{gelman2008weakly}, we scaled the data so that response variables were defined on $\{-1, 1\}$, non-binary predictors had mean 0 and standard deviation 0.5, while binary predictors had mean 0 and range 1. The posterior distribution was transformed based on its Laplace approximation, as described by Appendix~\ref{sec:targ_pre-transform}.

As described in Section 5.1, we simulated 100 samples paths of an Adaptive Restore process, with stationary distribution the posterior distribution of the logistic regression model of breast cancer. Each sample path was generated for a simulation time of $T=10^6$, then the samples divided into 10 batches. Figure \ref{fig:lr_boxplots} shows boxplots of batch estimates of $\E[X_i]$ and $\E[X_i^2]$ for $i=1,\dots,10$.

\begin{figure}
    \centering
    \begin{subfigure}{.8\textwidth}
        \centering
        \includegraphics[width=\linewidth]{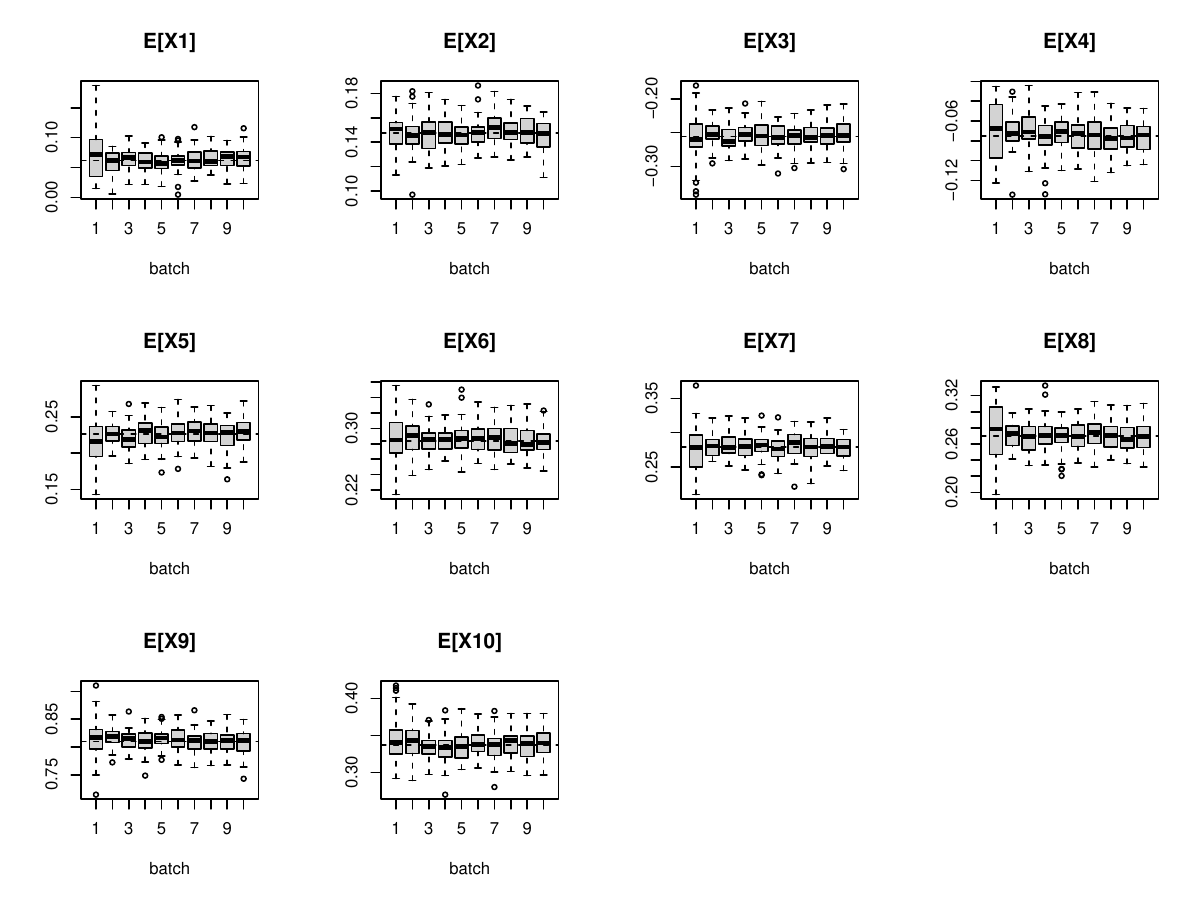}
        \caption{Boxplots of batch estimates of $\E[X_i]; i=1,\dots,10$.}
    \end{subfigure}
    \begin{subfigure}{.8\textwidth}
        \centering
        \includegraphics[width=\linewidth]{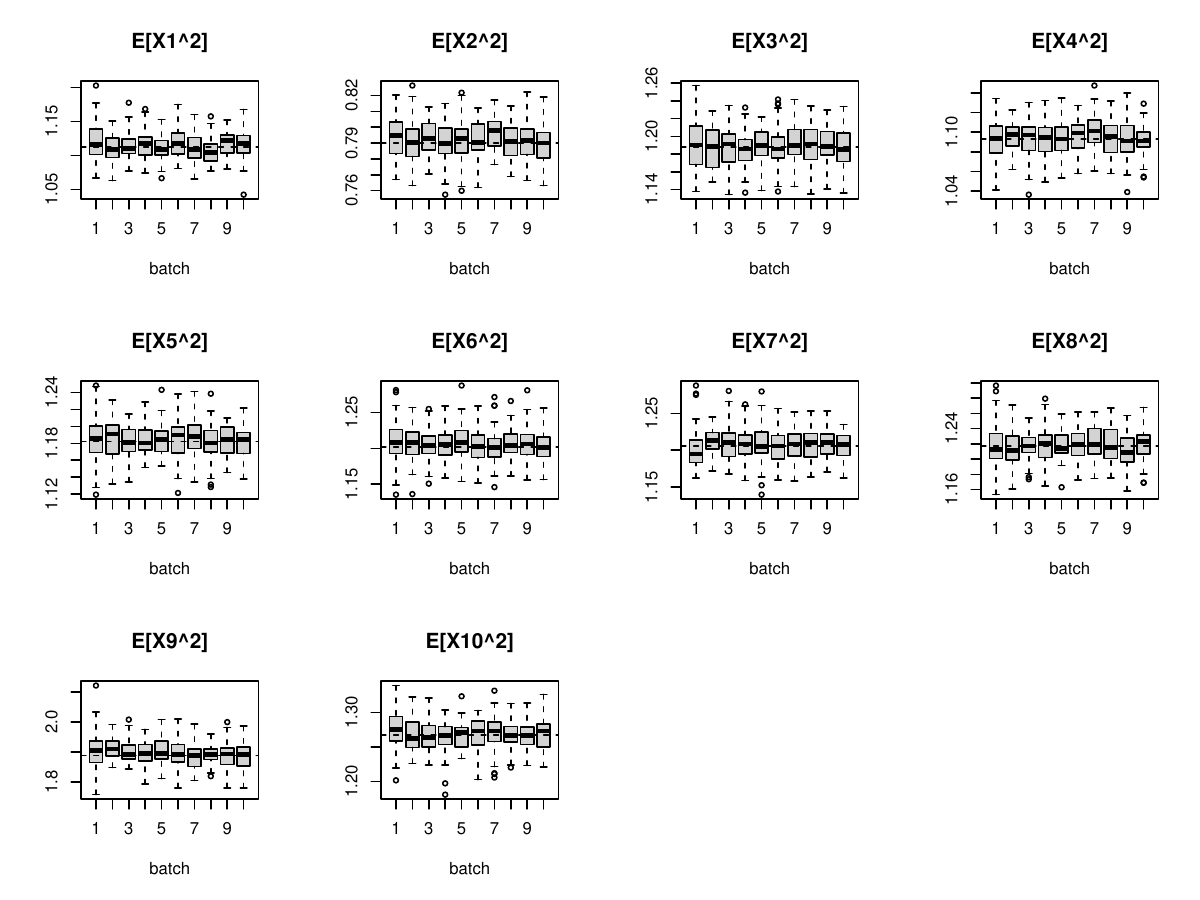}
        \caption{Boxplots of batch estimates of $\E[X_i^2]; i=1,\dots,10$.}
    \end{subfigure}
    \caption{Boxplots of batch estimates of $\E[X_i]$ and $\E[X_i^2]$ for $i=1,\dots,10$ of the posterior of a Logistic Regression model of breast cancer, computed using 100 samples paths of Adaptive Restore processes.}
    \label{fig:lr_boxplots}
\end{figure}

Figure \ref{fig:lr_mse} displays the MSE of estimates of $\E[X_i]$ and $\E[X_i^2]$, for $i=1,\dots,10$, for four types of Restore processes (as described in section 5.1). The burn-in time was $b = 3 \times 10^5$ and the total simulation time $T = 4 \times 10^5$.

\begin{figure}
    \centering
    \begin{subfigure}{.8\textwidth}
        \centering
        \includegraphics[width=\linewidth, trim={0 0.5cm 0 1cm}, clip]{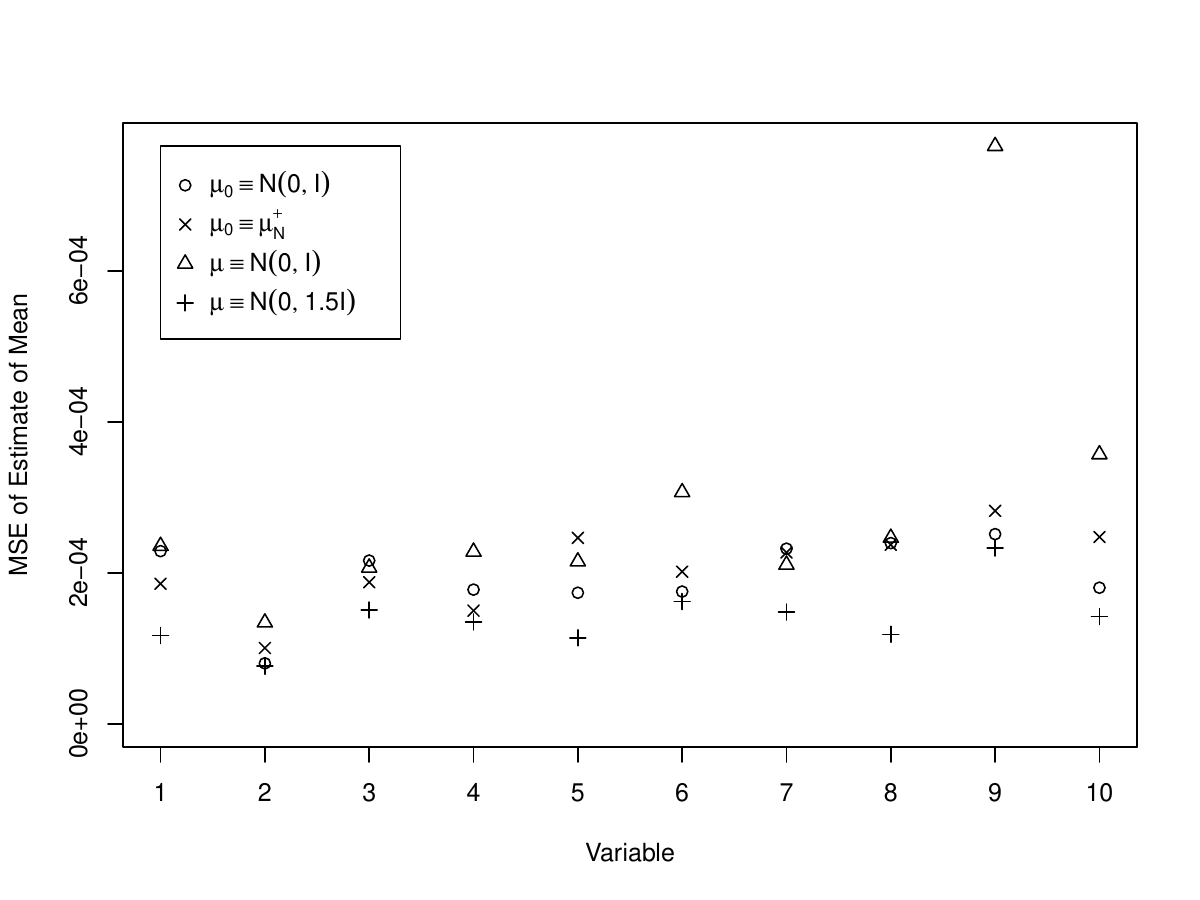}
        \caption{MSE of estimates of $\E[X_i]; i=1,\dots,10$.}
    \end{subfigure}
    \begin{subfigure}{.8\textwidth}
        \centering
        \includegraphics[width=\linewidth, trim={0 0.5cm 0 1cm}, clip]{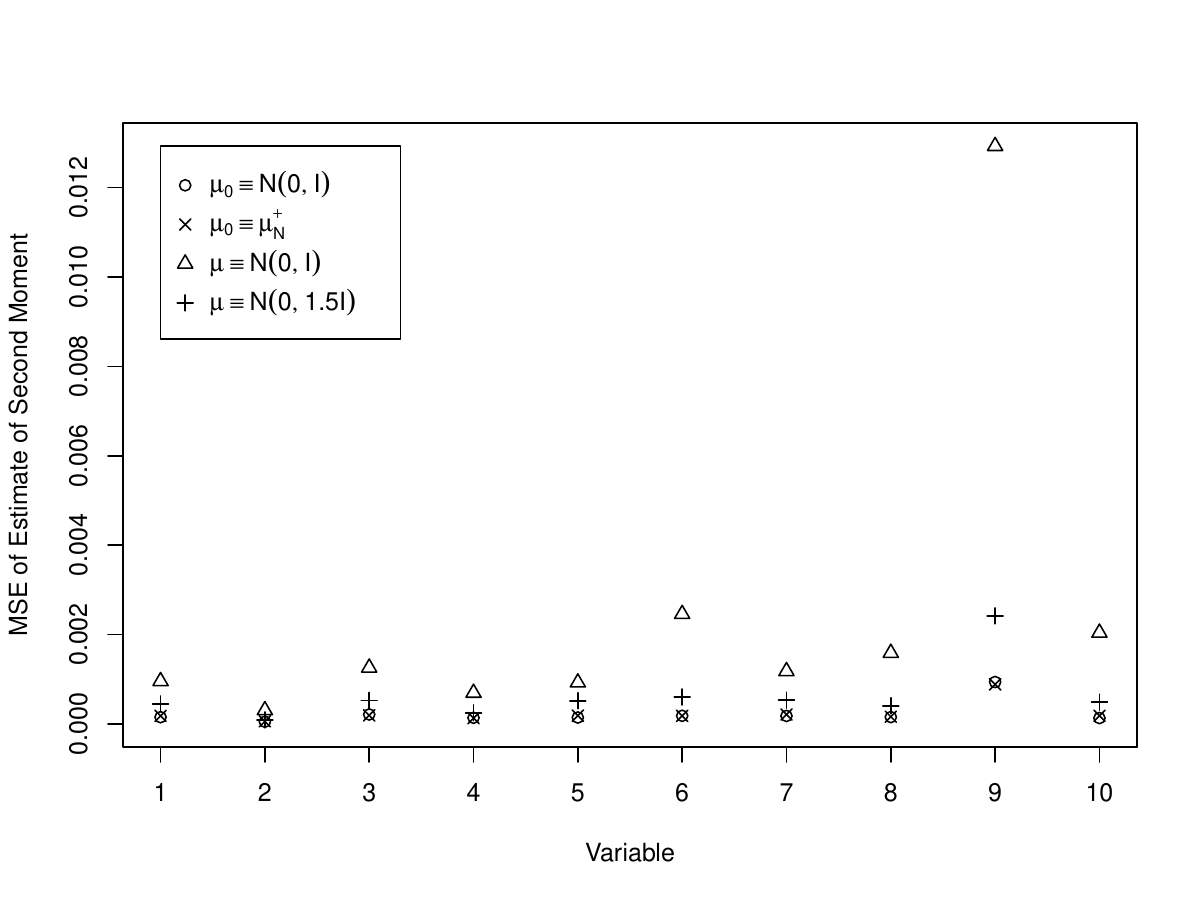}
        \caption{MSE of estimates of $\E[X_i^2]; i=1,\dots,10$.}
    \end{subfigure}
    \caption{MSE of estimates of $\E[X_i]$ and $\E[X_i^2]; i=1,\dots,10$; for different types of Restore processes. Circles and crosses ($\times$) show MSEs for Adaptive Restore processes, with $\mu_0 \equiv \N(0,I)$ and $\mu_0 \equiv \muMin_{\N}$ respectively. Triangles and crosses ($+$) show MSEs for Standard Restore processes, with $\mu \equiv \N(0,I)$ and $\mu \equiv \N(0, 1.5 I)$ respectively.}
    \label{fig:lr_mse}
\end{figure}


\section{Justification for the minimal regeneration distribution} \label{sec:justification_min_regen_dist}


Throughout the paper, we consider targeting the \textit{minimal regeneration distribution} within our Adaptive Restore algorithm. The reason for this is that we wish to minimise the computational cost per unit time of the algorithm. In subsection 2.4.3, we stated that an advantage of using $\pkap$ for Restore simulation is that to ensure $\P[\pkap < \minRegenRateTrunc] < 1 - \epsilon$ is satisfied, for $\epsilon > 0$ a small constant (e.g. $\epsilon = 0.001$), $\minRegenRateTrunc$ scales logarithmically with dimension $d$. To see this, consider $X \sim \mathcal{N}(0, I)$. Then,
\begin{align*}
    \P[\pkap(X) < \minRegenRateTrunc] = \P[0.5(x^Tx - d) < \minRegenRateTrunc] = \P\left[\sum_{i=1}^d x_i^2 < 2 \minRegenRateTrunc + d\right] = \P[Q < 2 \minRegenRateTrunc + d],
\end{align*}
for $Q \sim \chi_d^2$.
Figure \ref{fig:kappa_bar_guidance} shows $\minRegenRateTrunc$ so that $\P[\pkap(X) < \minRegenRateTrunc] < 1-\epsilon$ for $\epsilon=0.01, 0.001, 0.0001$. Thus for a 100-dimensional Gaussian target distribution, the truncation level $\minRegenRateTrunc = 30$ would likely be appropriate. As a caveat, some functions, such as $f(x)=x^T x$, are very sensitive to $\minRegenRateTrunc$ so an even more conservative choice of $\minRegenRateTrunc$ might be necessary. Furthermore,  $\tilde{\kappa}$ may prove impossible to derive in realistic situations.
\begin{figure}
    \centering
    \includegraphics[width=0.6\linewidth, trim={0 1cm 0 2cm}, clip]{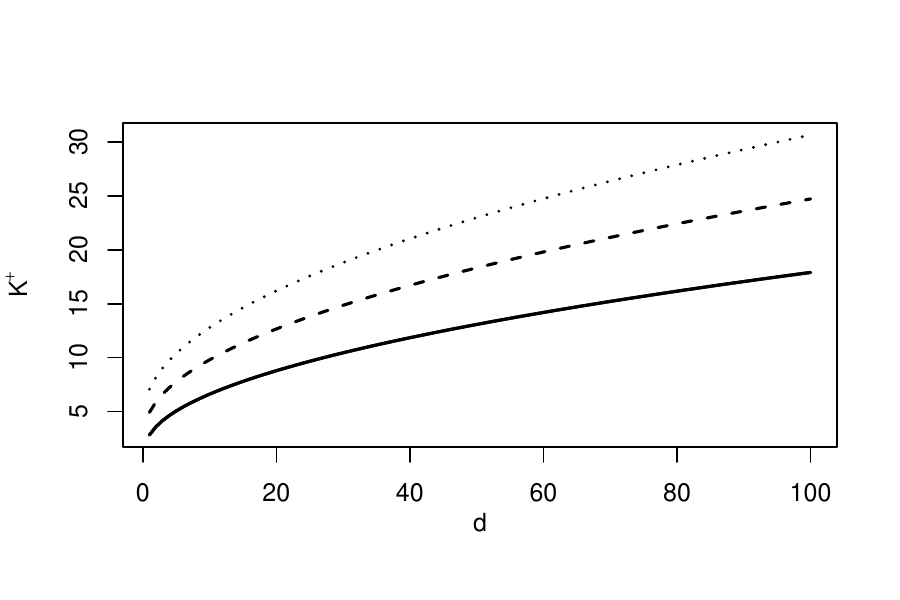}
    \caption{For $\pi \equiv \mathcal{N}(0,I)$, plot of $\minRegenRateTrunc$ so that $P[\pkap(X) < \minRegenRateTrunc] = 1 - \epsilon; \epsilon=0.01, 0.001, 0.0001$.}
    \label{fig:kappa_bar_guidance}
\end{figure}

However, reducing regenerations could conceivably make Markov chains mixing slower. In this section, we consider a stylised example in a formal asymptotic analysis for large dimensions where we demonstrate that the minimal regeneration distribution in fact leads to $O(1)$ convergence time per unit regeneration.

Suppose that we are targeting $\pi= \mathcal N_d ({\bf 0}, I_d)$ and for a random variable $X$ on $\Rd$ set $R = |X |$. We consider minimal regeneration Restore with Brownian motion local dynamics. Notice that by simple calculations we have that
$$
    \frac{L^\dagger \pi(x)}{ \pi(x) } = \frac{|x |^2 -d }{ 2}
$$
and
$$
    {\tilde \kappa}(x) = \left(\frac{ L^\dagger \pi }{ \pi }\right)_+ = \frac{(|x |^2 -d)_+}{ 2}\ .
$$
Furthermore we have that
$$
    C^+{\mu }^+(x ) = (L^\dagger \pi )_- = \frac{{(d-|\bx |^2 )_+} \,\mathrm e^{-\| \bx \|^2/2 } }{ (2\pi )^{d/2}}.
$$
 Now given the spherical symmetry of this example, we marginalise to the radial component:
$$
    C^+ {\tilde \mu }_R(r) = \frac{r^{d-1}{( d-r^2)_+} e^{-r^2/2 } }{ (2\pi )^{d/2}}.
$$
Now to understand how this minimal regeneration Restore behaves  as $d \to \infty $, we must rescale $R$. Note that for large $d$, $R \approx N(d^{1/2}, 1/2)$. Therefore set $Y:=R-d^{1/2}$ and translate the 
regeneration distribution (and rate) to the $Y$ space.
\begin{align*}
    C^+ {\tilde \mu }_Y(y) &= \frac{(y+d^{1/2})^{d-1}{(d-(y+d^{1/2})^2 )_+}\, \mathrm e^{-(y+d^{1/2})^2/2 } }{ (2\pi )^{d/2}}, \ \ \ \hbox{ for } y>0, \\
    &\propto \left(1 + \frac{y }{ d^{1/2}} \right)^{d-1} {( -2y-y^2/d^{1/2} )_+}\,\mathrm e^{-y^2/2} \mathrm e^{-y d^{1/2}}.
\end{align*}
Now the local dynamics of $R$ are well-known to follow the Bessel$(d)$ process, so that for large $d$ we have that the local dynamics for $Y$ are
\begin{align*}
    \dif Y_t &= \dif B_t + \frac{d - 1}{2d^{1/2}} \dif t, \\
    &\approx  \dif B_t + \frac{d^{1/2} }{2} \dif t \ .
\end{align*}
Moreover the regeneration rate is
\begin{align*}
    {\tilde \kappa} &= \frac{1}{2} (\|x \|^2 -d)_+ =  \frac{1}{2} (y^2 + 2yd^{1/2})_+
    \approx y_+ d^{1/2}\ .
\end{align*}
Since the regeneration rate and local dynamics are both $O(d^{1/2})$, so further slowing down the process by a factor of $d^{1/2}$, we finally set $Z_t = Y_{d^{-1/2}t}$, and ignoring terms negligible for large $d$, the local dynamics reduces to
$$
    \dif Z_t \approx \frac{1}{2} \dif t,
$$
namely a constant velocity deterministic flow; the regeneration rate reduces to $z_+$, while the minimal regeneration density is 
$$
    {\tilde \mu }_Z(z) \propto (-z)_+e^{-z^2}\ .
$$
For a formal proof of this final statement, see \cite[Theorem~5.6.1]{Wang2020thesis}.

\begin{remark}
    Note that this non-reversible Markov process can be readily checked to permit $\mathcal{N}(0, 1/2)$ as stationary distribution as expected, by showing that its generator applied to an arbitrary $L^2$ function has mean zero with respect to $\mathcal{N}(0, 1/2)$.
\end{remark}
Convergence of $Z$ is $O(1)$ in $d$-dimensions, as is its regeneration rate. Therefore the minimal regeneration distribution does indeed achieve $O(1)$ convergence per regeneration in high-dimensional contexts.

In contrast, consider a strategy of attempting to construct a regeneration distribution which is close to $\pi $.
If this can be achieved exactly, the resulting Restore algorithm will work extremely well. Of course, however, this will not be achievable in practice. Suppose in the above example we instead manage to achieve a regeneration distribution
$$
    \mu (x) \sim \mathcal N_d({\bf 0}, (1+\epsilon) \mathbf I_d).
$$
Now without repeating the entire analysis, $R$ concentrates like $(1+\epsilon )^{1/2} d^{1/2}$, while the local dynamics propels the process it to even higher values and exponentially large regeneration rates.
Moreover, the probability of regenerating a value of $Z<1$ (where half its stationary mass lies) is exponentially small in $d$ (following from elementary estimates on the $\chi^2_d$ distribution). So regardless of how small $\epsilon $ is, convergence per regeneration is exponentially slow as a function of dimension.

Instead one might hope that using an {\em under}-dispersed regeneration distribution might be more robust. To investigate this we instead set
$$
    \mu (x) \sim \mathcal N_d({\bf 0}, (1-\epsilon) \mathbf I_d).
$$
for some $\epsilon >0$ and again propose to use Brownian motion local dynamics. In this case we see that
$$
\kappa (x) = \|x\|^2 - d + C e^{-\xi \|x \|^2 /2},
$$
for $\xi = \epsilon /(1-\epsilon )$. This $\kappa$ is minimised at 
$$
\|x\|^2 = { 2 \log (C \xi /2 )
\over \xi },
$$
with minimised value $2/\xi - d +2 \log (C\xi /2)/\xi $.
In order to make this non-negative we therefore require
$$
C\ge {2 \exp\{ \xi d /2 -1\}
\over \xi }
\ .
$$
Thus unless $\xi $ (and hence $\epsilon $) is smaller than $2/d$, $C$ will be forced to be exponentially large in $d$ and thus leading to exponential complexity of the algorithm once more.

As a result of these calculations, we see that  the strategy of directly targeting $\pi $ as the regeneration distribution lacks robustness in its convergence.

The overall conclusion of these calculations is that targeting the minimal regeneration distribution has the best chance of breaking the worst effects of the curse of dimensionality, thus justifying the strategy we take in this paper.


\section{Further Experiments}

We include some further experiments, which serve to check the correctness of the implementation: a transformed Beta distribution and a multivariate Gaussian distribution.


\subsection{Transformed Beta Distribution}

\label{appendix:transformed_beta}

We experiment with sampling from a distribution with density
\begin{equation*}
    \pi(x) = 6 e^{2x} ( e^x + 1 )^{-4}.
\end{equation*}
This distribution is derived from the transformation of a Beta distribution. Consider $X' \sim \text{Beta}(2,2)$, so that $\pi_{X'}(x') \propto x'(1-x')$ for $\ x' \in [0,1]$. Let $X$ be defined by the logit transformation of $X'$, that is $X = \log\left(\frac{X'}{1 - X'}\right)$, so that $X$ has support on the real line. The inverse of this transformation is $X' = \frac{e^X}{e^X + 1}$, so the Jacobian is $\frac{dx'}{dx} = \frac{e^x}{(e^x + 1)^2}$. Thus $X$ has density:
\begin{equation*}
    \pi(x) = 6 \frac{e^x}{e^x + 1} \left(1 - \frac{e^x}{e^x + 1} \right) \frac{e^x}{(e^x + 1)^2} = \frac{6 e^{2x}}{(e^x + 1)^4}.
\end{equation*}
We make no further transformation of the target. The partial regeneration rate is
\begin{equation*}
    \tilde{\kappa}(x) = \frac{4e^{2x} - 12e^x + 4}{2(e^x + 1)^2}.
\end{equation*}

It happens that $\tilde{\kappa}(x) < 2, \forall x \in \mathbb{R}$, which makes this distribution a useful test case, since an Adaptive Restore process may be efficiently simulated without any truncation of the regeneration rate. In addition, the first and second moments, 0 and $(\pi^2 - 6)/3$, may be computed analytically. Here, $\nK = 0.5$. Taking $\mu_0 \equiv \mathcal{N}(0, 1)$ we simulated 100 Adaptive Restore processes with Short-Term Memory, with parameters $T=10^5, \outputRate=2, a=10, \nCloud=10^4, \nForget=2$. For each sample path, the first half of the output states were burnt. 54 of the 100 estimates of the second moment were greater than the exact second moment; this indicates the processes have (approximately) converged to the correct invariant distribution.


\subsection{Multivariate Gaussian distribution}

We use an Adaptive Restore process with Short-Term Memory to sample a multivariate Gaussian distribution with dimension $d=10$, mean $0.5 \times \mathbbm{1}_{10}$, variances $(0.92, 0.94, \dots, 1.10)$ and pairwise covariances 0.5. The parameters used were $\nK = -5.05$ (3.s.f), $\minRegenRateTrunc=11.2$ (3.s.f), $\outputRate=1, a=10, \nCloud=10^4, \nForget=2$ and $T=2 \times 10^5$. The first half of the samples was removed and the remaining samples used to estimate $\E[X^T X]$. The MSE of the estimates was $5.32 \times 10^{-4}$ (3.s.f). Of the 100 estimates, 49 exceeded the true value of $\E[X^T X]$.


\section{Hierarchical Model of Pump Failure}

Figure \ref{fig:boxplots_pump_mu_gauss} displays boxplots of batch estimates of $\E[X_i]$ and $\E[X_i^2]; i=1,\dots,11$; for $\pi$ the posterior of a hierarchical model of pump failure. 100 samples paths of Adaptive Restore processes were simulated (as described in Section 5.2) and the samples from each path divided into 10 batches.

\begin{figure}
    \centering
    \begin{subfigure}{\linewidth}
        \centering
        \includegraphics[width=.75\linewidth]{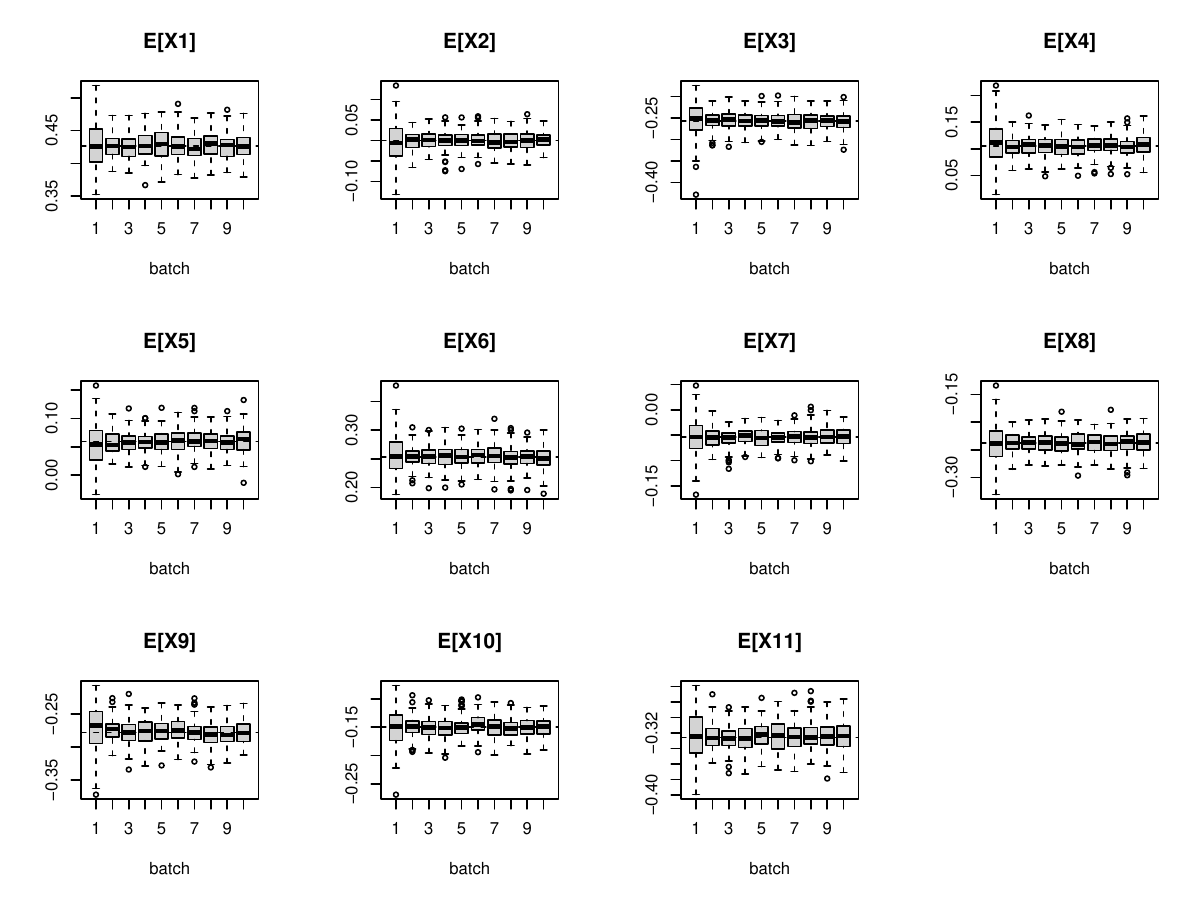}
        \caption{Estimates of $\E[X_i]$ for $i=1,\dots,11$.}
        \label{fig:mo1_boxplots_pump_mu_gauss}
    \end{subfigure}\\
    \begin{subfigure}{\linewidth}
        \centering
        \includegraphics[width=.75\linewidth]{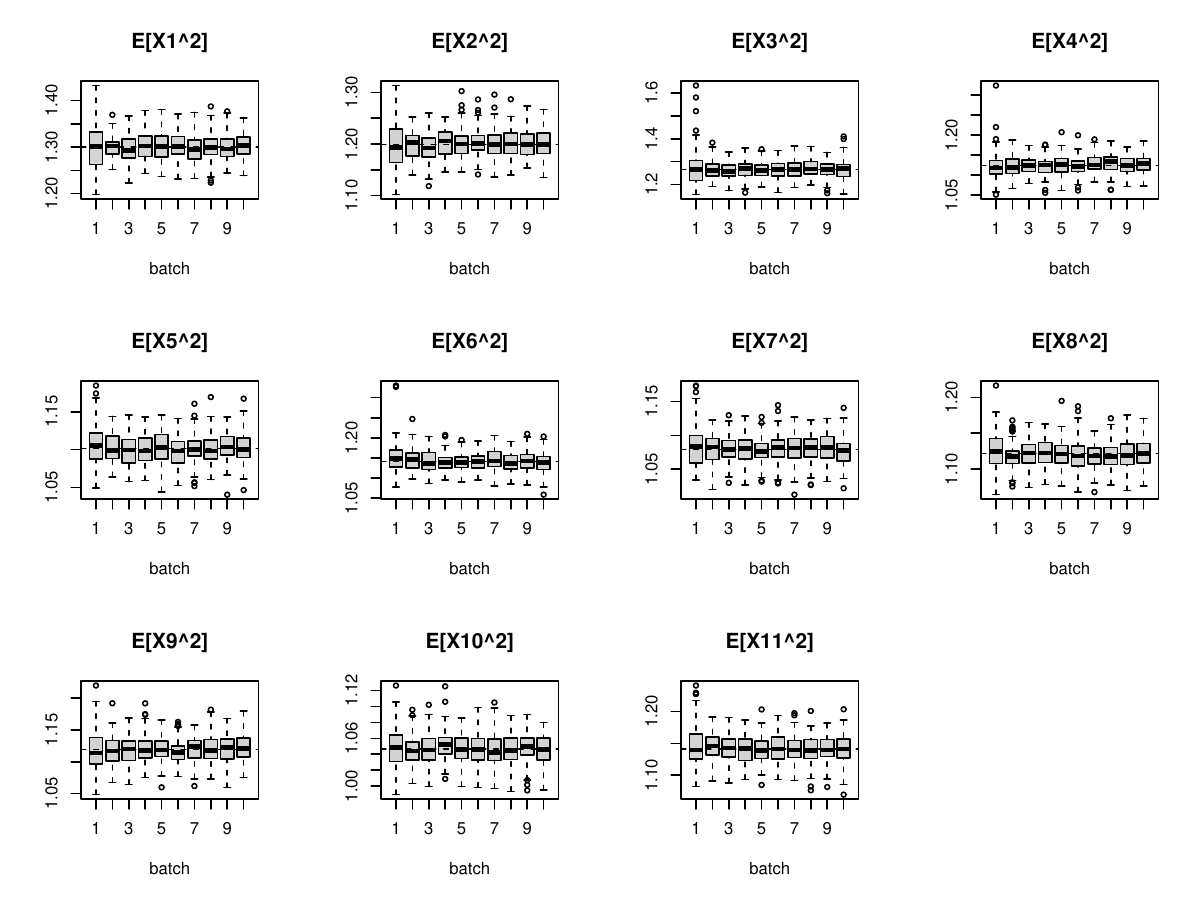}
        \caption{Estimates of $\E[X_i^2]$ for $i=1,\dots,11$.}
        \label{fig:mo2_boxplots_pump_mu_gauss}
    \end{subfigure}
    \caption{Boxplots of batch estimates of $\E[X_i]$ and $\E[X_i^2]; i=1,\dots,11$; for the posterior of the hierarchical model of pump failure; generated using 100 Adaptive Restore processes. A horizontal dashed line shows a very accurate approximation of the true value, computed using a long Markov chain.}
    \label{fig:boxplots_pump_mu_gauss}
\end{figure}

For the same posterior, Figure \ref{fig:pump_rstr_compare_mse} displays MSEs of $\E[X_i]$ and $\E[X_i^2]$; $i=1,\dots,11$; for different types of Restore processes, as described in section 5.2.

\begin{figure}
    \centering
    \begin{subfigure}{\linewidth}
        \centering
        \includegraphics[width=.6\linewidth, trim={0 0.5cm 0 1.5cm}, clip]{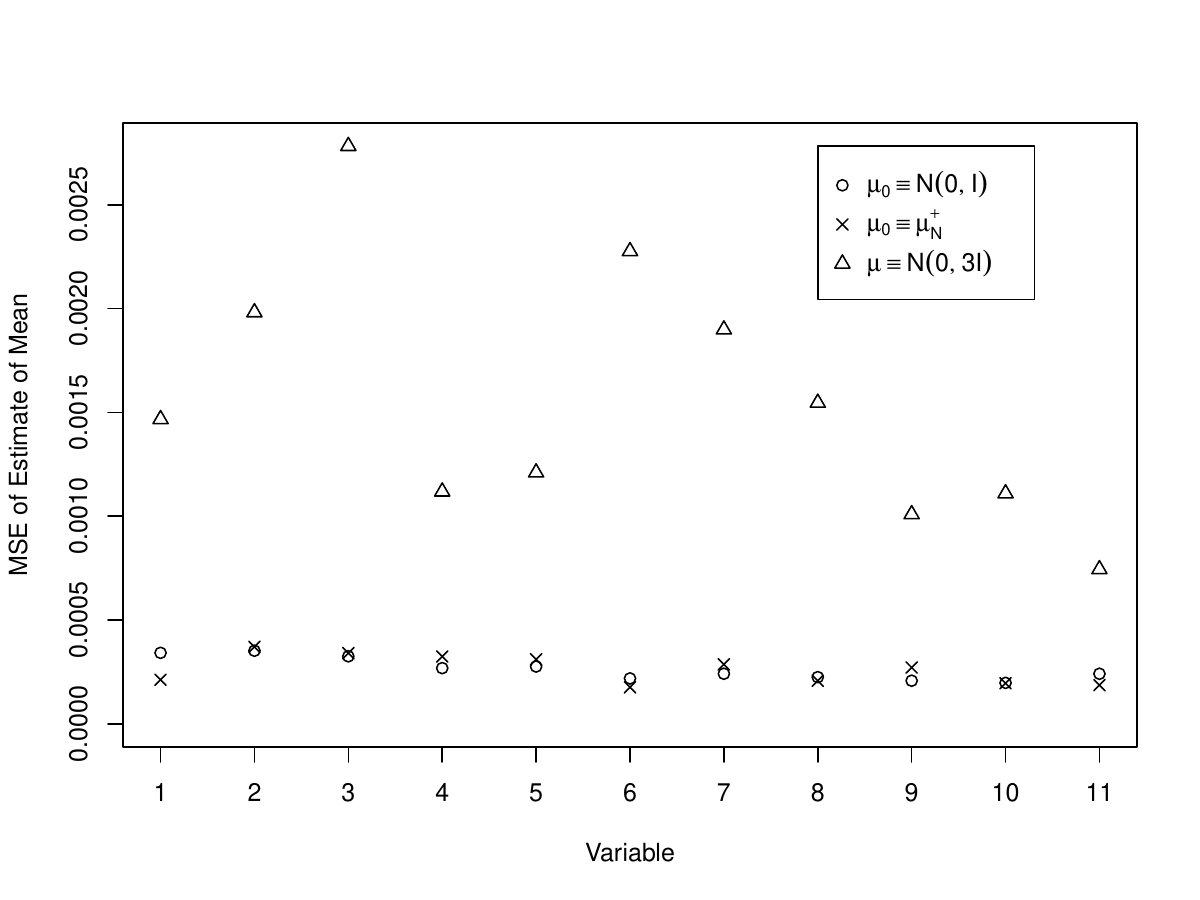}
        \caption{MSEs of estimates of $\E[X_i]; i=1,\dots,11$.}
        \label{fig:pump_rstr_compare_mse_mo1}
    \end{subfigure}\\
    \begin{subfigure}{\linewidth}
        \centering
        \includegraphics[width=.6\linewidth, trim={0 0.5cm 0 1.5cm}, clip]{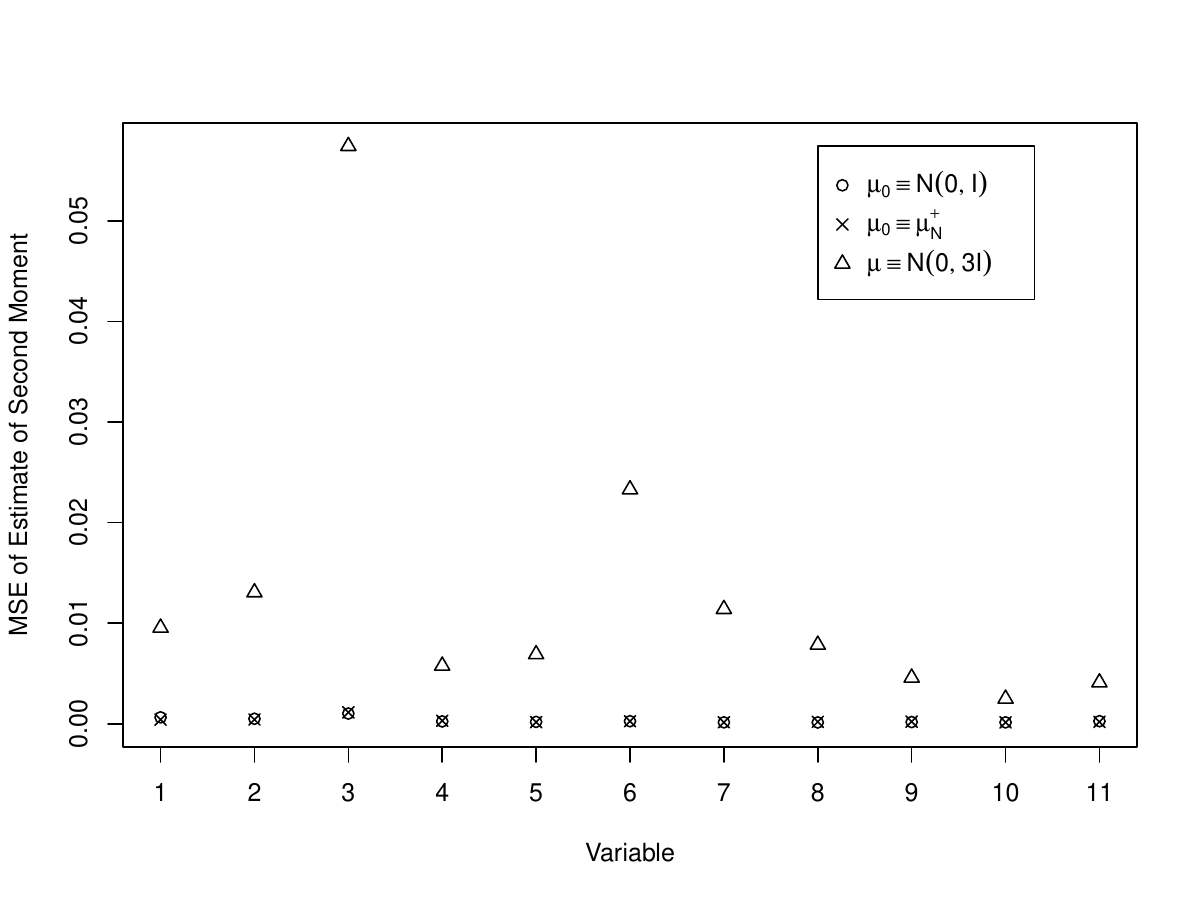}
        \caption{MSEs of estimates of $\E[X_i^2]; i=1,\dots,11$.}
        \label{fig:pump_rstr_compare_mse_mo2}
    \end{subfigure}
    \caption{MSEs of estimates of $\E[X_i]$ and $\E[X_i^2]$ for $i=1,\dots,11$ for the posterior distribution of the hierarchical model of pump failure. The MSEs were computed by simulating 100 samples paths. Circles and crosses correspond to Adaptive Restore processes with $\mu_0 \equiv \N(0,I)$ and $\mu_0 \equiv \muMin_{\N}$ respectively. Triangles correspond to Standard Restore processes with $\mu \equiv \N(0, 3I)$.}
    \label{fig:pump_rstr_compare_mse}
\end{figure}

\end{appendix}


\end{document}